\newtheorem{theorem}{Theorem}
\newtheorem{lemma}[theorem]{Lemma}
\newtheorem{cor}[theorem]{Corollary}
\newtheorem{obs}[theorem]{Observation}
\newcommand{\eps}{\varepsilon}
\newcommand{\opt}{\mbox{opt}}
\newcommand{\OPT}{\mbox{\sc OPT}}
\newcommand{\prob}{\mbox{{\bf Pr}}}
\newcommand{\expect}{\mbox{{\bf E}}}
\newcommand{\tuple}[1]{\left(#1\right)}
\newcommand{\wt}[1]{\mbox{\textbf{wt}}\tuple{#1}}
\def\poly{{\rm poly}}
\def\eps{\varepsilon}
\title{A Weakly Robust PTAS for Minimum Clique Partition\\ in Unit Disk Graphs}
\author{Imran A. Pirwani\thanks{Department of Computing Science,
University of Alberta, 
Edmonton, Alberta T6G 2E8, Canada.
Email: \texttt{pirwani@cs.ualberta.ca}. Supported by Alberta Ingenuity.
}
\and
Mohammad R. Salavatipour\thanks{Department of Computing Science,
University of Alberta, 
Edmonton, Alberta T6G 2E8, Canada.
Email: \texttt{mreza@cs.ualberta.ca}. Supported by NSERC and Alberta Ingenuity.
}
}
\begin{document}
\date{}
\bibliographystyle{plain}

\maketitle





\begin{abstract}
We consider the problem of partitioning the set of vertices of a
given unit disk graph (UDG) into a minimum number of cliques. 
The problem is NP-hard and various constant factor approximations are 
known, with the current best ratio of $3$.
Our main result is a {\em weakly robust} polynomial time approximation 
scheme (PTAS) for UDGs expressed with edge-lengths, it
either (i) computes a clique partition or (ii) gives a certificate that 
the graph is not a UDG;
for the case (i) that it computes a clique partition, we show that 
it is guaranteed to be within $(1+\eps)$ ratio of the optimum if the 
input is UDG;
however if the input is not a UDG it either computes a clique partition 
as in case (i)
with no guarantee on the quality of the clique partition or detects that 
it is not a UDG.  Noting that 
recognition of UDG's is NP-hard even if we are given edge lengths, 
our PTAS is a weakly-robust algorithm. 
Our algorithm can be transformed into an 
$O\left(\frac{\log^* n}{\eps^{O(1)}}\right)$ 
time distributed PTAS.

We consider a weighted version of the clique partition
problem on vertex weighted UDGs that
generalizes the problem.  We note some
key distinctions with the unweighted version,
where ideas useful in obtaining a PTAS breakdown.
Yet, surprisingly, it admits a 
$(2+\eps)$-approximation algorithm for the 
weighted case where the graph is expressed,
say, as an adjacency matrix. This improves on the best known 
$8$-approximation for the {\em unweighted} case for UDGs expressed in 
standard form.

{\bf Keywords:} Computational Geometry, Approximation Algorithms.
\end{abstract}



\section{Introduction} \label{sec:intro}
A standard network model for homogeneous networks
is the {\em unit disk graph} (UDG). 
A graph $G=(V,E)$ is a UDG if there is a
mapping $f: V \mapsto \mathbb{R}^2$ such that $\|f(u)-f(v)\|_2 \leq 1
\Leftrightarrow \{u,v\} \in E$;  $f(u)$\footnote{$f(.)$ is called
a realization of $G$. Note that $G$ may not come with
a realization.} models the position of the
node $u$ while the unit disk centered at $f(u)$ models the range of
radio communication.  Two nodes $u$ and $v$ are said to be able to
directly communicate if they lie in the unit disks placed at each
others' centers.  There is a vast collection of literature on
algorithmic problems studied on UDGs. See the survey \cite{BalasundaramB09}.

Clustering of a set of points is an important 
subroutine in many algorithmic and practical applications and
there are many kinds of clusterings depending upon the application.  
A typical objective in clustering is to minimize the 
number of ``groups" such that each ``group" (cluster) satisfies a set 
of criteria.  Mutual proximity of points in a cluster is
one such criterion, while points in a cluster forming a clique in
the underlying network is an extreme form of mutual proximity.
We study an optimization problem related to clustering, called
the {\em minimum clique partition} problem on this UDGs.  

{\bf Minimum clique partition on unit disk graphs (MCP):}
Given a unit disk graph, $G=(V,E)$, partition $V$ into a smallest
number of cliques.  

Despite being theoretically interesting, MCP has
been useful for other problems.  For example, 
\cite{PemmarajuPMobihoc06} shows how to use a 
small-sized clique
partition of a UDG to construct a large collection of disjoint
(almost) dominating sets.  They \cite{PemmarajuPESA07} also show
how to obtain a good quality realization of UDGs, and an 
important ingredient in their technique was to construct a small-sized 
clique partition of the graph.  It is shown \cite{LillisPPADHOCNOW07}
how to use a small-sized clique partition to obtain 
sparse spanners with bounded dilation, which
also permit guaranteed geographic routing on a related class
of graphs. \cite{PanditP09} 
employ MCP to obtain an $O(\log^* n)$ time distributed algorithm which 
is an $O(\log n)$-approximation for the facility location problem
on UDGs without geometry; they also give an $O(1)$ time distributed
$O(1)$-approximation to the facility location problem on UDGs with
geometry also using MCP. Recently, \cite{PanditPV09}
shows how to obtain a first $O(1)$ approximation to the domatic
partition problem on UDGs using MCP.

On general graphs, the clique-partition problem is equivalent to the 
minimum graph coloring on the complement graph which is 
not approximable within $n^{1-\eps}$, for any $\eps > 0$, unless P=NP
\cite{Zuckerman}. 
MCP has been studied for special graph classes. 
It is shown to be MaxSNP-hard for 
cubic graphs and NP-complete for planar cubic graphs \cite{CerioliFFMPR08};  
they also give a 
$5/4$-approximation algorithm for graphs with maximum degree at most $3$.
MCP is NP-hard for a 
subclass of UDGs, called {\em unit coin graphs}, where the interiors of
the associated disks are pairwise disjoint \cite{CerioliFFP04}.  
Good approximations, however, are possible on UDGs.  
The best known approximation is due to 
\cite{CerioliFFP04} who give a $3$-approximation  
via a partitioning the vertices into co-comparability graphs,
and solving the problem exactly on them.
They give a $2$-approximation algorithm for coin graphs.
MCP has also been studied on UDGs expressed in standard form.  
For UDGs expressed in general form 
\cite{PemmarajuPESA07} give an $8$-approximation algorithm.

\subsubsection*{Our Results and Techniques:}
In this paper we present a weakly-robust\footnote{An 
algorithm is called {\em robust} if it either computes an answer or
declares that the input is not from the restricted domain; if the
algorithm computes an answer then it is correct \cite{RaghavanS03}. 
We call our algorithm
{\em weakly-robust} in that it always computes a clique partition or declares that
the input is not from the restricted domain (i.e. not a UDG); if the input happens to
be a UDG then the answer is a $(1+\epsilon)$-approximate clique partition. Otherwise,
it still returns a clique partition but there is no guarantee on the quality of the clique partition.} 
PTAS for MCP on a given UDG.
For ease of exposition, first we prove this (in Section~\ref{sec:2a}) 
when the UDG is given with a realization, $f(.)$.
The holy-grail is a PTAS when the UDG is expressed in
standard form, say, as an adjacency matrix. 
However, falling short of proving this, we show
(in Section~\ref{sec:2b}) how to get a PTAS when 
the input UDG is expressed in standard form along with associated 
edge-lengths corresponding to some (unknown) realization.  
The algorithm is weakly-robust in the sense that
it either (i) computes a clique partition of the input graph or (ii) gives a 
certificate that the input graph is not a UDG.
If the input is indeed a UDG then the algorithm returns a clique partition
(case (i)) which is a 
$(1+\eps)$-approximation (for a given $\epsilon>0$). 
However, if the input is not a UDG, the algorithm either computes
a clique partition but with no guarantee on the quality of the solution
or returns that it is not a UDG.
Therefore, this algorithm should be seen as a weakly-robust PTAS.
The generation of a polynomial-sized certificate
which proves why the input graph is not a UDG should be seen in the
context of the negative result of \cite{AspnesGY04}
which says that even if edge lengths are given, UDG recognition is NP-hard.
We show (in Section~\ref{sec:distrib}) how this algorithm can be 
modified to run in $O(\frac{\log^* n}{\eps^{O(1)}})$ 
distributed rounds.

In Section~\ref{sec:wtd} we explore a weighted version of 
MCP where we are given
a vertex weighted UDG. In this formulation, the weight of a clique
is the weight of a heaviest vertex in it, and the weight of a clique
partition is the sum of the weights of the cliques in it.  We note
some key distinctions between the weighted and the unweighted versions
of the problem and show that the ideas that help in obtaining a PTAS
do not help in the weighted case.  Yet, surprisingly, we show that
the problem admits a $(2+\eps)$-approximation algorithm for the 
weighted case {\em using only
adjacency}.  This result should be contrasted with the unweighted
case where it is not clear as to how to remove the dependence on
the use of edge-lengths, which was crucially
exploited in deriving a PTAS.

We use $\OPT$ to denote an optimum clique partition and
$\opt$ to denote the size (or, in Section~\ref{sec:wtd}, weight) of an 
optimum clique partition. We also use $n$ and $m$ to denote the number
of points (i.e. nodes of $G=(V,E)$) and the number of edges, respectively.

\section{A Weakly-Robust PTAS for UDG Expressed with Edge-lengths}  
\label{sec:2b}

For simplicity, we first describe an
algorithm when the input is given with a geometric realization.

\subsection{A PTAS for UDGs With a Geometric Realization} 
\label{sec:2a}

We assume the input UDG is expressed 
with geometry of its points.  Using a randomly shifted grid whose cell size
is $k\times k$ (for $k=k(\eps)$) we partition the
plane. Since the diameter of the convex hull of each clique is at most
1, for large values of $k$, a fixed clique 
is cut by this grid (and therefore belongs to at most four cells) with
probability at most $\frac{2}{k}$.  Therefore, if we could efficiently compute an optimal clique 
partition in each $k \times k$ cell, then taking the union of these
cliques yields a solution whose expected size at most $(1+\eps)
\opt$. We can easily repeat this process $O(\log n )$ times to obtain a solution with
size at most $(1+\eps)\opt$ w.h.p. We call the algorithm {\tt MinCP1}, formalized below.

\begin{theorem} \label{T2}
Algorithm MinCP1 (given below) returns, in poly-time, a clique partition of size at 
most $(1+\eps)\opt$ w.h.p. 
\end{theorem}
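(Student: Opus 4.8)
The plan is to prove two things: that the per-cell subproblem can be solved exactly in polynomial time (for fixed $\eps$), and that the random shift guarantees the claimed approximation in expectation, which is then boosted to hold with high probability. I expect the per-cell exact algorithm to be the main obstacle; the shifting and boosting arguments are essentially standard.

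First I would bound the size of an optimal partition inside a single $k \times k$ cell. Tiling the cell by subsquares of side $1/\sqrt{2}$ (hence of diameter $1$) shows that the points of the cell can always be covered by at most $L := \lceil k\sqrt{2}\,\rceil^2 = O(k^2)$ cliques, since all points in a common subsquare are mutually within distance $1$. Thus the cell optimum is at most $L$, a constant once $k = k(\eps)$ is fixed. Next I would restrict attention to a polynomial-size family of \emph{canonical} cliques. Every clique $C$ has a minimum enclosing disk of radius at most $1/2$, determined by at most three points of $C$; conversely every point set contained in a radius-$1/2$ disk is a clique, since the disk has diameter $1$. Replacing each clique of any optimal cell-partition by the set of \emph{all} cell-points lying in its enclosing disk only enlarges cliques (preserving both the clique property and the cover), so there is always an optimal cover using cliques from the family $\mathcal{C}$ of the $O(n^3)$ distinct point sets cut out by radius-$1/2$ disks pinned by two or three points. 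The per-cell algorithm then enumerates, for $\ell = 1, 2, \ldots, L$, all $\binom{|\mathcal{C}|}{\ell}$ size-$\ell$ subfamilies and returns the smallest one whose union covers the cell; this takes $n^{O(L)} = n^{\poly(1/\eps)}$ time, polynomial for fixed $\eps$. Since a minimum clique \emph{cover} equals a minimum clique \emph{partition} (any point covered more than once is reassigned to a single clique, and a subset of a clique is a clique), this computes the exact cell optimum.

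For the approximation bound I would compare the algorithm's output $\mathrm{ALG}$ against a fixed global optimum $\OPT$ under the random shift. Each clique of $\OPT$ has horizontal and vertical extent at most $1$, so a random grid line cuts it with probability at most $1/k$ in each direction; a union bound gives cut probability at most $2/k$, as stated, and an uncut clique lies in a single cell while a cut one meets at most four cells (its diameter is below $k$). Restricting $\OPT$ to a cell yields a feasible partition of that cell, namely the uncut $\OPT$-cliques lying in it together with the pieces of the cut cliques meeting it; hence $\mathrm{ALG}$, being the sum of exact cell optima, satisfies $|\mathrm{ALG}| \le (\text{\# uncut cliques}) + 4\,(\text{\# cut cliques})$. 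Taking expectations, each clique of $\OPT$ contributes at most $1 + 3\,\prob[\text{cut}] \le 1 + 6/k$, so $\E[|\mathrm{ALG}|] \le (1 + 6/k)\,\opt$; choosing $k = 12/\eps$ makes this at most $(1 + \eps/2)\,\opt$.

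Finally I would boost the in-expectation bound to a high-probability one. Since $|\mathrm{ALG}| \ge \opt$ always, the nonnegative random variable $|\mathrm{ALG}| - \opt$ has expectation at most $(\eps/2)\,\opt$, so by Markov's inequality a single run satisfies $|\mathrm{ALG}| \le (1+\eps)\,\opt$ with probability at least $1/2$. Running $O(\log n)$ independent shifts and returning the best outcome drives the failure probability below $1/n^{O(1)}$, giving a $(1+\eps)$-approximate clique partition with high probability. As each run is polynomial and there are $O(\log n)$ of them, the total running time is polynomial, completing the proof.
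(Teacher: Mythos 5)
Your random-shift and boosting arguments are sound and essentially match the paper's: the paper applies Markov's inequality to the number of cut cliques rather than to $|\mathrm{ALG}|-\opt$, but the effect (success probability $1/2$ per trial, boosted over $\lceil\log n\rceil$ trials) is the same. The gap is in the step you yourself flag as the main obstacle: the per-cell exact algorithm. Your canonical-clique family rests on the claim that every clique has a minimum enclosing disk of radius at most $1/2$. This is false: three points forming an equilateral triangle of side $1$ are pairwise at distance $1$, hence a clique, but their circumradius is $1/\sqrt{3} > 1/2$, so no radius-$1/2$ disk contains them. By Jung's theorem the correct bound for a diameter-$1$ planar set is $1/\sqrt{3}$, and a disk of that radius has diameter $2/\sqrt{3}>1$, so the set of all points inside it need not be a clique. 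This two-way failure breaks your argument in both directions: the family $\mathcal{C}$ need not contain supersets of the cliques of an optimal cell partition, and enlarging a clique to all cell points in its enclosing disk can destroy the clique property. Consequently your enumeration computes a minimum cover by radius-$1/2$ disk cliques, which can be strictly larger than the minimum clique partition (the unit equilateral triangle already needs two such disks but only one clique), so the per-cell value is not exact and the $(1+\eps)$ guarantee collapses.

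This is precisely the difficulty the paper is organized around: cliques in a UDG do not coincide with point sets of small enclosing radius, and their combinatorial complexity is why the cell subproblem is nontrivial. The paper instead invokes the separation theorem of Capoyleas, Rote, and Woeginger: there exists an optimal clique partition whose cliques have pairwise non-overlapping convex hulls, hence any two are separated by a straight line, and one may take each separator to pass through two input points of one of the cliques. Since the cell optimum has $O(k^2)$ cliques and the dual structure of the convex regions is planar, $O(k^2)$ separator lines (each determined by two of the $n$ points) suffice to carve out all the cliques; enumerating all candidate sets of such lines and checking whether the resulting convex regions induce a valid clique partition takes $n^{O(k^2)}$ time. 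If you replace your disk-enumeration paragraph with this separation-based enumeration, the rest of your proof stands.
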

\begin{algorithm*}[ht]
	\caption{MinCP1$(G,\eps)$}
	\begin{algorithmic}[1]
	\STATE Let $k = \lceil\frac{16}{\eps}\rceil$.  Place a grid whose
				 squares have size $k \times k$, on the plane.  
				 Call it $\mathcal{G}_{0,0}$.	
	\STATE Pick $(a,b) \in [0, k) \times [0,k)$ uniformly at random.
	\STATE Shift $\mathcal{G}_{0,0}$ by $(a,b)$ to get $\mathcal{G}_{a,b}$
				 which is a grid shifted $a$ units to the right and $b$ units
				 above.  $\mathcal{G}_{a,b}$ induces a random partition of
				 $V$ into points in $k \times k$ regions. 
	\FORALL {$k \times k$ regions of $\mathcal{G}_{a,b}$}
		\STATE Obtain an optimal partition $C_i$ 
					 for point-set $P$ in the $k \times k$ square.
	\ENDFOR
	\STATE Let $\mathcal{C}_{a,b} = \bigcup_{i=1}^t C_i$ be the
				 union of clique partitions obtained for
				 the points in each $k \times k$ square.
	\STATE Repeat ``Step 2--6" $\lceil\log n\rceil$ times and return
				 the smallest $\mathcal{C}_{a,b}$ over the $\lceil\log n\rceil$
				 independent trials.
	\end{algorithmic}
\end{algorithm*}

We begin with a simple observation.
\begin{obs}
The diameter of the convex hull of every clique is at most $1$.
\end{obs}
In the next subsection we argue how to perform ``Step 5" of the
algorithm MinCP1 efficiently. Assuming this, we prove Theorem~\ref{T2}.  
For a random shift $\mathcal{G}_{a,b}$ and a clique $C$, we say that 
$\mathcal{G}_{a,b}$ ``cuts" $C$ if some line of $\mathcal{G}_{a,b}$
crosses an edge of $C$.  It is easy to see that:

\[\prob	\left[\substack{C \text{ is cut} \\
	\text{ by } \mathcal{G}_{a,b}}\right]  \leq 
				\prob\left[\substack{\text{a vertical or horizontal line of } \mathcal{G}_{a,b} \\
	\text{crosses an edge of } C}\right] \leq  \frac{2}{k}
\]

Thus, the expected number of cliques in an optimal partition that are ``cut" by $\mathcal{G}_{a,b}$ is at most 
$\frac{2}{k} \cdotp \opt$. So, by Markov's inequality, with probability at least $1/2$ there are no more than
$\frac{4}{k}\cdot\opt$ cliques cut by $\mathcal{G}_{a,b}$.
Therefore, if we compute an optimal solution for each of the $k\times k$ grid cells and take the union of them,
with probability at least $1/2$ we get an excess of at most $4\times\frac{4}{k}\cdotp \opt $ cliques with respect to optimum
since each clique that is ``cut'' by the grid can be counted up to four times. If we repeat this process 
for $\lceil\log n\rceil$ independent random trials, we get that 
with probability at least $1-\frac{1}{n}$ the size of the solution we obtain is at most
$\opt + \frac{16}{k} \cdotp \opt \leq  (1 + \eps) \cdotp \opt$.


\subsubsection{Optimal Clique Partition of a UDG in a $k \times k$ Square}
\label{exactgeom}
Unlike optimization problems such as  maximum
(weighted) independent set and minimum dominating set, where 
one can ``guess" only a small-sized subset of
points to obtain an optimal solution, the
combinatorial complexity of any single clique in an optimal solution
can be high. Therefore, it is unclear as to how to ``guess" 
even few cliques, each of which may be large.
A result of Capoyleas et al. \cite{CapoyleasRW91} comes to our aid;
a version of their result says that there exists an optimal clique
partition where the convex hulls of the cliques are pair-wise
non-overlapping.
This phenomenon of {\em separability} of an optimal partition, coupled
with the fact that the size of an optimal partition in a 
small region is small, allows us to circumvent the above difficulty.
The following simple lemma bounds the size 
of an optimal solution of an instance of bounded diameter.

\begin{lemma}
\label{lemma:fewcliques}
Any set of points $P$ in a $k \times k$ square has a clique partition 
of size $O(k^2)$.  
\end{lemma}
\begin{proof}
Place a grid whose cells have size $1/2 \times 1/2$.  This grid
induces a vertex partition where each block in the partition consists of the
points that share a common grid cell (and therefore form a clique).  
\end{proof}

We state a variant of a result by Capoyleas et al.
\cite{CapoyleasRW91} according to which there exists an
optimal clique partition where the convex hulls of the cliques
are non-overlapping, that is, for any pair of cliques in an optimal
partition, there is a straight line which separates them.\footnote{
We gave a proof of this theorem \cite{arxiv} before it was brought
to our attention that Capoyleas, Rote, and Woeginger \cite{CapoyleasRW91}
proved this much earlier in a different context.}

\begin{figure}[h]
\centering
\begin{tabular}{c@{\hspace{0.1\linewidth}}c}
\includegraphics[height = 0.25\linewidth]{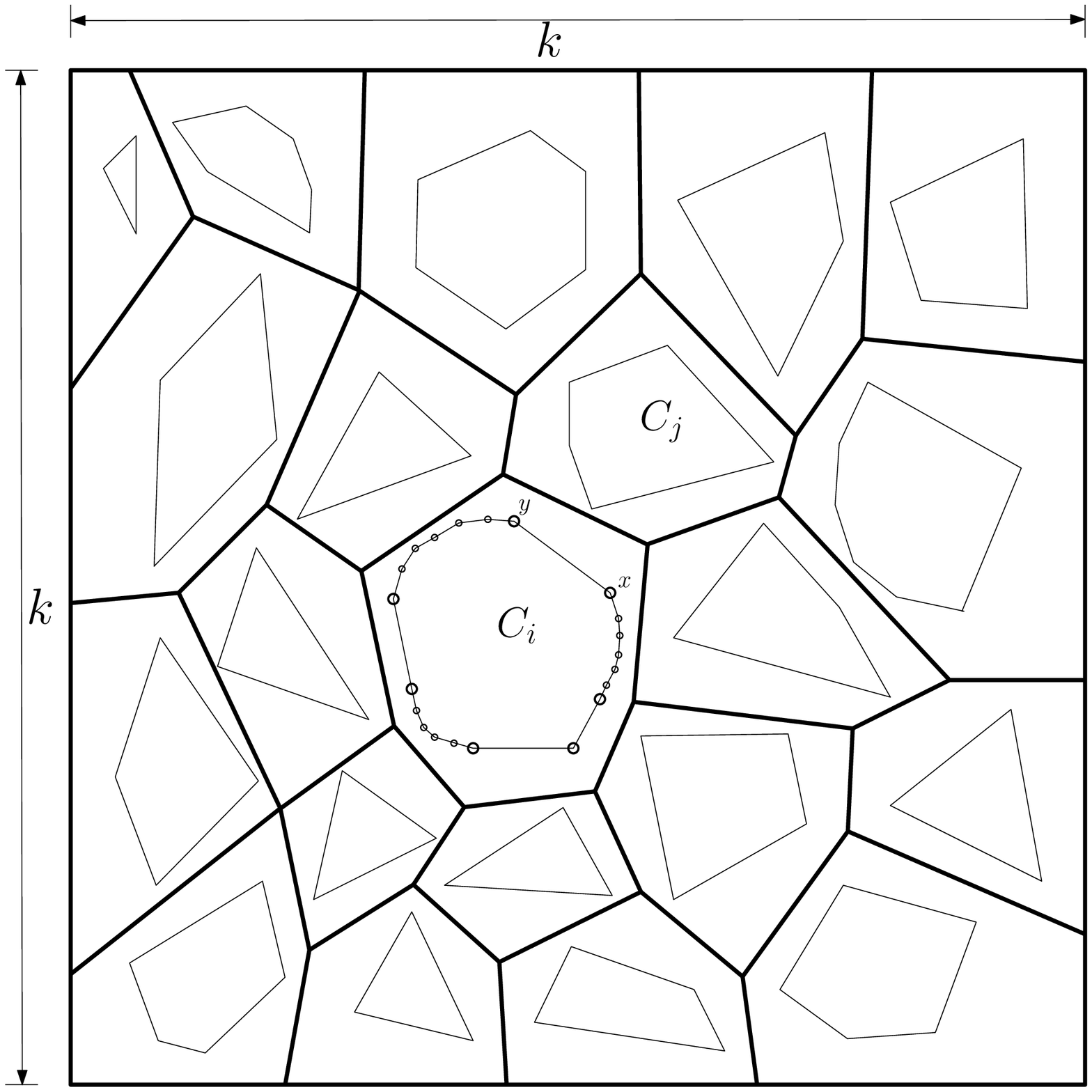} &
\includegraphics[height = 0.25\linewidth]{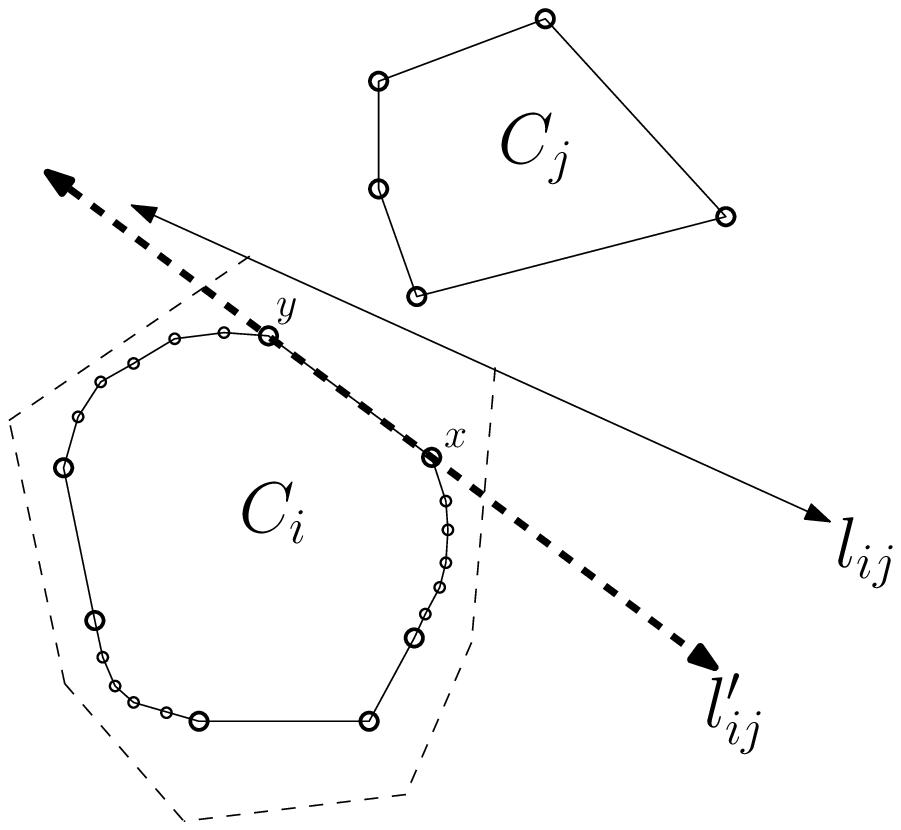} \\
(a) & (b)
\end{tabular}
\caption{(a) An optimal clique partition of UDG points in a
bounded region; each light convex shape corresponds to a clique
in the clique partition.  The heavy
line-segments represent segments of the corresponding separators. 
(b) A close-up view of $C_i$ and $C_j$.  A separator line, $l_{ij}$
is shown which separates $C_i$ and $C_j$, corresponding to the segment
in (a). Note that $l_{ij}^\prime$ is also a separator for $C_i$ and
$C_j$ and $l_{ij}^\prime$ is passing through points $x$ and $y$ in
$C_i$. }
\label{figure:convexregions}
\end{figure}

\begin{theorem}[\cite{CapoyleasRW91}]
\label{theorem:cliqueseparable}
\label{separationthm}
For a clique partition in which the convex hulls of the cliques
are pairwise non-overlapping, 
there is a straight line $l_{ij}$ that separates a pair of cliques
$C_i$, $C_j$ such that all vertices of $C_i$
are on one side of $l_{ij}$, and all the vertices of $C_j$ are on the 
other side of $l_{ij}$. 
(see Figure~\ref{figure:convexregions}). Furthermore, this partition can
be computed in poly-time.
\end{theorem}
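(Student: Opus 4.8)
The plan is to treat the statement as two claims --- the separation property, and the existence plus poly-time computability of an optimal clique partition whose hulls are pairwise non-overlapping --- and to attack them separately.

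The separation property is immediate from convexity. Given a clique partition whose hulls are pairwise non-overlapping, $\conv{C_i}$ and $\conv{C_j}$ are compact convex planar sets with disjoint interiors, so the separating hyperplane theorem yields a line $l_{ij}$ with $\conv{C_i}$ in one closed halfplane and $\conv{C_j}$ in the other. Since every vertex of a clique lies in its own hull, $l_{ij}$ places all of $C_i$ on one side and all of $C_j$ on the other. Such a line is found in polynomial time by solving a two-variable linear feasibility program, or by taking a supporting line at the closest pair of the two hulls.

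For the structural claim I would use an uncrossing (exchange) argument in the spirit of the original proof. Among all optimal (minimum-cardinality) clique partitions, fix one, say $\mathcal{C}$, that in addition minimizes the total hull perimeter $\sum_i \perim{C_i}$; I claim its hulls are pairwise non-overlapping. Suppose not: then some $\conv{C_i}$ and $\conv{C_j}$ overlap, so a hull edge $a_1a_2$ of $C_i$ crosses a hull edge $b_1b_2$ of $C_j$, and $a_1,b_1,a_2,b_2$ sit in convex position with the two crossing segments as diagonals. The aim is to re-partition $C_i\cup C_j$ into two cliques whose hulls no longer overlap and whose total perimeter is strictly smaller, contradicting the choice of $\mathcal{C}$ (the number of cliques stays two, so optimality is preserved). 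The quadrilateral inequality --- the sum of the diagonals of a convex quadrilateral strictly exceeds the sum of either pair of opposite sides --- is what drives the perimeter down.

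The step I expect to be the main obstacle is guaranteeing that the re-partition yields \emph{cliques}, i.e. parts of diameter at most $1$; this constraint is absent in the unconstrained $k$-cluster setting and is exactly where the unit-diameter structure of UDG cliques must enter. The naive symmetric swap that trades the crossing diagonals $a_1a_2,b_1b_2$ for the cheaper pair of opposite sides does lower the perimeter, but it need not respect the diameter bound: one can exhibit a crossing of two two-point cliques in which the cheaper opposite-side pair already contains an edge strictly longer than $1$. The resolution is to choose a clique-preserving re-partition rather than a blind swap --- for instance, reassigning the points of $C_i\cup C_j$ according to a separating line through the crossing region and using $|a_1a_2|\le 1$, $|b_1b_2|\le 1$ together with the angle relations at the crossing point to certify that each resulting part still has diameter at most $1$ (asymmetric moves of single points may be required). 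Once a perimeter-decreasing, clique-preserving exchange is established, termination is clear since the perimeter strictly decreases over finitely many partitions; to obtain a polynomial running time I would bound the number of exchanges by a combinatorial progress measure (such as the number of crossing hull-edge pairs), or alternatively compute the minimum-perimeter optimal partition directly by dynamic programming over its non-crossing structure, after which each separator $l_{ij}$ is produced as in the first part.
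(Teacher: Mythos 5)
You should first note what you are being compared against: the paper does not prove this theorem at all — it quotes it from Capoyleas, Rote, and Woeginger \cite{CapoyleasRW91} (see the paper's footnote), so your proposal is effectively an attempt to reconstruct that proof. Your high-level strategy does match theirs: the separating-line claim follows from convexity exactly as you say, and the existence of a non-overlapping optimal partition is obtained by taking, among all optimal clique partitions, one of minimum total hull perimeter and uncrossing. But there is a genuine gap at precisely the step you yourself flag as ``the main obstacle.'' The whole argument rests on the lemma: if $C_i$ and $C_j$ each have diameter at most $1$ and their convex hulls overlap, then $C_i\cup C_j$ can be re-partitioned into two sets that (a) still each have diameter at most $1$, (b) have non-overlapping hulls, and (c) have strictly smaller total hull perimeter. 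You correctly observe that the naive swap of crossing diagonals for opposite sides violates (a), but your proposed fix — ``reassigning the points\ldots according to a separating line through the crossing region,'' invoking unspecified ``angle relations,'' with ``asymmetric moves of single points'' possibly required — is an assertion, not an argument. That lemma, with its case analysis, is the entire technical content of the theorem and occupies most of the proof in \cite{CapoyleasRW91}; without it your existence claim is assumed rather than proven.

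The poly-time claim is also not established by your route, and in the generality you state it the route cannot work. The perimeter-exchange argument is purely existential: it must \emph{start} from an optimal partition (the minimum-perimeter one among optimal partitions), and an uncrossing step never changes the number of cliques, so iterating exchanges from an arbitrary partition merely uncrosses it — it does not make it optimal. Moreover, minimum clique partition on (realized) UDGs is NP-hard, so no exchange-counting or dynamic-programming procedure can compute an optimal separable partition of an arbitrary point set in polynomial time unless P $=$ NP. The ``computed in poly-time'' clause must be read in the setting where the paper applies it: the points lie in a $k\times k$ cell, so by Lemma~\ref{lemma:fewcliques} an optimal partition has $O(k^2)$ cliques; each pair of cliques in a separable optimal partition is separated by a line through two input points, and only $O(k^2)$ distinct such lines are needed (the dual graph of the hull regions is planar); the algorithm then enumerates all $n^{O(k^2)}$ candidate sets of separator lines and checks whether the induced convex regions form a clique partition. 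This is polynomial only because $k$ is a constant — both the bounded-size hypothesis and the enumeration argument are missing from your proposal.
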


The general structure of the algorithm for computing optimal solution of a $k\times k$ cell
is as follows. 
In order to reduce the search space for separator lines, 
one can find a characterization of the separator lines with some extra properties. 
Let $C_i, C_j$ be a pair of cliques each having at least two points. 
Let $L_{ij}$ be the (infinite) set
of distinct separator lines.  Since $C_i$ and $C_j$ are convex, there
exists at least one line in $L_{ij}$ that goes through two points
of $C_i$ (or $C_j$) (see Figure ~\ref{figure:convexregions}(b)). 
Therefore, given two cliques $C_i$ and $C_j$ in a clique partition (with pairwise non-overlapping parts) there is a separator line
$l_{ij}$ that goes through two vertices of one of them, say $u,v\in C_i$ such that all the vertices
of $C_j$ are on one side of this line and all the vertices of $C_i$ are on the other side or on the line.
Since there are $O(k^2)$ cliques in an optimal partition of $k\times k$ cell, there are $O(k^4)$ 
pairs of cliques in the partition and their convex hulls are pairwise 
non-overlapping. In fact, a more careful analysis shows that the dual graph of the regions is planar (see
Figure \ref{figure:convexregions}(a));
thus there are $O(k^2)$ distinct straight lines, 
each of which separate a pair of cliques in our optimal 
solution.  For every clique $C_i$, the separator
lines $l_{ij}$ (for all values of $j$) define a convex region that contains
clique $C_i$. So once we guess this set of $O(k^2)$ lines, these 
convex regions define the cliques. 
We will try all possible (non-equivalent) sets of $O(k^2)$
separator lines and check if each of the convex regions indeed defines a 
clique and if we obtain a clique partition. 
This can be performed in $O(n^{k^2})$ time (see \cite{CapoyleasRW91} for more details).


\subsection{A PTAS for UDGs With Edge-Lengths Only}
We weaken our assumption on having access to geometry; 
we assume only edge-lengths are known
with respect to a feasible (unknown) realization of the UDG.
We prove that,
\begin{theorem} \label{T4}
Given a graph $G$ with associated (rational) edge-lengths and $\eps > 0$, 
there is a polynomial time algorithm which either computes a clique partition of $G$ 
or gives a certificate that $G$ is not a UDG. If $G$ is a UDG, the size of the clique partition
computed is a $(1+\eps)$-approximation of the optimum clique partition 
(but there is no guarantee on the size of the clique partition if the input graph is not UDG).
\end{theorem}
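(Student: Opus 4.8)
The plan is to reduce the edge-length-only setting to the geometric setting of Section~\ref{sec:2a} in two stages: (a) decompose the graph into pieces of small geometric diameter using only combinatorial data, and (b) reconstruct a planar realization of each piece from its edge-lengths so that the separator-based exact algorithm of Section~\ref{exactgeom} can be run on it. The enabling fact is that in any UDG the Euclidean distance between two vertices is at most their hop-distance in $G$, since a path of $h$ edges has Euclidean length at most $h$; hence a ball of graph-radius $O(k)$ has Euclidean diameter $O(k)$ and, by Lemma~\ref{lemma:fewcliques}, contains only $O(k^2)$ cliques of an optimal partition. Moreover every clique, having all pairs adjacent, has graph-diameter $1$. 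I would therefore replace the shifted grid of Theorem~\ref{T2} by a randomized bounded-radius clustering of the shortest-path metric with radius parameter $\Theta(k)$, for which the probability that a fixed graph-diameter-$1$ set is split across clusters is $O(1/k)$; arguing as in the proof of Theorem~\ref{T2}, the expected number of optimal cliques that are cut is then $O(1/k)\cdot\opt \le \eps\cdot\opt$, and independent repetition makes this hold w.h.p.

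Next, on each cluster $H$ produced by this decomposition I would attempt to compute a planar realization from the given edge-lengths. Because $H$ is an induced subgraph of $G$, if $G$ is a UDG then so is $H$, and a realization exists. To find one I would exploit that a bounded cluster has only $O(k^2)$ cliques: first extract any clique partition $Q_1,\dots,Q_t$ of $H$ with $t=O(k^2)$, realize each $Q_i$ exactly from its complete internal distance matrix via the rank-$2$ positive-semidefinite Gram-matrix test, and then place these $O(k^2)$ rigid clique-shapes relative to one another subject to the inter-clique edge-lengths and to the non-edge constraints forcing the remaining distances to exceed $1$. Since the number of rigid pieces, and hence the number of translational and rotational degrees of freedom, is constant, this is a constant-dimensional feasibility problem that I expect to solve by enumerating the $O(n^{O(k^2)})$ combinatorially distinct placements. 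With a realization in hand I run the per-cell subroutine of Section~\ref{exactgeom}, guessing the $O(k^2)$ separator lines to obtain an optimal partition of $H$, and finally take the union over all clusters.

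The weak robustness and the certificate are produced precisely at the realization step. If $G$ is a UDG, every cluster is realizable, so the procedure succeeds on each cluster and the $(1+\eps)\opt$ guarantee follows from the cut analysis above. If $G$ is not a UDG, then for some cluster the distance constraints may be planar-infeasible; when the procedure isolates a locally infeasible sub-configuration (a bounded set of vertices whose edge-lengths together with the non-edge inequalities admit no planar realization) I output that set as a polynomial-size certificate that $G$ is not a UDG. In the remaining case the procedure happens to realize every cluster and the algorithm returns the resulting clique partition with no quality guarantee. This last possibility is unavoidable, and indeed expected, in light of the NP-hardness of UDG recognition with edge-lengths \cite{AspnesGY04}, which rules out any procedure that always either certifies a near-optimal solution or refutes the UDG property.

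The main obstacle I anticipate is stage (b): turning the edge-length data of a bounded-diameter cluster into an actual planar realization in polynomial time, completely for realizable instances—so that no false certificate is ever emitted when $G$ really is a UDG—while still terminating in polynomial time on non-realizable ones. The delicate point is the apparent circularity that realizing the cluster seems to require a clique partition, which is exactly what we are trying to compute. I would break it by extracting the $O(k^2)$-size clique partition combinatorially before solving the constant-dimensional placement problem, but keeping the placement enumeration polynomially bounded and ensuring its output genuinely satisfies every edge-length equality and every non-edge inequality is where the technical work concentrates.
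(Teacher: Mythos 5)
Your proposal founders exactly where you say it does: stage (b). Reconstructing a planar realization of each cluster is not a technicality you can defer --- it is the whole difficulty of the edge-length setting, and the paper's proof is designed precisely to \emph{avoid} it. The paper never computes coordinates. Its key device is Lemma~\ref{quad-lem}: for four mutually adjacent vertices whose six pairwise distances are known, one can decide, by comparing polynomials in the edge-lengths derived from Heron's formula (no square roots, no embedding), whether two of the points lie on the same side of the line through the other two. With this side-test, the separation theorem of \cite{CapoyleasRW91} can be executed combinatorially inside each ball: guess $O(1)$ representatives of the large cliques and, for each pair of cliques, a separator line through two vertices; classify every remaining vertex to the positive or negative side using Lemma~\ref{quad-lem}; accept if the resulting sets form a clique partition, and if all $n^{O(\ell^2)}$ guesses fail, output the ball itself as the non-UDG certificate. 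Your alternative --- rigidify $O(k^2)$ cliques and solve a constant-dimensional placement problem --- could perhaps be pushed through with the existential theory of the reals (constantly many variables, polynomially many constraints, algebraic sample points), but nothing of the sort is established in your sketch: you do not say how the prior combinatorial clique partition is obtained robustly, how the ``combinatorially distinct placements'' are enumerated, how algebraic coordinates are then fed to the geometric subroutine, or how a \emph{small} infeasible sub-configuration is isolated for the certificate (the paper simply certifies with the whole ball). As written, the proof has a hole at its load-bearing step, and you acknowledge as much.

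Your stage (a) also differs from the paper and carries its own unproven claim. You assert a randomized hop-metric clustering in which a hop-diameter-$1$ set is cut with probability $O(1/k)$; note that UDG hop metrics are not doubling (two chains can pass within Euclidean distance slightly more than $1$ of each other yet be hop-far apart), so this does not follow from off-the-shelf padded-decomposition theorems, and a CKR-style analysis over net centers would give $O(\log k/k)$ and, unlike the grid where a clique meets at most four cells, would additionally require bounding the number of pieces into which one cut clique is shattered. The paper sidesteps all of this with a deterministic Nieberg-style ball growing (Algorithm MinCP2): grow $B_r(v)$ until $|C_{r+2}(v)|\leq(1+\eps)|C_r(v)|$, put $C_{r+2}(v)$ in the output, and delete $B_{r+2}(v)$. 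The inner balls are pairwise non-adjacent (Lemma~\ref{lemma:disjointballs}), so their optimal partitions are disjoint sub-collections of $\OPT$ and give $\opt\geq\sum_i|C_{r^*_i}(v_i)|$ (Lemma~\ref{lemma:optlowerbound}); the stopping rule then yields the $(1+\eps)$ factor, and the stopping rule combined with the $O(r^2)$ bound of Lemma~\ref{lemma:fewcliques} forces $r^*_i\in\tilde{O}(1/\eps)$ (Lemma~\ref{lemma:diambound}), which is also what makes ``$r>\beta$'' a legitimate non-UDG certificate. So both halves of your plan replace a clean deterministic argument with asserted machinery, and the second half leaves the theorem unproved.
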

The high level idea of the algorithm is as follows.
As in the geometric case, we first decompose the graph into 
bounded diameter regions
and show that if we can compute the optimum clique partition of each 
region then the union
of these clique partitions is within $(1+\eps)$ fraction of the optimum.
There are two main difficulties here for which we need new ideas.
The first major difference is that, we cannot use the random shift
argument as in the geometric case.
To overcome this, we use a ball growing technique
that yields bounded diameter regions. This is inspired by
\cite{NiebergHK08} who give
local PTAS for weighted independent set, and minimum dominating
set for UDGs without geometry.  
The second major difference is that, even if we have the set of points belonging to a bounded
region (a ball) it is unclear as to how to use the separation theorem to obtain an optimal solution
for this instance. Note that we are not guaranteed to have
a UDG as input. We show that we can either compute a clique partition for each subgraph induced by a ball,
or give a certificate that the subgraph is not UDG.
If it is a UDG, then our clique partition is optimal but if it
is not a UDG there is no guarantee on its size.

Let $B_r(v) = \{u : d(u,v) \leq r\}$, where by $d(u,v)$ we mean the number 
of edges on a shortest path from $u$ to $v$.
So, $B_r(v)$ can be computed using
a breadth-first search (BFS) tree rooted at $v$.
We describe our decomposition algorithm which partitions 
the graph into bounded diameter subgraphs in Algorithm 2.
We will describe a procedure, called \mbox{OPT-CP} which, given a 
graph induced by the vertices of $B_r(v)$ 
and a parameter $\ell=\poly(r)$, runs in time 
$|B_r(v)|^{O(\ell^2)} \leq n^{O(\ell^2)}$ and 
either produces a certificate that $B_r(v)$ is not a
UDG or computes a clique partition of $B_r(v)$; this clique partition 
is optimum if $B_r(v)$ is a UDG. 
We only call this procedure for ``small" values of $r$.

\begin{algorithm*}[h]
	\caption{MinCP2$(G,\eps)$}
	\begin{algorithmic}[1]
		\STATE {$\mathcal{C} \leftarrow \emptyset$; 
						$\beta \leftarrow \lceil c_0
						\frac{1}{\eps}\log\frac{1}{\eps} \rceil$;
						$\ell \leftarrow c_1 \beta^2$.}

		\COMMENT {where $c_0$ is the constant in Lemma~\ref{lemma:diambound},
						and $c_1$ is the constant in inequality (\ref{c1-const}).}
		\WHILE {$V \neq \emptyset$}
			\STATE Pick an arbitrary vertex $v\in V$
			\STATE $r \leftarrow 0$

			\COMMENT {Let $C_{r}(v)$ denote a clique partition 
								of $B_r(v)$ computed by calling \mbox{OPT-CP}}
			\WHILE {$|C_{r+2}(v)| > (1+\eps) \cdotp |C_{r}(v)|$}
				\STATE $r \leftarrow r + 1$
				\IF {($r > \beta$) or (\mbox{OPT-CP}$(B_r(v))$ returns 
						``not a UDG'')}
					\RETURN {``$G$ is not a UDG" and produce $B_r(v)$ as the
									 certificate}
				\ENDIF
			\ENDWHILE
			\STATE {$\mathcal{C} \leftarrow \mathcal{C} \cup C_{r+2}(v)$}
			\STATE $V \leftarrow V \setminus B_{r+2}(v)$
		\ENDWHILE
	\RETURN {$\mathcal{C}$ as our clique partition}
	\end{algorithmic}
\end{algorithm*}

Clearly, if the algorithm returns $\mathcal{C}$ on
``Step 11", it is a clique partition.
Let us assume that each ball $B_r(v)$ we consider induces a UDG and 
that the procedure \mbox{OPT-CP}
returns an optimal clique partition $C_r(v)$ for ball $B_r(v)$.
We show that in this case $|\mathcal{C}|\leq(1+\eps)\opt$.
We also show that for any iteration of the outer ``while--loop",
``Step 5" of MinCP2 is executed in time polynomial in $n$, by
using edge-lengths instead of Euclidean coordinates. 

For an iteration $i$ of the outer loop, 
let $v_i$ be the vertex chosen 
in ``Step 3'' and let $r^*_i$ be the value
of $r$ for which the ``while-loop" on ``Step 5"
terminates, that is, $|C_{r^*_i+2}(v_i)|\leq (1+\eps)\cdot|C_{r^*_i}(v_i)|$.
Let $k$ be the maximum number of iterations of the outer loop.
The following lemmas show that two distinct balls grown around vertices
are far from each other, that the union of the optimal solutions to
the balls form a lower-bound on the cost of the entire instance, and
that the cost of $\mathcal{C}$ and $\opt$ is within
a factor $(1+\eps)$ of $\opt$. 

\begin{lemma}
For every $i\not=j$, every pair $v\in B_{r^*_i}(v_i)$ and $u\in B_{r^*_j}(v_j)$
are non-adjacent.
\label{lemma:disjointballs}
\end{lemma}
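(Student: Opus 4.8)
The plan is to assume without loss of generality that $i < j$ and to exploit the two-hop ``buffer'' built into the removal step: ``Step 10'' deletes $B_{r+2}(v)$, not $B_r(v)$. The only real subtlety is bookkeeping about \emph{which} graph the balls are measured in. Each ball $B_r(v_i)$ is a BFS ball in the graph $G_i := G[V_i]$ that remains at the start of outer-loop iteration $i$ (after the balls of all earlier iterations have been deleted), so I would first fix this notation and record two elementary facts. First, deleting vertices never decreases graph distance, and the induced subgraph $G_i$ retains every edge of $G$ whose endpoints both lie in $V_i$. Second, $r^*_i$ is the radius at which the inner while-loop halts, and iteration $i$ deletes the strictly larger ball $B_{r^*_i+2}(v_i)$.

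Next I would locate $u$ and $v$ relative to iteration $i$. Since $v \in B_{r^*_i}(v_i) \subseteq B_{r^*_i+2}(v_i)$, the vertex $v$ is deleted during iteration $i$; in particular $v \in V_i$ and $d(v,v_i) \le r^*_i$ in $G_i$. On the other hand $u \in B_{r^*_j}(v_j) \subseteq V_j \subseteq V_{i+1}$, so $u$ survives past iteration $i$; thus $u \in V_i$ as well, yet $u \notin B_{r^*_i+2}(v_i)$, which by the definition of the BFS ball in $G_i$ means $d(u,v_i) > r^*_i + 2$.

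Now suppose for contradiction that $u$ and $v$ are adjacent in $G$. Because both $u$ and $v$ belong to $V_i$, this edge is present in the induced graph $G_i$, so $d(u,v) = 1$ there. The triangle inequality for graph distance then gives
\[
 d(u,v_i) \le d(u,v) + d(v,v_i) \le 1 + r^*_i < r^*_i + 2,
\]
which contradicts $d(u,v_i) > r^*_i + 2$. Hence $u$ and $v$ are non-adjacent, proving the lemma.

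I expect the only care to be needed in the first two paragraphs: one must be explicit that the balls are taken in the shrinking graph $G_i$ and that both endpoints of the putative edge were still present at iteration $i$, since otherwise the triangle inequality in $G_i$ would not apply to them. Given that, the buffer makes the contradiction immediate; in fact a one-hop buffer would already suffice for this separation, the extra hop being reserved for the approximation-quality argument in the subsequent lemmas.
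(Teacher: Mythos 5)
Your proof is correct and is essentially the paper's own argument: the paper phrases the contradiction via BFS levels (a vertex of $B_{r^*_j}(v_j)$ must sit at level at least $r^*_i+3$ in the BFS tree rooted at $v_i$, and no edge can skip two levels), which is exactly your triangle-inequality step in different clothing. Your extra bookkeeping about distances being taken in the residual graph $G_i$ is a point the paper glosses over, but it refines rather than changes the argument.
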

\begin{proof}
Without loss of generality, let  $i<j$.
Therefore, every vertex in $B_{r^*_j}(v_j)$ is at a level larger than 
$r^*_i+2$ of the BFS tree rooted at $v_i$,
otherwise it would have been part of the ball $B_{r^*_i+2}(v_i)$ thus 
removed from $V$.
Note that in a BFS tree rooted at $v_i$, there cannot be an edge 
between a level $r$ and $r'$ with $r'\geq r+2$.
Thus there cannot be an edge between a node in $v\in B_{r^*_i}(v_i)$, 
which has level at most $r^*_i$ and
a node $u\in B_{r^*_j}(v_j)$, which would been at a level at least 
$r^*_i+3$ in the BFS tree rooted at $v_i$.
\end{proof}

Next, we derive a lower-bound on $\opt$. 

\begin{lemma}
$\opt \geq \displaystyle\sum_{i=1}^k |C_{r^*_i}(v_i)|$
\label{lemma:optlowerbound}
\end{lemma}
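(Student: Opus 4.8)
The plan is to show that the optimal clique partitions of the disjoint balls $B_{r^*_i}(v_i)$ can be combined into a single feasible clique partition of the \emph{entire} graph $G$, and that this combined partition has exactly $\sum_{i=1}^k |C_{r^*_i}(v_i)|$ cliques. Since $\opt$ is the minimum size of any clique partition of $G$, this immediately yields the desired lower bound $\opt \geq \sum_{i=1}^k |C_{r^*_i}(v_i)|$.

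The key structural fact I would invoke is Lemma~\ref{lemma:disjointballs}: for $i \neq j$, no vertex of $B_{r^*_i}(v_i)$ is adjacent to any vertex of $B_{r^*_j}(v_j)$. In particular the inner balls $B_{r^*_i}(v_i)$ are pairwise vertex-disjoint (any shared vertex would be adjacent to itself, or more carefully, a common vertex would witness adjacency across the two balls, contradicting the lemma). First I would establish that the collection $\{B_{r^*_i}(v_i)\}_{i=1}^k$ consists of pairwise disjoint vertex sets whose induced subgraphs have \emph{no edges between them}. This means any clique of $G$ that is entirely contained in $\bigcup_i B_{r^*_i}(v_i)$ must lie wholly within a single ball, because a clique is connected and there are no cross-ball edges.

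Next I would take the union $\bigcup_{i=1}^k C_{r^*_i}(v_i)$ of the optimal clique partitions of the individual balls. Each $C_{r^*_i}(v_i)$ partitions the vertices of $B_{r^*_i}(v_i)$ into cliques, and since the balls are disjoint, this union is a collection of pairwise disjoint cliques covering exactly $\bigcup_i B_{r^*_i}(v_i)$. Because no two cliques from different balls can be merged (there are no edges between distinct balls), the number of cliques in this union is precisely $\sum_{i=1}^k |C_{r^*_i}(v_i)|$, with no possibility of consolidation. This gives a clique partition of the induced subgraph $G\left[\bigcup_i B_{r^*_i}(v_i)\right]$ using $\sum_{i=1}^k |C_{r^*_i}(v_i)|$ cliques.

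To finish, I would argue that restricting any optimal clique partition $\OPT$ of $G$ to the vertex set $\bigcup_i B_{r^*_i}(v_i)$ yields a valid clique partition of that induced subgraph using at most $\opt$ cliques (each clique of $\OPT$ restricts to a clique, and only nonempty restrictions are counted). Since $\sum_{i=1}^k |C_{r^*_i}(v_i)|$ is the size of an \emph{optimal} clique partition of this induced subgraph---optimal because each $C_{r^*_i}(v_i)$ is optimal on its ball and the no-cross-edge property forbids any improvement via merging---we conclude $\opt \geq \sum_{i=1}^k |C_{r^*_i}(v_i)|$. The one subtle point, which I expect to be the main obstacle, is justifying that the per-ball optima cannot be beaten by a global partition that exploits cross-ball structure; this is exactly where Lemma~\ref{lemma:disjointballs} is essential, since it rules out any clique spanning two balls and thereby decouples the global optimization into independent per-ball problems.
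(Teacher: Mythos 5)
Your argument is, in substance, correct and follows essentially the same route as the paper: both rest on Lemma~\ref{lemma:disjointballs} (no edges between distinct balls, hence no clique of an optimal partition can span two balls) combined with the per-ball optimality of $C_{r^*_i}(v_i)$. The paper organizes this by grouping the cliques of an optimal partition $\OPT$ into disjoint sub-collections $\OPT_i$ (those intersecting $B_{r^*_i}(v_i)$) and bounding $|\OPT_i| \geq |C_{r^*_i}(v_i)|$; your restriction-to-the-induced-subgraph formulation is the same argument in slightly different packaging, so it buys nothing extra but loses nothing either.

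One thing you must fix in the write-up: your opening paragraph states the plan backwards. Exhibiting a feasible clique partition of $G$ of size $\sum_{i=1}^k |C_{r^*_i}(v_i)|$ would prove $\opt \leq \sum_{i=1}^k |C_{r^*_i}(v_i)|$, not the desired inequality $\opt \geq \sum_{i=1}^k |C_{r^*_i}(v_i)|$; a lower bound on $\opt$ can never come from exhibiting a single feasible solution. Moreover, the union $\bigcup_i B_{r^*_i}(v_i)$ of the inner balls does not cover $V$ in the first place: the algorithm removes the larger balls $B_{r^*_i+2}(v_i)$, so vertices of $B_{r^*_i+2}(v_i) \setminus B_{r^*_i}(v_i)$ lie outside every inner ball, and the combined partition is not a partition of the entire graph. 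Your later paragraphs quietly abandon this plan and instead restrict $\OPT$ to $G\left[\bigcup_i B_{r^*_i}(v_i)\right]$, compare against the optimum of that induced subgraph, and use the no-cross-edge property to decouple that optimum into the per-ball optima; that is the correct direction, and with the first paragraph deleted or corrected the proof is complete.
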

\begin{proof}
Note that $B_{r^*_i}(v_i)$ is obtained by constructing a BFS tree rooted at
vertex $v_i$ up to some depth $r^*_i$.  
According to Lemma~\ref{lemma:disjointballs}, there is no edge between any two nodes $v\in B_{r^*_i}(v_i)$ and $u\in B_{r^*_j}(v_j)$.
So, no single clique in an optimum solution
can contain vertices from distinct $B_{r^*_i}(v_i)$ and $B_{r^*_j}(v_j)$. 
Consider the subset of cliques in an optimal clique partition of $G$ that
intersect $B_{r^*_i}(v_i)$ and call this subset $\OPT_i$.  The argument above shows that $\OPT_i$ is disjoint from $\OPT_j$.
Also, each $\OPT_i$ contains all the vertices in $B_{r^*_i}(v_i)$.
Since $C_{r^*_i}(v_i)$ is an optimal clique partition for $B_{r^*_i}(v_i)$,
$|\OPT_i| \geq |C_{r^*_i}(v_i)|$. The lemma immediately follows by observing that $\OPT_i$ and $\OPT_j$ are disjoint.
\end{proof}

The next lemma relates the cost of our solution to $\opt$.

\begin{lemma}
If $|C_{r^*_i+2}(v_i)| \leq (1+\eps) \cdotp |C_{r^*_i}(v_i)|$, then
$
	\left|\displaystyle\bigcup_{i=1}^{k} C_{r^*_i+2}(v_i)\right| \leq 
			(1+\eps) \cdotp \opt
$
\label{lemma:ptas}
\end{lemma}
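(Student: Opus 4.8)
The plan is to combine the lower bound of Lemma~\ref{lemma:optlowerbound} with the hypothesis on a term-by-term basis, after first establishing that the quantity on the left is a genuinely disjoint union of clique partitions. First I would observe that, by the structure of MinCP2, the balls $B_{r^*_i+2}(v_i)$ removed across the iterations of the outer while-loop partition $V$: each iteration deletes $B_{r^*_i+2}(v_i)$ from the current residual vertex set on ``Step 10'', so these removed sets are pairwise disjoint and together cover $V$. Consequently the cliques appearing in $C_{r^*_i+2}(v_i)$ for distinct values of $i$ involve disjoint vertex sets and are therefore distinct cliques, which gives
\[
\left|\bigcup_{i=1}^{k} C_{r^*_i+2}(v_i)\right| = \sum_{i=1}^{k} \left|C_{r^*_i+2}(v_i)\right|.
\]

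Next I would apply the hypothesis $|C_{r^*_i+2}(v_i)| \leq (1+\eps)\cdot|C_{r^*_i}(v_i)|$ to each summand and factor out the common multiplier, obtaining
\[
\sum_{i=1}^{k} \left|C_{r^*_i+2}(v_i)\right| \leq (1+\eps)\sum_{i=1}^{k} \left|C_{r^*_i}(v_i)\right|.
\]
Finally, invoking Lemma~\ref{lemma:optlowerbound}, which states $\opt \geq \sum_{i=1}^{k} |C_{r^*_i}(v_i)|$, yields $(1+\eps)\sum_{i=1}^{k}|C_{r^*_i}(v_i)| \leq (1+\eps)\cdot\opt$, and chaining the three displays completes the bound.

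There is essentially no deep obstacle internal to this lemma: it is a direct assembly of facts already in hand. The one step I would make explicit rather than gloss over is the disjointness claim, since the identity $|\bigcup_i C_{r^*_i+2}(v_i)| = \sum_i |C_{r^*_i+2}(v_i)|$ silently depends on the balls $B_{r^*_i+2}(v_i)$ being vertex-disjoint, so that no clique is counted across two balls; this rests on the removal step of the algorithm and is the same non-adjacency phenomenon quantified in Lemma~\ref{lemma:disjointballs}. The genuinely hard work has already been discharged upstream --- in Lemma~\ref{lemma:optlowerbound}'s argument that the $\OPT_i$ are disjoint with $|\OPT_i| \geq |C_{r^*_i}(v_i)|$, and, most importantly, in guaranteeing that the while-loop actually terminates at a bounded radius $r^*_i \leq \beta$ so that the hypothesis $|C_{r^*_i+2}(v_i)| \leq (1+\eps)|C_{r^*_i}(v_i)|$ is ever attained (Lemma~\ref{lemma:diambound}). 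The present lemma is merely the telescoping that packages those ingredients into the final approximation guarantee.
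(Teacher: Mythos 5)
Your proof is correct and follows essentially the same route as the paper's: the paper's proof is exactly the one-line chain of the disjointness-induced equality, the term-by-term application of the hypothesis, and Lemma~\ref{lemma:optlowerbound}. Your explicit justification of the equality via vertex-disjointness of the removed balls is a detail the paper leaves implicit, but it is the same argument.
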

\begin{proof}
$	
	\left|\displaystyle\bigcup_{i=1}^{k} C_{r^*_i+2}(v_i)\right| = 
		\displaystyle\sum_{i=1}^{k} \left|C_{r^*_i+2}(v_i)\right| \leq 
		(1+\eps) \cdotp \displaystyle\sum_{i=1}^{k} \left|C_{r^*_i}(v_i)\right| \leq 
		(1+\eps) \cdotp \opt
$
\end{proof}

Finally, we show that the inner ``while-loop" terminates in
$\tilde{O}(\frac{1}{\eps})$, so $r^*_i\in \tilde{O}(\frac{1}{\eps})$.
Obviously, the ``while-loop" on ``Step 5" terminates eventually, so $r^*_i$ exists.
By definition of $r^*_i$, for all smaller values of
$r < r^*_i$: $|C_r(v_i)| > (1+\eps) \cdotp |C_{r-2}(v_i)|$. Since diameter of $B_r(v_i)$
is $O(r)$, if $B_r(v)$ is a UDG, there is a realization of it in which all the points
fit into a $r\times r$ grid. Thus, $|C_r(v_i)|\in O(r^2)$.
So for some $\alpha \in O(1)$:

\[
	\alpha \cdotp r^2 >
	|C_r(v_i)| > (1+\eps) \cdotp |C_{r-2}(v_i)| > \ldots >
							(1+\eps)^{\frac{r}{2}} \cdotp |C_0(v_i)| = 
								O(\left(\sqrt{1+\eps}\right)^r),
\]

when $r$ is even (for odd values of $r$ we obtain $|C_r(v_i)|>(1+\eps)^{\frac{r-1}{2}} \cdotp |C_1(v_i)| \geq O(\left(\sqrt{1+\eps}\right)^{r-1})$.
Therefore we have:

\begin{lemma}\label{lemma:diambound}
There is a constant $c_0>0$ such that for each $i$: $r^*_i \leq c_0/\eps \cdotp \log{1/\eps}$.
\end{lemma}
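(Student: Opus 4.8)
The plan is to start from the exponential-versus-polynomial tension already assembled in the text and turn it, by taking logarithms and a two-stage bootstrap, into the stated bound. Concretely, recall that the inner while-loop advances past a value $r$ only when $|C_{r+2}(v_i)| > (1+\eps)\cdot|C_r(v_i)|$; hence for every $r$ strictly below $r^*_i$ this strict inequality holds, and telescoping down to the base case $|C_0(v_i)| = 1$ (a single vertex is one clique) gives $|C_{r^*_i}(v_i)| > (\sqrt{1+\eps})^{\,r^*_i-1}$, where the $-1$ absorbs the even/odd parity noted above. On the other hand, under the standing assumption that $B_{r^*_i}(v_i)$ induces a UDG, its realization fits in an $O(r^*_i)\times O(r^*_i)$ box, so by Lemma~\ref{lemma:fewcliques} we have $|C_{r^*_i}(v_i)| \le \alpha\,(r^*_i)^2$ for an absolute constant $\alpha$. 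Combining the two bounds collapses everything into the single clean inequality
\[
	\alpha\,(r^*_i)^2 \;>\; \left(\sqrt{1+\eps}\right)^{\,r^*_i-1}.
\]

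Taking natural logarithms and writing $r = r^*_i$, this reads $\ln\alpha + 2\ln r > \tfrac{r-1}{2}\ln(1+\eps)$. Using the elementary estimate $\ln(1+\eps)\ge \eps/2$, valid for $\eps\in(0,1]$, I would reduce it to
\[
	\frac{\eps\,(r-1)}{4} \;<\; \ln\alpha + 2\ln r .
\]
The remaining task is purely to solve this transcendental inequality for $r$, and this is the one step that needs genuine care rather than routine algebra.

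I would resolve it by bootstrapping. First, using the crude bound $\ln r \le \sqrt{r}$ (which holds for all $r\ge 1$), the right-hand side is at most $\ln\alpha + 2\sqrt r$, so whenever $r$ exceeds $\max(256/\eps^2,\,8\ln\alpha/\eps)$ the right-hand side drops below $\tfrac{\eps r}{4}$, a contradiction; this secures the rough a-priori bound $r^*_i = O(1/\eps^2)$. Feeding $r^*_i = O(1/\eps^2)$ back in gives $\ln r^*_i = O(\log(1/\eps))$, and substituting this sharper estimate into $\tfrac{\eps(r-1)}{4} < \ln\alpha + 2\ln r$ yields $\tfrac{\eps r^*_i}{4} = O(\log(1/\eps))$, hence $r^*_i = O\!\left(\tfrac{1}{\eps}\log\tfrac1\eps\right)$, which is exactly the claim with a suitable absolute constant $c_0$.

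The main obstacle is that last extraction of the right order of growth from $\eps\, r \asymp \log r$. The temptation is to read off $r$ in one shot, but the self-referential $\log r$ on the right makes a single substitution insufficient: one must first pin down \emph{some} polynomial-in-$1/\eps$ bound in order to control $\log r$, and only then obtain the tight $\tfrac1\eps\log\tfrac1\eps$ scaling. Everything else -- the telescoping, the $O(r^2)$ clique bound, and the inequality $\ln(1+\eps)\ge\eps/2$ -- is routine. I would close by remarking that the constant $c_0$ produced here is precisely the one used to define $\beta$ in MinCP2, so that a genuine UDG ball never triggers the ``$r>\beta$'' failure branch.
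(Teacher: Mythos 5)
Your proof is correct and follows essentially the same route as the paper: telescope the strict growth condition $|C_r(v_i)| > (1+\eps)|C_{r-2}(v_i)|$ down to $|C_0(v_i)|=1$, play the resulting exponential lower bound $(\sqrt{1+\eps})^{r-1}$ against the $O(r^2)$ upper bound from Lemma~\ref{lemma:fewcliques}, and solve for $r$. The only difference is that the paper stops at the inequality $\alpha r^2 > (\sqrt{1+\eps})^{r}$ and asserts the lemma, whereas you explicitly carry out the final extraction (logarithms, $\ln(1+\eps)\ge\eps/2$, and the two-stage bootstrap to handle the self-referential $\log r$ term), which is a legitimate and careful filling-in of a step the paper leaves implicit.
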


In the next subsection, we show that the algorithm \mbox{OPT-CP}, 
given $B_{r}(v)$ and an upper bound $\ell$ on $|C_r(v)|$, 
either computes a clique partition 
or declares that the graph is not UDG; the size of the partition
is optimal if $B_r(v)$ is a UDG. The algorithm runs
in time $n^{O(\ell^2)}$.
By the above arguments, if $B_r(v)$ is a UDG then, there is a constant $c_1>0$ such that:

\begin{equation}\label{c1-const}
|C_r(v)| = O({r^*_i}^2) \leq c_1\cdot\frac{c_0^2}{\eps^2}\log^2\frac{1}{\eps}.
\end{equation}

We can 
set $\ell = \lceil c_1 \frac{c_0^2}{\eps^2}\log^2\frac{1}{\eps} \rceil$
for any invocation of OPT-CP as an upper bound,
where $c_1$ is the constant in $O({r^*_i}^2)$.
So, the running time of the algorithm is 
$n^{\tilde{O}(1/\eps^4)}$.

\subsection{An Optimal Clique Partition for $B_r(v)$}\label{secbrv}
Here we present the algorithm \mbox{OPT-CP} that given 
$B_r(v)$ (henceforth referred to as $G'$) and an upper bound $\ell$ 
on the size of an optimal solution for $G'$, either computes a 
clique partition of it or detects that it is not a UDG; if $G^\prime$
is a UDG then the partition is optimal.  The algorithm runs in time 
$n^{O(\ell^2)}$. Since, by 
Lemma \ref{lemma:diambound}, $\ell$ is a constant in each call to 
this algorithm, the running time of OPT-CP is polynomial in $n$.
Our algorithm is based on the separation theorem \cite{CapoyleasRW91}. 
Even though we do not have a 
realization of the nodes on the plane, assuming that $G'$ is a UDG, we
show how to apply the {\em separation theorem} \cite{CapoyleasRW91}
as in the geometric setting. We use node/point to refer to a vertex of $G'$ 
and/or its corresponding point on the plane for some realization of $G'$.
We will use the following technical lemma.

\begin{lemma}\label{quad-lem}
Suppose we have four mutually adjacent nodes $p,a,b,r$ and their 
pairwise distances with respect to some realization 
on the Euclidean plane. Then there is a poly-time procedure that 
can decide if $p$ and $r$ are on the same side of the line that goes 
through $a$ and $b$ or are on different sides.
\end{lemma}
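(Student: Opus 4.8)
The plan is to exploit the fact that whether $p$ and $r$ lie on the same or on opposite sides of the line through $a$ and $b$ is a \emph{congruence invariant} of the four points, and hence is determined by the six pairwise distances alone. Crucially it is invariant even under reflection, which is precisely the ambiguity that pairwise distances fail to pin down, so one can hope to read the answer off directly. First I would fix a convenient coordinate frame by an isometry: place $a$ at the origin and $b$ at $(d_{ab},0)$, so that the line through $a$ and $b$ becomes the $x$-axis and ``same side versus opposite side'' is the same as ``equal versus opposite signs of the $y$-coordinates of $p$ and $r$.''

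With this frame the $x$-coordinates are forced by the two circle equations centered at $a$ and $b$: subtracting $\|\cdot-a\|^2$ from $\|\cdot-b\|^2$ cancels the quadratic terms and gives
\[
x_p = \frac{d_{ap}^2 - d_{bp}^2 + d_{ab}^2}{2\,d_{ab}}, \qquad
x_r = \frac{d_{ar}^2 - d_{br}^2 + d_{ab}^2}{2\,d_{ab}},
\]
after which $y_p^2 = d_{ap}^2 - x_p^2$ and $y_r^2 = d_{ar}^2 - x_r^2$ are determined. The individual signs of $y_p$ and $y_r$ are not determined, but I do not need them: expanding $d_{pr}^2 = (x_p - x_r)^2 + (y_p - y_r)^2$ isolates the only reflection-sensitive quantity, namely
\[
2\,y_p y_r = (x_p - x_r)^2 + y_p^2 + y_r^2 - d_{pr}^2 .
\]
The entire right-hand side is a rational function of the given squared distances, so its sign is computable exactly with no square roots: a positive value means $y_p$ and $y_r$ share a sign, i.e. $p$ and $r$ are on the same side; a negative value means opposite sides; and a zero value is the degenerate case where one of $p,r$ lies on the line itself.

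The routine parts are the algebra above and the degenerate cases ($d_{ab}=0$, which I exclude since $a,b$ are distinct points of a realization, and $y_p=0$ or $y_r=0$, reported as ``on the line''). The only genuine subtlety is the reflection ambiguity, and the observation that defuses it is that the \emph{product} $y_p y_r$, unlike either factor, is invariant under the global reflection that the distances cannot fix; this is exactly what lets a single sign evaluation answer the question. As a bonus worth recording, the same identity supplies a consistency test, since $(2\,y_p y_r)^2$ must equal $4\,y_p^2 y_r^2$ (and $y_p^2,y_r^2$ must be nonnegative); a violation certifies that the six distances admit no planar realization, which dovetails with the larger algorithm's task of detecting non-UDG inputs.
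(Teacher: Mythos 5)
Your proposal is correct, and it takes a genuinely different route from the paper. The paper stays coordinate-free: it classifies the quadrilateral $parb$ as convex or concave by comparing the four triangle areas (computed via Heron's formula as square roots of polynomials in the edge lengths, with repeated squaring used to compare sums of such square roots without extracting roots), and then does a case analysis --- in the convex case $p$ and $r$ lie on opposite sides of line $ab$ iff $|rp|+|ab|$ is the strictly largest of the three sums of ``opposite'' pairs, and in the concave case iff $a$ or $b$ lies inside the triangle formed by the other three points, detected by identifying the largest triangle. Your argument instead fixes a canonical frame ($a$ at the origin, $b$ on the positive $x$-axis), recovers $x_p,x_r,y_p^2,y_r^2$ as rational functions of the squared distances, and observes that the reflection ambiguity --- the only freedom left --- is killed by the identity $2\,y_p y_r = (x_p-x_r)^2 + y_p^2 + y_r^2 - d_{pr}^2$, so a single exact sign evaluation of a rational quantity answers the question. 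This buys several things the paper's proof does not: no case analysis over quadrilateral configurations, no manipulation of nested radicals, an explicit and clean treatment of the degenerate case where $p$ or $r$ lies on the line, and, as you note, a built-in consistency test ($y_p^2,y_r^2\geq 0$ and $(2y_py_r)^2 = 4\,y_p^2y_r^2$) certifying when the six distances are not planar-realizable, which dovetails with the surrounding algorithm's need to produce non-UDG certificates. What the paper's approach buys in exchange is modest --- it works directly with the intrinsic quantities (areas, side-sum comparisons) without choosing coordinates --- but both procedures are constant-time exact rational computations given rational edge lengths, and yours is the simpler and more self-contained of the two.
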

\begin{proof}
First, we describe how to detect if the quadrilateral on these four points is convex or concave. 
If the quadrilateral is concave, then one of the points will be inside the triangle formed by
the other three. There are three possible cases: $r$ is inside, $p$ is inside, or one of $a$ or $b$ is
inside (see Figure \ref{quad-fig}(c)-(e)). There are four triangles each
of which is over three of these
four points. The quadrilateral is concave if the sum of the areas of three of these triangles 
is equal to the area of the fourth triangle. Equivalently, it is convex if sum of areas of
two of the triangles is equal to the sum of areas of the other two. Given a triangle with edge lengths
$x,y,z$, using Heron's formula, the area of the triangle is equal to $\sqrt{2(x^2y^2+y^2z^2+z^2x^2)-(x^4+y^4+z^4))/4}$.
So the area of a triangle is of the form $\sqrt{A}$ where $A$ is a polynomial in terms of lengths of the edges
of the triangle. Suppose that the areas of the four triangles over these four points are $\sqrt{A_1}$,
$\sqrt{A_2}$, $\sqrt{A_3}$, and $\sqrt{A_4}$. We need to check if the sum of two is equal to the sum of the
other two and we would like to do this without computing the square roots of numbers.
For instance, suppose we want to verify $\sqrt{A_1}+\sqrt{A_2}=\sqrt{A_3}+\sqrt{A_4}$. For this to hold,
we must have $A_1+A_2+2\sqrt{A_1A_2}=A_3+A_4+2\sqrt{A_3A_4}$. Verifying this is equivalent to
verifying $D+\sqrt{A_1A_2}=\sqrt{A_3A_4}$ where $D=\frac{1}{2}(A_1+A_2-A_3-A_4)$.
Taking the square of both sides, we need to have $D^2+A_1A_2+2D\sqrt{A_1A_2}=A_3A_4$, which is
the same as $\frac{1}{4}(A_3A_4-D^2-A_1A_2)^2=A_1A_2D^2$. Thus by comparing two polynomials of edge-lengths
(and without computing square roots) we can check if the quadrilateral is convex or concave.

\begin{figure}[htpb]
\centering
\begin{tabular}{c@{\hspace{0.025\linewidth}}c@{\hspace{0.025\linewidth}}c@
{\hspace{0.025\linewidth}}c@{\hspace{0.025\linewidth}}c}
\includegraphics[height = 0.15\linewidth]{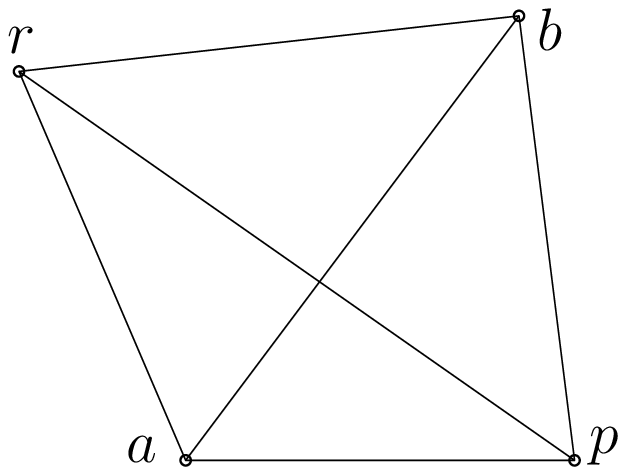} &
\includegraphics[height = 0.15\linewidth]{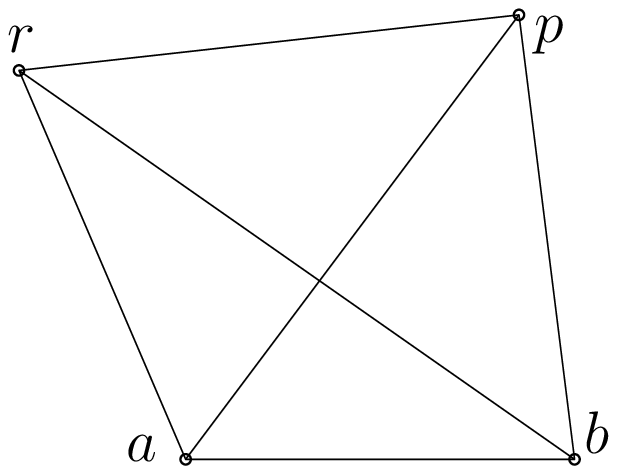} &
\includegraphics[height = 0.15\linewidth]{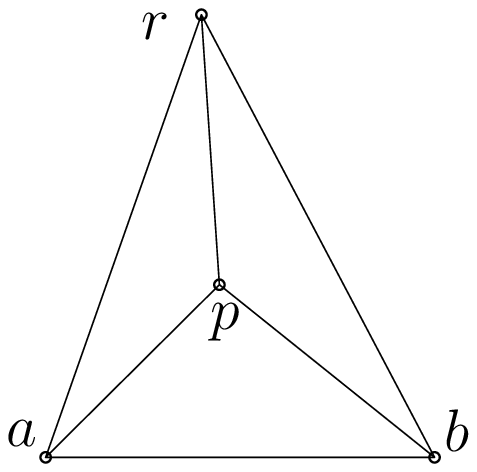} &
\includegraphics[height = 0.15\linewidth]{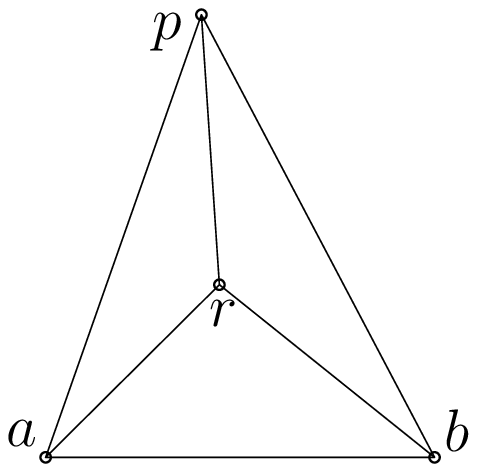} &
\includegraphics[height = 0.15\linewidth]{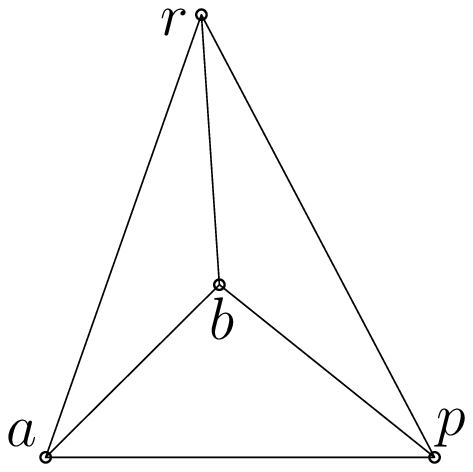} \\
(a) & (b) & (c) & (d) & (e)
\end{tabular}
\caption{The five non-isomorphic configurations needed to consider for a quadrilateral on four 
points in Lemma \ref{quad-lem}}\label{quad-fig}
\label{figure:crossingtest}
\end{figure}

Suppose the quadrilateral is convex. 
If $r$ and $p$ are on two opposite corners (see Figure \ref{quad-fig}(a)),
then $r$ and $p$ are on different sides. 
In this case $|rp|+|ab|> |ra|+|bp|$ and $|rp|+|ab|> |rb|+|ap|$. 
If $rp$ is one of its sides (see Figure \ref{quad-fig}(b)), 
then $|rp|+|ab|$ is not the largest of the above three pairs of sums.

Now suppose that the quadrilateral is concave. The only case in which $r$ and $p$ are on two sides of line
$ab$ is when one of $a$ or $b$ is inside the triangle obtained by the other three (see Figure \ref{quad-fig}(e)). 
In this case, the area of the largest triangle is the one that does not contain $a$ or $b$. Thus, if we
compute the square of the areas of the four triangle, we can detect this case too.
\end{proof}

Assume that $G'$ is a UDG and has an optimum clique partition 
of size $\alpha\leq \ell$.  The cliques fall in two categories:
small (having at most $2\alpha -2$ points), and large (having at least
$2\alpha -1$ points). We focus only on finding the large cliques
since it is easy to guess all the small cliques.  Suppose for each
pair $C_i,C_j \in \OPT$ of large cliques, we guess their respective
representatives, $c_i$ and $c_j$. Further, suppose that we also guess
a separating line $l_{ij}$ correctly which goes through
points $u_{ij}$ and $v_{ij}$. 
For a point $p$ that is adjacent to $c_i$ or $c_j$ 
we want to efficiently test
if $p$ is on the the same side of line $l_{ij}$ as $c_i$ (the
{\em positive} side), or on $c_j$'s side (the {\em negative} side),
using only edge-lengths.  Without loss of generality, let both 
$u_{ij}$ and $v_{ij}$ belong to clique $C_i$.
For every node $p$ different from the representatives:
\begin{itemize}
\item Suppose $p$ is adjacent to all of $c_i,u_{ij},v_{ij},c_j$. Observe
that we also have the edges $c_iu_{ij}$
and $c_iv_{ij}$. Given the edge-lengths of all the six edges among the
four vertices $c_i,u_{ij},v_{ij},p$ 
using Lemma \ref{quad-lem} we can
decide if in a realization of these four points, the line going through $u_{ij},v_{ij}$ separates the
two points $p$ and $c_i$ or not. If $p$ and $c_i$ are on the same side, we say $p$ is on the positive side
of $l_{ij}$ for $C_i$. Else, it is on the positive side of $l_{ij}$ for $C_j$.

\item Suppose $p$ is adjacent to $c_i$ (and also to $u_{ij}$ and $v_{ij}$)
but not to $c_j$. Given the edge-lengths of all the six edges among the
four vertices $c_i,u_{ij},v_{ij},p$ 
using Lemma \ref{quad-lem} we can
decide if in a realization of these four points, the line going through 
$u_{ij},v_{ij}$ separates the
two points $p$ and $c_i$ or not. If $p$ and $c_i$ are on the same side, 
we say $p$ is on the positive side
of $l_{ij}$ for $C_i$. Else, it is on the positive side of $l_{ij}$ for $C_j$.
\end{itemize}

For each $C_i$ and all the lines $l_{ij}$,
consider the set of nodes that are on the positive side of all these 
lines with respect to $C_i$; we place these nodes in $C_i$.
After obtaining the large and the small cliques, 
we obtain sets $C_1,\ldots,C_\alpha$. At the end we check if each
$C_i$ forms a clique and if their union covers all the points.
The number of guesses for representatives is $n^{O(\alpha)}$ 
and the number of guesses for the
separator lines is $n^{O(\alpha^2)}$. So there are a total of 
$n^{O(\alpha^2)}$ configurations that we consider.

Clearly, if $G'$ is a UDG then some set of
guesses is a correct one, allowing us to obtain an optimum clique partition.
If $G'$ is not a UDG, we may still find a clique partition of $G'$. 
However, if we fail to obtain a
clique partition in our search then the subgraph is a certificate that 
$G'$ is not a UDG.

\section{$(2+\eps)$-Approximation for Weighted Clique 
Partition using Adjacency}
\label{sec:wtd}
In this section we consider a generalization of the minimum clique partition on UDGs,
which we call {\em minimum weighted clique partition} (MWCP).
Given a node-weighted graph $G(V,E)$ with vertex weight $\wt{v}$,
the weight of a clique $C$ is defined as
the weight of the heaviest vertex in it.  For a clique partition 
$\mathcal{C} = \{C_1, C_2, \ldots, C_t\}$, the weight of $\mathcal C$ is defined
as sum of the weights of the cliques in $\mathcal C$, i.e.
 $\wt{\mathcal{C}} = \wt{\bigcup_{i=1}^t C_i} = \sum_{i=1}^t \wt{C_i}$. 
The problem is, given $G$ in standard form, say, as an adjacency matrix, 
construct a clique partition $\mathcal{C} = \{C_1, C_2, \ldots, C_t\}$ 
while minimizing $\wt{\mathcal{C}}$.  
The weighted version of the problem as it is defined above has also been
studied in different contexts. 
See \cite{FinkeJQS08,BecchettiKMSSV06,GijswijtJQ07}
for study of weighted clique-partition on 
interval graphs and  circular arc graphs.

Observe that MWCP distinguishes itself from MCP in two important ways: 
(i) The {\em separability} property which was crucially used 
earlier to devise a PTAS does not hold
in the weighted case, and (ii) the number of cliques in an optimal solution for
a UDG in a region of bounded radius is not bounded by the diameter 
of the region anymore,  
i.e. it is easy to construct examples of weighted UDGs in a bounded region 
where an optimal weighted clique partition contains an unbounded (in terms of
region diameter) number of cliques.  In addition, examples where two cliques in an optimal 
solution  are not separable, that is, their convex hulls overlap, is easy to
construct. (See the examples given in Figure~\ref{figure:distinction}.) 
To the best of our knowledge, MWCP has not been investigated before
on UDGs.  We, however, note that a simple modification to
the algorithm by \cite{PemmarajuPESA07} also
yields a factor-$8$ approximation to the weighted case, a generalization
which they do not consider. 
\begin{figure}[h]
\centering
\begin{tabular}{c@{\hspace{0.1\linewidth}}c}
\includegraphics[height = 0.25\linewidth]{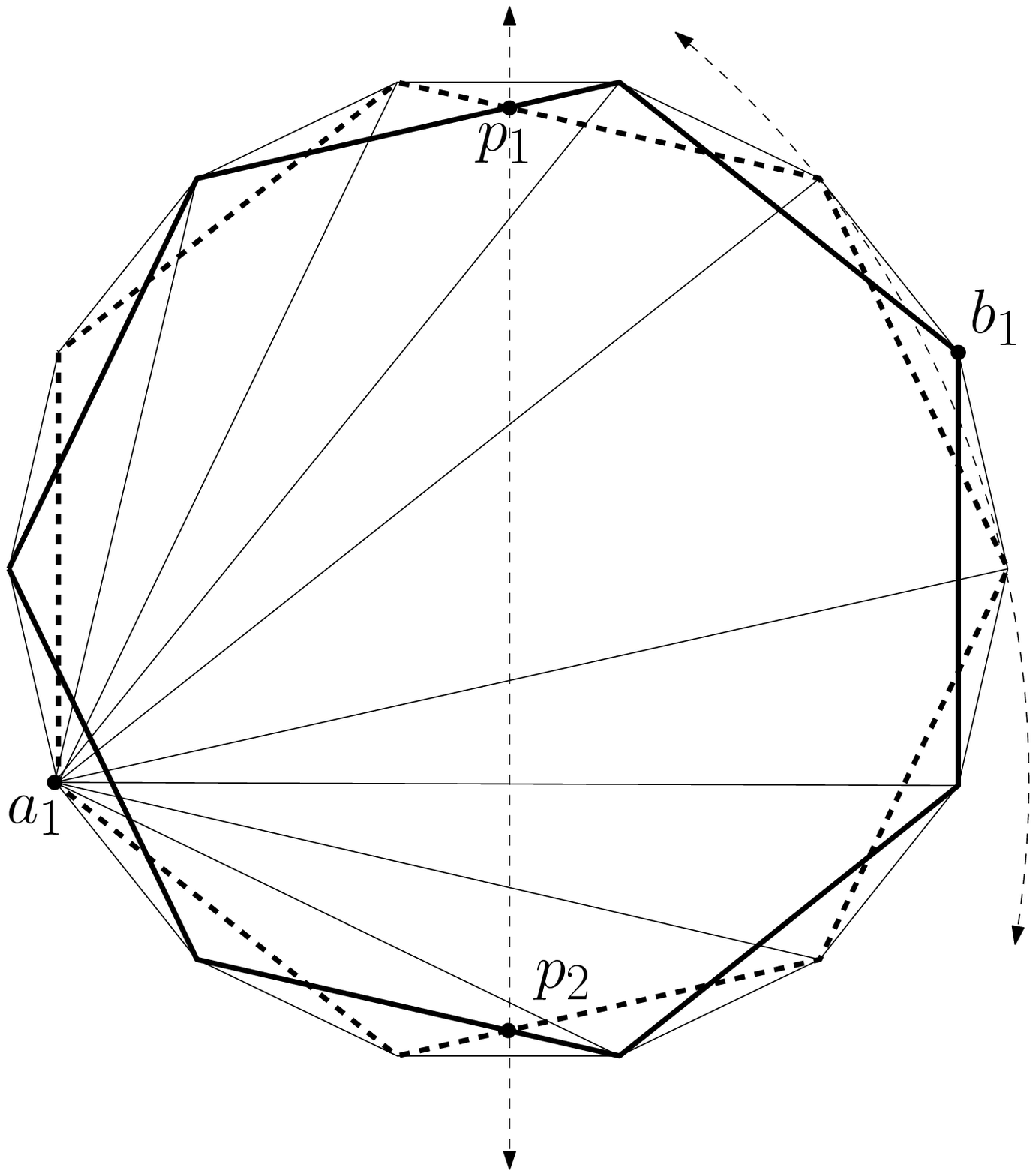} &
\includegraphics[height = 0.25\linewidth]{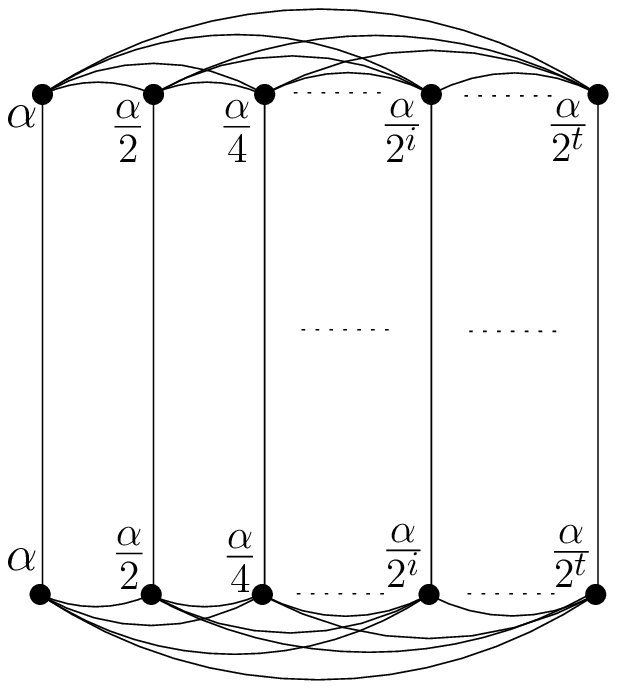} \\
(a) & (b)
\end{tabular}
\caption{(a) Two overlapping weighted cliques, $A = \{a_1, \ldots,a_k\}$ 
and $B = \{b_1, \ldots, b_k\}$ are shown, $a_i, b_i$ are independent for
all $i$. The heavy polygon has vertices weighted $k$ while the dashed ones 
are weighted $1$. $\opt = k + 1$ while any separable
partition must pay a cost of at least $2k$.
(b) A UDG which is a matching between two cliques for which $\OPT$ 
contains $t$ cliques.  The weight is less than $2 \cdotp \alpha$.
}
\label{figure:distinction}
\end{figure}
Here, we give an algorithm which runs in time $O(n^{\poly(1/\eps)})$
for a given $\eps > 0$ and computes a 
$(2+\eps)$-approximation to MWCP for UDGs expressed in standard form,
for example, as an adjacency matrix.  Our algorithm is weakly 
robust in that it either produces a clique partition or produces a 
polynomial-sized certificate proving that the input is not a UDG.
When the input is a UDG, the algorithm returns a clique partition
and it is guaranteed to be a
$(2+\eps)$-approximation; but if the input is not UDG there 
is no guarantee on the quality of the clique partition (if it computes one).

\begin{theorem}
Given a graph $G$ expressed in standard form, and $\eps > 0$, 
there is a polynomial time algorithm which either computes a clique 
partition of $G$ 
or gives a certificate that $G$ is not a UDG. If $G$ is a UDG, the
weight of the clique partition
computed is a $(2+\eps)$-approximation of the minimum weighted clique 
partition (but there is no guarantee on the weight of the clique partition 
if the input graph is not UDG).
\end{theorem}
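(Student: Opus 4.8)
The plan is to combine a \textbf{weight-bucketing} reduction with the separability phenomenon highlighted in Figure~\ref{figure:distinction}(a). First I would round every weight up to the nearest power of $(1+\eps)$, losing only a $(1+\eps)$ factor; this leaves $O\!\left(\log W / \eps\right)$ distinct weight values $t_0 > t_1 > \cdots$ in geometric ratio. For a threshold $t$ let $U(t)=\{v:\wt{v}\ge t\}$ and let $\kappa_j$ denote the size of a \emph{minimum} (unweighted) clique partition of the induced UDG $G[U(t_j)]$. Writing the weight of any clique as $\max_{v}\wt{v}=\sum_j (t_j-t_{j+1})[\,\exists v\in C,\ \wt v\ge t_j\,]$ and summing over a clique partition $\mathcal C$ gives $\wt{\mathcal C}=\sum_j (t_j-t_{j+1})\cdot\bigl|\{C\in\mathcal C: C\cap U(t_j)\neq\emptyset\}\bigr|$. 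Since the cliques meeting $U(t_j)$, restricted to $U(t_j)$, form a clique partition of $U(t_j)$, each count is at least $\kappa_j$, yielding the lower bound
\[
\opt \;\ge\; \sum_{j}(t_j-t_{j+1})\,\kappa_j .
\]

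The target is then to construct a \emph{single} clique partition $\mathcal C$ whose restriction to every $U(t_j)$ uses at most $2\kappa_j$ cliques; plugging this into the identity above and comparing term-by-term with the lower bound gives $\wt{\mathcal C}\le 2\sum_j(t_j-t_{j+1})\kappa_j\le 2\,\opt$, which together with the bucketing loss is a $(2+\eps)$-approximation. The constant $2$ is dictated by Figure~\ref{figure:distinction}(a): one cannot hope to match the $\kappa_j$'s exactly because the per-level optimal partitions need not be nestable, and overlapping (non-separable) optima force a factor-$2$ gap. The mechanism I would use to realize the $2\kappa_j$ budget is separability: given an optimal (possibly overlapping) partition, split each clique into at most two sub-cliques so that the resulting family has pairwise non-overlapping convex hulls; each sub-clique inherits a weight no larger than its parent, so the separable family costs at most $2\,\opt$, and being separable it can be built up consistently across the nested sets $U(t_j)$.

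Algorithmically I would process the buckets from heaviest to lightest, maintaining a clique partition of the vertices seen so far and \emph{extending} existing cliques to absorb newly added lighter vertices whenever adjacency permits, opening new cliques only when forced. Crucially, I never enumerate whole cliques — this sidesteps distinction~(ii) (a bounded region may contain an unbounded number of optimal cliques, as in Figure~\ref{figure:distinction}(b)). To localize the computation I reuse the ball-growing decomposition of Section~\ref{sec:2b}, running the incremental construction inside each bounded-diameter ball, where the number of \emph{newly opened} cliques at each bucket can be charged against $\kappa_j$. Because we have only adjacency and not edge-lengths, we cannot invoke the exact sidedness test of Lemma~\ref{quad-lem}; the factor-$2$ slack is precisely what lets us assign each vertex to a clique using adjacency alone, at worst doubling the clique count but never the per-clique weight. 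The procedure stays \emph{weakly robust}: if in some ball the incremental construction either fails to produce a valid clique partition or is forced to exceed the $2\kappa_j$ budget, that ball is output as a polynomial-size certificate that $G$ is not a UDG.

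The main obstacle is the structural heart of the argument: proving that a separable, incrementally-nestable clique partition of weight at most $2\,\opt$ always exists for UDGs, i.e.\ that splitting each optimal clique into two pieces can be done \emph{globally consistently} so that the running clique count on each $U(t_j)$ stays within $2\kappa_j$, together with the matching claim that this partition can be recovered from adjacency alone. The bucketing reduction, the threshold lower bound, and the term-by-term (summation-by-parts) comparison are routine once the $2\kappa_j$ guarantee is in hand; the delicate part is establishing that guarantee without edge-lengths while retaining the non-UDG certificate.
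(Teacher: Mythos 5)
There is a genuine gap, and it sits exactly where you located it yourself: the entire approximation guarantee rests on an unproven structural claim, namely that there exists a single clique partition that is (i) within a factor $2$ of the unweighted optimum $\kappa_j$ \emph{simultaneously on every} level set $U(t_j)$, (ii) obtained by splitting each clique of an overlapping optimal partition into at most two sub-cliques so as to restore separability, and (iii) recoverable from adjacency alone. None of these three parts is established, and each is problematic. Separability is an inherently geometric notion: since you have only the adjacency matrix (as you note, Lemma~\ref{quad-lem} is unavailable without edge-lengths), your algorithm has no way to test, exploit, or even define convex-hull disjointness, so it cannot be the mechanism that realizes the $2\kappa_j$ budget. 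The nestability requirement in (i) is a universal-approximation statement far stronger than what Figure~\ref{figure:distinction}(a) suggests; that example only shows separable partitions can be forced to pay a factor $2$, not that a factor-$2$ partition exists that is consistent across all weight thresholds. Finally, the algorithmic side is unsound as a weakly robust procedure: your greedy rule (``extend existing cliques, open new ones only when forced'') carries no charging argument bounding new cliques by $\kappa_j$ --- greedy clique growing has no constant-factor guarantee on UDGs --- and moreover $\kappa_j$ itself cannot be computed from adjacency alone (even in a bounded ball; that is precisely why the paper's unweighted PTAS needs edge-lengths), so the test ``forced to exceed the $2\kappa_j$ budget'' is not implementable, and a genuine UDG that merely defeats the greedy heuristic would be wrongly declared not a UDG.

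The paper's proof takes a different, adjacency-native route, and the source of its factor $2$ is not separability at all but the co-bipartite neighborhood edge elimination ordering (CNEEO) of Raghavan--Spinrad: every UDG admits a CNEEO, and by Lemma~\ref{lemma:catchclique}, for any clique $C$ and any CNEEO $L$ there is an edge whose common residual neighborhood $N_L[i]$ contains $C$ and induces a co-bipartite graph, hence can be covered by two cliques each of weight at most $\wt{C}$ --- all computable from the adjacency matrix, with failure to construct a CNEEO serving as a sound non-UDG certificate. The second ingredient, which your bucketing scheme tries to replace, is the counting argument of Lemmas~\ref{lemma:twoclique}, \ref{lemma:heavycost}, and \ref{lemgamma}: inside each ball of the ball-growing decomposition there \emph{exists} a partition with only $\tilde{O}(1/\gamma^{2})$ cliques of weight within $(1+\gamma/2)$ of optimal, which makes the enumeration of bounded-length edge sequences polynomial. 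Your threshold lower bound $\opt \ge \sum_j (t_j - t_{j+1})\,\kappa_j$ and the term-by-term comparison are correct as far as they go, but to realize the $2\kappa_j$ budget you would need the CNEEO machinery (or an equivalent adjacency-only tool) anyway --- at which point you would essentially be reconstructing the paper's argument inside each bucket.
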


Our algorithm will borrow some
ideas developed in Section~\ref{sec:2b} and in \cite{PemmarajuPESA07}.
The high level idea of the algorithm is as follows.
Similar to the algorithm in Section~\ref{sec:2b}, we first decompose the 
graph into bounded diameter regions
and show that if we can compute a $(2+\eps)$-approximate clique 
partition of each region then the union
of these clique partitions is within $(2+\eps)$ fraction of opt.  We 
will employ a similar ball growing technique (as in Section \ref{sec:2b})
that will give us bounded diameter regions. 
We then show that we can either compute a clique partition or 
give a certificate that the subgraph is not a UDG.
If the subgraph is a UDG, then our clique partition is within a factor
$(2+\eps)$ of the optimal. For the case of bounded diameter region, although the
optimum solution may have a large number of cliques, we can show that there
is a clique partition with small number of cliques whose cost is within
$(1+\eps)$-factor of the optimum solution. First we describe the main algorithm.
Then in Subsection \ref{sec:nearopt} we show that for each subgraph $B_r(v)$ (of bounded diameter)
there is a near optimal clique partition with $\tilde{O}(r^2)$ cliques. Then 
in Subsection \ref{sec2gamma} we show how to find such a near optimal clique partition.

Let us denote the weight of the optimum clique partition of $G$ by $\opt$.
As before, let $B_r(v) = \{u : d(u,v) \leq r\}$, called the ball of (unweighted) 
distance $r$ around $v$, 
be the set of vertices that are at most $r$ hops from $v$ in $G$. 
Our decomposition algorithm described below (see Algorithm 3) is similar
to Algorithm 2 and partitions the graph into bounded diameter subgraphs below.
The procedure \mbox{CP}, given a graph induced by the vertices of $B_r(v)$ 
and a parameter $\ell=\poly(r)$, runs in time $n^{O(\ell^2)})$ and either 
gives a certificate that $B_r(v)$ is not a
UDG or computes a clique partition of $B_r(v)$; this clique partition 
is within a factor $(2+\eps)$ of the optimum if $B_r(v)$ is a UDG. 
We only call this procedure for constant values of $r$.  In the
following, let $0 < \gamma \leq \frac{\sqrt{9+4 \eps}-3}{2}$ be 
a rational number. See Algorithm~3.

\begin{algorithm}[h]
	\caption{MinCP$(G,\gamma)$}
	\begin{algorithmic}[1]
		\STATE {$\mathcal{C} \leftarrow \emptyset$; 
						$\beta \leftarrow \lceil c_0
						\frac{1}{\gamma}\log\frac{1}{\gamma} \rceil$;
						$\ell \leftarrow c_1 \beta^2$.}

		\COMMENT {where $c_0$ is the constant in  
							Lemma~\ref{lemma:diamboundwtd},
							and $c_1$ is the constant in inequality~(\ref{c1-const-wtd}).}
		\WHILE {$V \neq \emptyset$}
			\STATE $v \leftarrow \arg\max_u\{\wt{u}\}$
			\STATE $r \leftarrow 0$

			\COMMENT {Let $C_{r}(v)$ denote a factor-$(2+\gamma)$ 
								partition of $B_r(v)$ computed by calling CP}
			\WHILE {$\wt{C_{r+2}(v)} > (1+\gamma) \cdotp \wt{C_{r}(v)}$}
				\STATE $r \leftarrow r + 1$
				\IF {($r > \beta$) or (CP$(B_r(v),\ell)$ returns 
						``not a UDG'')}
					\RETURN {``$G$ is not a UDG" and produce $B_r(v)$ as the
									 certificate}
				\ENDIF
			\ENDWHILE
			\STATE {$\mathcal{C} \leftarrow \mathcal{C} \cup C_{r+2}(v)$}
			\STATE $V \leftarrow V \setminus B_{r+2}(v)$
		\ENDWHILE
	\RETURN {$\mathcal{C}$ as our clique partition}
	\end{algorithmic}
\end{algorithm}

Let $k$ is the maximum number of iterations of the outer
``while-loop". The proof of the following Lemma is identical to the proof of Lemma~\ref{lemma:disjointballs}.

\begin{lemma}
Every two vertices $v\in B_{r^*_i}(v_i)$ and $u\in B_{r^*_j}(v_j)$ 
are non-adjacent.
\end{lemma}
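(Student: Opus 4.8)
The plan is to mirror the proof of Lemma~\ref{lemma:disjointballs} essentially verbatim, since the only differences between Algorithm~2 and Algorithm~3 lie in how the center vertex is chosen (by maximum weight rather than arbitrarily) and in the stopping criterion of the inner loop (measured in weight rather than in cardinality); neither change touches the combinatorial structure of the nested balls, which is all the argument uses. Concretely, I would fix $i \neq j$ and assume without loss of generality that $i < j$, so that the ball around $v_i$ is extracted before the ball around $v_j$.

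The key structural fact I would exploit is that $B_r(v)$ is defined purely in terms of the hop-distance $d$, and that at the end of iteration $i$ the entire ball $B_{r^*_i + 2}(v_i)$ is deleted from $V$. Hence any vertex $u$ still present when iteration $j$ begins---in particular any $u \in B_{r^*_j}(v_j)$---must satisfy $u \notin B_{r^*_i+2}(v_i)$, which by the definition of the ball means $d(u, v_i) > r^*_i + 2$, i.e.\ $d(u,v_i) \geq r^*_i + 3$. On the other hand, every $v \in B_{r^*_i}(v_i)$ satisfies $d(v, v_i) \leq r^*_i$ by definition.

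To conclude, I would invoke the elementary fact that hop-distance from a fixed root changes by at most $1$ across any edge: if $u$ and $v$ were adjacent in $G$, then $|d(u,v_i) - d(v,v_i)| \leq 1$. But the two bounds above force $d(u,v_i) - d(v,v_i) \geq (r^*_i + 3) - r^*_i = 3 > 1$, a contradiction. Equivalently, phrased via the BFS tree rooted at $v_i$: there is no edge joining a vertex at level $\leq r^*_i$ to a vertex at level $\geq r^*_i + 3$, since BFS edges connect only levels differing by at most one. Either phrasing shows $u$ and $v$ are non-adjacent, which is the claim.

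This argument carries essentially no risk, so there is no genuine hard part; the only point that warrants care is stating the survival step with respect to the original adjacency of $G$. Since $B_{r^*_j}(v_j)$ is a set of vertices and non-adjacency is a property of $G$, it is harmless that deleting earlier balls may lengthen residual distances---the only facts used are that these vertices escaped deletion (hence lie outside $B_{r^*_i+2}(v_i)$ as measured in $G$) and that adjacency in $G$ forces near-equal hop-distances from $v_i$. I would therefore keep the write-up to a couple of lines, noting explicitly that the weight-based modifications in Algorithm~3 are irrelevant to this purely distance-based separation.
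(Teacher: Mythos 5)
Your proof follows exactly the paper's route: the paper proves this lemma by declaring its proof identical to that of Lemma~\ref{lemma:disjointballs}, which is precisely the BFS-level argument you reproduce, and you are right that the weight-based modifications in Algorithm~3 (choice of center, weighted stopping rule) are irrelevant to the distance-based separation.

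One step, however, is pinned to the wrong graph. You justify $d(u,v_i) \geq r^*_i+3$ by saying that a survivor of iteration $i$ lies outside $B_{r^*_i+2}(v_i)$ \emph{as measured in $G$}. But the ball deleted at iteration $i$ is grown in the residual graph $G_i$ induced on the vertices not yet removed (it must be, or else $C_{r+2}(v)$ would re-cover previously partitioned vertices and $\mathcal{C}$ would fail to be a partition). Deleting vertices can only lengthen shortest paths, so escaping deletion yields $d_{G_i}(u,v_i) > r^*_i+2$ but does \emph{not} yield $d_G(u,v_i) > r^*_i+2$: a short $G$-path from $u$ to $v_i$ may pass through vertices deleted in iterations before $i$ (indeed $v_i$ itself can be $G$-adjacent to a vertex of an earlier deleted ball, since surviving iteration $i'$ only forces residual distance at least $r^*_{i'}+3$ from $v_{i'}$). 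So the parenthetical in your final paragraph resolves the residual-distance issue in exactly the direction that fails. The repair is immediate and is what the paper's BFS phrasing implicitly does: run the whole argument inside $G_i$. Both $u$ and $v$ are present at the start of iteration $i$ ($v$ because it lies in $B_{r^*_i}(v_i)$, $u$ because it survives until iteration $j > i$), so a $G$-edge $uv$ would also be an edge of $G_i$; then $d_{G_i}(v,v_i) \leq r^*_i$ and $d_{G_i}(u,v_i) \geq r^*_i+3$ contradict the fact that an edge of $G_i$ changes $d_{G_i}(\cdot\,,v_i)$ by at most one (equivalently, no edge of $G_i$ joins BFS levels differing by two or more). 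With that one substitution---residual distances in place of $G$-distances---your write-up coincides with the paper's proof.
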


The following lemma shows a lower-bound for $\opt$. 

\begin{lemma}
\label{lem:2approx}
$(2+\gamma) \cdotp \opt \geq 
				\wt{\displaystyle\bigcup_{i=1}^k C_{r^*_i}(v_i)}$
\end{lemma}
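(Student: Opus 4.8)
The plan is to reduce the claim to a clean comparison between the \emph{standalone} optima of the balls and the global optimum, mirroring Lemma~\ref{lemma:optlowerbound} but taking care that $C_{r^*_i}(v_i)$ is now only an approximate partition. First I would record that the balls $B_{r^*_i}(v_i)$ are pairwise vertex-disjoint: once $v_i$ is processed, Algorithm~3 deletes $B_{r^*_i+2}(v_i)\supseteq B_{r^*_i}(v_i)$ from $V$, so every later center and its ball avoid it. Hence the partitions $C_{r^*_i}(v_i)$ live on disjoint vertex sets and, since weight is additive over a disjoint union of cliques, $\wt{\bigcup_{i=1}^k C_{r^*_i}(v_i)} = \sum_{i=1}^k \wt{C_{r^*_i}(v_i)}$.

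Next, let $\opt_i$ denote the weight of an optimum weighted clique partition of the subgraph induced by $B_{r^*_i}(v_i)$ considered in isolation. Since \mbox{CP} returns a factor-$(2+\gamma)$ partition, $\wt{C_{r^*_i}(v_i)} \leq (2+\gamma)\,\opt_i$. Combined with the display above, proving the lemma reduces to the single inequality $\sum_{i=1}^k \opt_i \leq \opt$, which decouples the approximation factor from the lower-bounding argument.

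To establish $\sum_i \opt_i \leq \opt$, I would fix an optimum weighted clique partition $\mathcal{O}$ of all of $G$ and, for each $i$, let $\OPT_i$ be the cliques of $\mathcal{O}$ meeting $B_{r^*_i}(v_i)$. By the lemma immediately preceding this one (the weighted counterpart of Lemma~\ref{lemma:disjointballs}), no clique can contain vertices from two different balls, so the families $\OPT_i$ are pairwise disjoint subcollections of $\mathcal{O}$, giving $\sum_{i=1}^k \wt{\OPT_i} \leq \wt{\mathcal{O}} = \opt$. Restricting each clique to the ball, the sets $\{\,C \cap B_{r^*_i}(v_i) : C \in \OPT_i\,\}$ form a clique partition of $B_{r^*_i}(v_i)$, because every vertex of the ball lies in exactly one clique of $\mathcal{O}$ and that clique belongs to $\OPT_i$. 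Since the heaviest vertex over a subset is no heavier than over the whole clique, $\wt{C \cap B_{r^*_i}(v_i)} \leq \wt{C}$, so this restricted partition has weight at most $\wt{\OPT_i}$; by optimality of $\opt_i$ over the ball, $\opt_i \leq \wt{\OPT_i}$. Summing and chaining the inequalities yields $\wt{\bigcup_{i=1}^k C_{r^*_i}(v_i)} = \sum_i \wt{C_{r^*_i}(v_i)} \leq (2+\gamma)\sum_i \opt_i \leq (2+\gamma)\sum_i \wt{\OPT_i} \leq (2+\gamma)\,\opt$.

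The main obstacle, and the reason the unweighted proof does not carry over verbatim, is that $C_{r^*_i}(v_i)$ is no longer an exact optimum of its ball, so it cannot be compared directly against $\OPT_i$; introducing the intermediary $\opt_i$ is what makes the bound go through. The one point needing care is that a clique of $\OPT_i$ may contain vertices outside $B_{r^*_i}(v_i)$ (lying in the annulus out to radius $r^*_i+2$, or in vertices not yet assigned to any ball). The restriction step absorbs this, and it works precisely because the max-weight definition makes restriction to a subset only \emph{decrease} weight, which is exactly the direction required.
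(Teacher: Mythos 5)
Your proof is correct and follows essentially the same route as the paper's: pairwise disjointness of the balls (the weighted analogue of Lemma~\ref{lemma:disjointballs}), the decomposition of a global optimum into disjoint subfamilies $\OPT_i$ of cliques meeting each ball, and the factor-$(2+\gamma)$ guarantee of \mbox{CP}. The only difference is that you explicitly introduce the standalone ball optimum $\opt_i$ and justify $\opt_i \leq \wt{\OPT_i}$ via the restriction argument (cliques may stick out of the ball, and restriction only decreases max-weight), a step the paper's proof uses implicitly when it writes $(2+\gamma)\cdotp\wt{\OPT_i} \geq \wt{C_{r^*_i}(v_i)}$.
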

\begin{proof}
Note that $B_{r^*_i}(v_i)$ is obtained by constructing a BFS tree rooted at
vertex $v_i$ up to some depth $r^*_i$.  Since the algorithm
removes a super-set, $B_{r^*_i+2}(v_i)$, which has two more levels of
the BFS tree, using the previous lemma there is no edge between any 
two nodes $v\in B_{r^*_i}(v_i)$ and $u\in B_{r^*_j}(v_j)$ for any pair $i\not=j$.
So, no single clique in an optimum solution
can contain vertices from distinct $B_{r^*_i}(v_i)$ and $B_{r^*_j}(v_j)$. 
Consider the subset of cliques in an optimal clique partition of $G$ that
intersect $B_{r^*_i}(v_i)$ and call this subset $\OPT_i$.  The argument 
above shows that $\OPT_i$ is disjoint from $\OPT_j$.
Also, each $\OPT_i$ contains all the vertices in $B_{r^*_i}(v_i)$.
Since $C_{r^*_i}(v_i)$ is a factor-$(2+\gamma)$ approximation 
for $B_{r^*_i}(v_i)$,
$(2+\gamma) \cdotp \wt{\OPT_i} \geq \wt{C_{r^*_i}(v_i)}$. 
The lemma immediately follows by observing that $\OPT_i$ and $\OPT_j$ 
are disjoint.
\end{proof}

We can relate the cost of our clique partition
to $\opt$ as follows.

\begin{lemma}
\label{lem:2approxcost}
If $\wt{C_{r^*_i+2}(v_i)} \leq (1+\gamma) \wt{C_{r^*_i}(v_i)}$, then
$\wt{\displaystyle\bigcup_{i=1}^{k} C_{r^*_i+2}(v_i)} \leq
(2+\eps)\opt$.
\end{lemma}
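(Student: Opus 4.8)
The plan is to chain the stopping condition of the inner loop together with the lower bound already established in Lemma~\ref{lem:2approx}, and then to discharge a single scalar inequality in $\gamma$. First I would note that since the outer loop removes $B_{r^*_i+2}(v_i)$ from $V$ at each iteration, the vertex sets $B_{r^*_i+2}(v_i)$, $i=1,\ldots,k$, are pairwise disjoint and partition $V$; hence $\bigcup_{i=1}^k C_{r^*_i+2}(v_i)$ is a clique partition of $G$, and because $\wt{\cdot}$ simply sums over the (now distinct) cliques, $\wt{\bigcup_{i=1}^k C_{r^*_i+2}(v_i)}=\sum_{i=1}^k \wt{C_{r^*_i+2}(v_i)}$. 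The same additivity over the nested balls $B_{r^*_i}(v_i)\subseteq B_{r^*_i+2}(v_i)$ gives $\sum_{i=1}^k \wt{C_{r^*_i}(v_i)}=\wt{\bigcup_{i=1}^k C_{r^*_i}(v_i)}$.

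Next I would apply the hypothesis $\wt{C_{r^*_i+2}(v_i)}\le(1+\gamma)\wt{C_{r^*_i}(v_i)}$ term by term and then invoke Lemma~\ref{lem:2approx}:
\[
\wt{\bigcup_{i=1}^k C_{r^*_i+2}(v_i)}=\sum_{i=1}^k \wt{C_{r^*_i+2}(v_i)}\le(1+\gamma)\sum_{i=1}^k \wt{C_{r^*_i}(v_i)}=(1+\gamma)\wt{\bigcup_{i=1}^k C_{r^*_i}(v_i)}\le(1+\gamma)(2+\gamma)\,\opt.
\]

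It then remains to verify $(1+\gamma)(2+\gamma)\le 2+\eps$. Expanding gives $(1+\gamma)(2+\gamma)=2+3\gamma+\gamma^2$, so the inequality is equivalent to $\gamma^2+3\gamma-\eps\le 0$. The positive root of $\gamma^2+3\gamma-\eps$ is exactly $\frac{\sqrt{9+4\eps}-3}{2}$, which is precisely the upper bound imposed on $\gamma$; since the quadratic is increasing for $\gamma>0$ and vanishes at that root, every admissible $\gamma$ makes it non-positive. Hence $(1+\gamma)(2+\gamma)\le 2+\eps$, and the chain above yields $\wt{\bigcup_{i=1}^k C_{r^*_i+2}(v_i)}\le(2+\eps)\opt$.

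Once the structural lemma (Lemma~\ref{lem:2approx}) is in hand the argument is essentially bookkeeping, so I do not expect a genuine obstacle. The one step worth presenting carefully is the choice of the admissible range for $\gamma$: the seemingly awkward bound $0<\gamma\le\frac{\sqrt{9+4\eps}-3}{2}$ is engineered precisely so that the multiplicative loss $(1+\gamma)$ from the ball-growing stopping rule and the $(2+\gamma)$ loss from the per-ball approximation compound to at most $2+\eps$. Making explicit that this quadratic-root calculation is exactly what pins down the constant in the theorem is the part I would take care to state cleanly.
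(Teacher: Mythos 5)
Your proof is correct and takes essentially the same route as the paper: additivity of the weight over the disjoint balls, term-by-term application of the stopping condition, and then Lemma~\ref{lem:2approx}, yielding the bound $(1+\gamma)(2+\gamma)\cdot\opt$. You actually go one step further than the paper, whose displayed chain stops at $(2+\gamma)(1+\gamma)\cdot\opt$ and leaves implicit the verification that $(1+\gamma)(2+\gamma)\le 2+\eps$; your explicit quadratic-root calculation showing that the constraint $0<\gamma\le\frac{\sqrt{9+4\eps}-3}{2}$ is exactly what makes this hold is a welcome clarification, not a deviation.
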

\begin{proof}
$$\wt{\displaystyle\bigcup_{i=1}^{k} C_{r^*_i+2}(v_i)}= 
				\displaystyle\sum_{i=1}^{k} \wt{C_{r^*_i+2}(v_i)}
		\leq (1+\gamma) \cdotp \displaystyle\sum_{i=1}^{k}\wt{C_{r^*_i}(v_i)}
		\leq (2+\gamma)(1+\gamma) \cdotp \opt,
$$
where the last inequality uses Lemma~\ref{lem:2approx}.
\end{proof}

Next, we show that the inner ``while-loop" terminates in
$\tilde{O}(\frac{1}{\gamma})$, that is each $r^*_i$ is bounded by 
$\tilde{O}(\frac{1}{\gamma})$. This is similar to the proof of Lemma~\ref{lemma:diambound}.
Since the while loop terminates, $r^*_i$ exists and by definition 
of $r^*_i$, it must be the case that for all smaller values of
$r < r^*_i$, $\wt{C_r(v_i)} > (1+\gamma) \cdotp \wt{C_{r-2}(v_i)}$. 
Because the diameter of $B_r(v_i)$
is $O(r)$, if $B_r(v)$ is a UDG, there is a realization of it in which 
all the points fit into a $r\times r$ grid. Also, since $v_i$ is
a heaviest vertex in the (residual) graph, there is a clique partition
whose weight is at most $\alpha \cdotp \wt{v_i} \cdotp r^2$.
Therefore, $\wt{C_r(v_i)} < \alpha \cdotp \wt{v_i} \cdotp r^2$, 
for some constant $\alpha$.  So:
$$\alpha \cdotp \wt{v_i} \cdotp r^2 > \wt{C_r(v_i)} 
				> (1+\gamma) \cdotp \wt{C_{r-2}(v_i)}
				> \ldots 
				> (1+\gamma)^{\frac{r}{2}} \cdotp \wt{C_0(v_i)} 
				= \wt{v_i} \cdotp \left(\sqrt{1+\gamma}\right)^r,
$$ which implies  $\alpha \cdotp r^2 > \left(\sqrt{1+\gamma}\right)^r$, for the case that $r$ is even.
If $r$ is odd we obtain
$\alpha \cdotp r^2 > \left(\sqrt{1+\gamma}\right)^{r-1}$.
Thus, the following lemma easily follows:
\begin{lemma}\label{lemma:diamboundwtd}
There is a constant $c_0 > 0$ such that for each $i$: $r^*_i \leq 
c_0/\gamma \cdotp \log{1/\gamma}$.
\end{lemma}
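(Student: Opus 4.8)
The plan is to solve the chain of inequalities already established in the paragraph preceding the lemma for the unknown depth $r$, thereby bounding the largest $r$ for which the inner while-loop has not yet terminated. Recall that that derivation produced two matching bounds valid for every $r < r^*_i$: an upper bound $\wt{C_r(v_i)} < \alpha \cdot \wt{v_i} \cdot r^2$ (since a realization of $B_r(v_i)$ fits in an $r \times r$ grid and $v_i$ is a heaviest residual vertex, so the grid-cell partition has weight at most $\alpha \cdot \wt{v_i} \cdot r^2$), and a lower bound obtained by telescoping the per-step growth factor $(1+\gamma)$ down to $\wt{C_0(v_i)} = \wt{v_i}$ for even $r$, respectively $\wt{C_1(v_i)} \ge \wt{v_i}$ for odd $r$. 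Cancelling the common positive factor $\wt{v_i}$ leaves the purely numerical inequality $\alpha \cdot r^2 > (\sqrt{1+\gamma})^{r-1}$, where I use the odd-case exponent $r-1$ as a uniform lower bound that is valid in both parities.

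First I would take natural logarithms to linearize the exponential, rewriting the inequality as $\ln\alpha + 2\ln r > \tfrac{r-1}{2}\ln(1+\gamma)$. Since $0 < \gamma \le 1$, I would apply the elementary estimate $\ln(1+\gamma) \ge \gamma/2$ to lower-bound the right-hand side by $\tfrac{(r-1)\gamma}{4}$, which rearranges to $r \le 1 + \tfrac{4\ln\alpha}{\gamma} + \tfrac{8}{\gamma}\ln r$, i.e.\ a bound of the shape $r \le b\ln r + O(1/\gamma)$ with $b = 8/\gamma$.

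The one genuinely non-formula step is to untangle this self-referential bound, in which $r$ appears on both sides. The standard way I would carry this out is to confirm that the claimed bound is consistent by substituting a trial value $r = K\cdot b\ln b$ for a sufficiently large absolute constant $K$ and checking that it \emph{violates} the inequality: since $\ln(Kb\ln b) = \ln K + \ln b + \ln\ln b$ is dominated by $\ln b$, the right-hand side $b\ln r + O(1/\gamma)$ is $b\ln b\,(1+o(1))$, which for $K$ large and $\gamma$ small is strictly smaller than $Kb\ln b = r$. Hence every $r$ that actually satisfies the inequality obeys $r < Kb\ln b = O\!\left(\tfrac{1}{\gamma}\ln\tfrac{1}{\gamma}\right)$, the $\ln\ln$ correction being lower order and absorbed into the constant.

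Applying this to the largest integer $r < r^*_i$ yields $r^*_i \le c_0/\gamma \cdot \log(1/\gamma)$ for a suitable absolute constant $c_0$, which is exactly the claim. I do not expect any real obstacle: the substantive content — exhibiting matching exponential-in-$r$ lower and polynomial-in-$r^2$ upper bounds on $\wt{C_r(v_i)}$ — was done in the paragraph above the lemma, and this proof is merely the bookkeeping that converts those bounds into an explicit bound on $r$. The argument is identical in form to the unweighted Lemma~\ref{lemma:diambound}. The only points meriting a line of care are confirming $\wt{C_0(v_i)} = \wt{v_i}$ and $\wt{C_1(v_i)} \ge \wt{v_i}$ (both immediate because $v_i$ is a heaviest vertex of the residual graph), so that the telescoped lower bound bottoms out at $\wt{v_i}$, and treating the odd/even parity uniformly via the weaker exponent $r-1$ throughout.
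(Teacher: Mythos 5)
Your proposal is correct and follows the paper's own route: the paper derives exactly the chain $\alpha\cdot\wt{v_i}\cdot r^2 > \wt{C_r(v_i)} > (1+\gamma)^{\lfloor r/2\rfloor}\cdot\wt{v_i}$ in the paragraph before the lemma (using the $r\times r$ grid realization and $v_i$ being a heaviest residual vertex) and then declares the lemma to ``easily follow,'' which is precisely the logarithmic bookkeeping step you carry out explicitly. Your resolution of the self-referential bound $r \leq b\ln r + O(1/\gamma)$ with $b = 8/\gamma$, including the uniform odd/even treatment via the exponent $r-1$ and the base cases $\wt{C_0(v_i)} = \wt{v_i}$ and $\wt{C_1(v_i)} \geq \wt{v_i}$, is a sound and complete filling-in of the details the paper omits.
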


In Subsection \ref{sec2gamma}, we show the algorithm CP that given $B_{r}(v)$
and an upper bound $\ell$ on $|C_r(v)|$, either
computes a clique partition (which is within a factor $2+\gamma$ of $\opt$ 
if $B_r(v)$ is a UDG) 
or detects that the graph is not UDG; the algorithm runs
in time $n^{O(\ell^2)}$. 
By the above arguments, if $B_r(v)$ is a unit disk graph then 
there is a constant $c_1 > 0$ such that:

\begin{equation}\label{c1-const-wtd}
|C_r(v)| = O({r_i^*}^2) \leq 
		c_1 \cdotp \frac{c_0^2}{\gamma^2}\log^2\frac{1}{\gamma}
\end{equation}

We can 
set $\ell = \lceil c_1 \frac{c_0^2}{\gamma^2}\log^2\frac{1}{\gamma} \rceil$
for any invocation of OPT-CP as an upper bound,
where $c_1$ is the constant in $O({r^*_i}^2)$.
So, the running time of the algorithm is 
$n^{\tilde{O}(1/\eps^4)}$.

\subsection{Existence of a Small Clique Partition of $B_r(v)$ having 
Near-optimal Weight}\label{sec:nearopt}
Unlike the unweighted case, an optimal weighted clique
partition in a small region may contain a large number of cliques.
Yet, there exists a partition whose weight is within a factor
$(1+\frac{\gamma}{2})$ of the minimum weight which contains few cliques
(where by ``few'' we mean $\ell$ as in Algorithm 3).  The existence of 
a light and small partition allows us to enumerate them 
in the same manner in the algorithm of subsection \ref{secbrv}, 
yielding a $(2+\gamma)$-approximation for the problem
instance in a ball of small radius. 
In the following, let $r \in \tilde{O}(\frac{1}{\gamma})$;
we focus on the subproblem that lies in some $B_r(v)$. 
Recall that any ball of radius $r$ can be partitioned into
$O(r^2)$ cliques (Lemma~\ref{lemma:fewcliques}).
We begin with a simple lemma which states
that for any clique partition $\mathcal{C}$, if the set of vertices can
be  be covered by another clique partition ${\mathcal C}'$ containing 
$x$ cliques then the sum of the weights of the $x$
cliques in ${\mathcal C}'$ is not significantly more than the 
weight of the heaviest
clique in $\mathcal{C}$.

\begin{lemma}
For any collection of disjoint cliques 
$\mathcal{C} = \{C_1, C_2, \ldots, C_t\}$ having
weights such that $\wt{C_1} \geq \wt{C_2} \geq \ldots \geq \wt{C_t}$ suppose
the vertices of $\mathcal C$ can be partitioned into $x$ cliques  
$\mathcal{C}^\prime = \{C_1^\prime,C_2^\prime, \ldots, C_{x}^\prime\}$.
Then $\wt{\mathcal{C}^\prime} = \wt{\bigcup_{l=1}^{x} C_l^\prime}
= \sum_{l=1}^{x} \wt{C_l^\prime} \leq x \cdotp \wt{C_1}$
\label{lemma:twoclique}
\end{lemma}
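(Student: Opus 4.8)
The plan is to observe that the two partitions $\mathcal{C}$ and $\mathcal{C}'$ cover exactly the same vertex set, and then to bound the weight of every clique in $\mathcal{C}'$ by the single quantity $\wt{C_1}$. First I would fix notation: since $\mathcal{C}'$ is obtained by re-partitioning the vertices of $\mathcal{C}$, we have $\bigcup_{l=1}^{x} C_l' = \bigcup_{i=1}^{t} C_i =: V'$, so both partitions range over the common point set $V'$.

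The key step is to identify $\wt{C_1}$ as the largest vertex weight occurring anywhere in $V'$. Indeed, by definition $\wt{C_i}$ is the weight of a heaviest vertex of $C_i$, and the hypothesis $\wt{C_1} \geq \wt{C_2} \geq \cdots \geq \wt{C_t}$ guarantees that a heaviest vertex of $C_1$ is in fact a heaviest vertex over all of $V'$. Hence $\max_{v \in V'} \wt{v} = \wt{C_1}$. With this in hand, I would apply the definition of clique weight one more time: each $C_l' \in \mathcal{C}'$ satisfies $\wt{C_l'} = \max_{v \in C_l'} \wt{v} \leq \max_{v \in V'} \wt{v} = \wt{C_1}$, because every vertex of $C_l'$ lies in $V'$.

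Summing this bound over the $x$ cliques of $\mathcal{C}'$ then yields $\wt{\mathcal{C}'} = \sum_{l=1}^{x} \wt{C_l'} \leq x \cdot \wt{C_1}$, which is exactly the claimed inequality.

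I do not expect any genuine obstacle here; the statement is essentially a counting argument. The one point meriting care is the identification in the key step that $\wt{C_1}$ coincides with the global maximum vertex weight over $V'$, which relies only on the assumed ordering of the weights and on the definition of the weight of a clique as the maximum weight of a vertex it contains. Notably, no geometric structure, adjacency information, or property of $\mathcal{C}'$ beyond its covering $V'$ is used, so the lemma holds for arbitrary cliques and re-partitions.
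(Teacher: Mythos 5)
Your proposal is correct and follows essentially the same argument as the paper: both proofs reduce the claim to the observation that, since $\mathcal{C}'$ repartitions the same vertex set, the weight of every clique in $\mathcal{C}'$ is bounded by the maximum vertex weight, which equals $\wt{C_1}$, and then sum over the $x$ cliques. The only cosmetic difference is that the paper phrases this by ordering $\mathcal{C}'$ and noting $\wt{C_1^\prime} = \wt{C_1}$, whereas you bound each $\wt{C_l^\prime}$ directly by the global maximum vertex weight.
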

\begin{proof}
Without loss of generality, let 
$\wt{C_1^\prime} \geq \wt{C_2^\prime} \geq \ldots \geq \wt{C_{x}^\prime}$.
Since $\mathcal{C}^\prime$ partitions vertices in $\mathcal{C}$,
$\wt{C_1^\prime} = \wt{C_1}$.  Since $|\mathcal{C}^\prime| = x$,
$\wt{\mathcal{C}^\prime} = \sum_{l=1}^{x} \wt{C_l^\prime} \leq
x \cdotp \wt{C_1^\prime} = x \cdotp \wt{C_1}$.
\end{proof}

In an optimal partition of a ball of
radius $r$, the sum of 
the weights of the lighter cliques is not significantly more
than its weight.

\begin{lemma}
Let $\mathcal{C} = \{C_1, C_2, \ldots, C_t\}$ be an optimal clique partition 
and let $\wt{C_1} \geq \wt{C_2} \geq \ldots \geq \wt{C_t}$. Suppose there
is another clique partition $\mathcal C'=\{C'_1,\ldots,C'_x\}$ of the vertices of $\mathcal C$.
Then, for every $1\leq i<t$: $(x-1) \cdotp \wt{C_i} \geq \sum_{l=i+1}^t \wt{C_l}$. 
\label{lemma:heavycost}
\end{lemma}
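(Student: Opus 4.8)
The plan is to combine a short exchange argument with Lemma~\ref{lemma:twoclique}. Fix an index $i$ with $1 \leq i < t$ and let $S_i = \bigcup_{l=i}^{t} C_l$ be the vertex set covered by the $t-i+1$ lightest cliques of the optimal partition $\mathcal{C}$. Since the cliques are sorted so that $\wt{C_l} \leq \wt{C_i}$ for every $l \geq i$, the heaviest vertex of $S_i$ lies in $C_i$ and has weight exactly $\wt{C_i}$. My target inequality $\sum_{l=i+1}^{t} \wt{C_l} \leq (x-1)\cdot\wt{C_i}$ will follow once I establish the bound $\sum_{l=i}^{t} \wt{C_l} \leq x\cdot\wt{C_i}$ and peel off the single term $\wt{C_i}$ from both sides.

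First I would restrict the competing partition to $S_i$. Let $B = \{\,C'_j \cap S_i : C'_j \cap S_i \neq \emptyset\,\}$; each part is a subset of a clique (hence a clique), the parts are disjoint, and together they cover $S_i$, so $B$ is a clique partition of $S_i$ consisting of at most $x$ cliques. The heart of the first step is an exchange argument proving $\sum_{l=i}^{t} \wt{C_l} \leq \wt{B}$: were it the case that $\wt{B} < \sum_{l=i}^{t} \wt{C_l}$, then replacing the sub-collection $\{C_i, \ldots, C_t\}$ inside $\mathcal{C}$ by $B$ would produce the clique partition $\{C_1, \ldots, C_{i-1}\} \cup B$ of the \emph{same} vertex set, of weight $\sum_{l=1}^{i-1}\wt{C_l} + \wt{B} < \sum_{l=1}^{t}\wt{C_l} = \wt{\mathcal{C}}$, contradicting the optimality of $\mathcal{C}$.

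The second step applies Lemma~\ref{lemma:twoclique} verbatim: the collection $\{C_i, C_{i+1}, \ldots, C_t\}$ is already sorted by non-increasing weight with heaviest clique $C_i$, and its vertex set $S_i$ is partitioned by $B$ into at most $x$ cliques, so the lemma yields $\wt{B} \leq x \cdot \wt{C_i}$. Chaining the two steps gives $\sum_{l=i}^{t} \wt{C_l} \leq \wt{B} \leq x\cdot\wt{C_i}$, and moving $\wt{C_i}$ to the right-hand side delivers the claim.

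I expect the one genuinely substantive point to be the exchange argument of the first step: recognizing that the global optimality of $\mathcal{C}$ forces its tail $\{C_i,\ldots,C_t\}$ to be no heavier than the restriction of \emph{any} competing partition to $S_i$. Everything else is bookkeeping, and the sharpening of the factor from $x$ down to $x-1$ is precisely the $\wt{C_i}$ term removed at the end. A minor point to check is that the heaviest clique of $B$ has weight exactly $\wt{C_i}$ (so that the role of $\wt{C_1}$ in Lemma~\ref{lemma:twoclique} is played by $\wt{C_i}$ here), which holds because the globally heaviest vertex of $S_i$ lies in one part of $B$ while every vertex of $S_i$ has weight at most $\wt{C_i}$.
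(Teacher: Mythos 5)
Your proof is correct, and at its core it uses the same two ingredients as the paper's own proof: the optimality of $\mathcal{C}$ exploited through a replacement (exchange) argument, and Lemma~\ref{lemma:twoclique} to bound the weight of the competing cover of the tail by (number of cliques)$\,\cdot\,\wt{C_i}$. The difference is in execution, and it is worth noting. The paper argues by contradiction and works with a prefix sub-collection $\{C'_1,\ldots,C'_{x'}\}$ of $\mathcal{C}'$ that merely \emph{covers} $\bigcup_{l=j}^t C_l$; it applies Lemma~\ref{lemma:twoclique} to these whole cliques and only at the very end removes vertices that appear in two cliques of the assembled family $\{C_1,\ldots,C_{j-1},C'_1,\ldots,C'_{x'}\}$. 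This is loose: a clique $C'_m$ in that prefix may contain vertices of the heavy cliques $C_1,\ldots,C_{j-1}$, in which case $\wt{C'_m}$ can exceed $\wt{C_j}$ and the asserted bound $\wt{\mathcal{C}'_j}\leq x'\cdot\wt{C_j}$ does not follow from Lemma~\ref{lemma:twoclique} as stated (the lemma requires a partition of exactly the tail's vertex set). Your proof patches precisely this point by restricting $\mathcal{C}'$ to $S_i$ \emph{first}, i.e., working with $B=\{C'_j\cap S_i\neq\emptyset\}$, which is a genuine clique partition of $S_i$ whose heaviest part has weight exactly $\wt{C_i}$; the application of Lemma~\ref{lemma:twoclique} is then airtight, and the exchange $\{C_1,\ldots,C_{i-1}\}\cup B$ cleanly contradicts optimality. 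In effect, your restriction step is what the paper's final ``remove duplicated vertices'' step implicitly needs to do (keeping duplicated vertices in the head cliques), so your write-up is both a direct rather than contradiction-style argument and a more rigorous one; the two proofs buy the same inequality, but yours closes a gap the paper leaves open.
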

\begin{proof}
By way of contradiction, suppose there exists an index $1\leq j<t$
such that $(x-1) \cdotp \wt{C_j} < \sum_{l=j+1}^t \wt{C_l}$.  
Because $\bigcup_{l=1}^t C_l$ can be covered by $\mathcal{C}^\prime$, so can
$\bigcup_{l=j}^t C_l$.  Let the $2\leq x' \leq x$ be the smallest index such that
$\mathcal{C}_j^\prime = \{C_{1}^\prime, C_{2}^\prime, \ldots, 
C_{x'}^\prime\}$ covers $\bigcup_{l=j}^t C_l$.
On the other hand, $(x'-1) \cdotp \wt{C_j} \leq (x-1)\cdotp \wt{C_j} < \sum_{l=j+1}^t \wt{C_l}$,
which implies 
\begin{equation}\label{eqn11}
\opt  = \sum_{l=1}^t \wt{C_l} > \sum_{l=1}^{j} \wt{C_l} + (x'-1) \cdotp \wt{C_j}
= \sum_{l=1}^{j-1} \wt{C_l} +x'\wt{C_j}.
\end{equation}
By Lemma~\ref{lemma:twoclique}, 
$\wt{\mathcal{C}_j^\prime} \leq x' \cdotp \wt{C_j}$. This, 
combined with inequality (\ref{eqn11}) implies
$\opt > \sum_{l=1}^{j-1} \wt{C_l} + \sum_{l=1}^{x'} \wt{C_{l}^\prime}$.
Therefore the cliques in ${\mathcal C''}=\{C_1, C_2, \ldots, C_{j-1}, C_1^\prime, C_2^\prime, \ldots,
C_{x'}^\prime\}$ cover all the nodes of cliques in $\mathcal C$ and has
cost smaller than $\opt$. If a vertex belongs to two or more cliques in $\mathcal C''$ 
we remove it from all but one of them to obtain a clique partition with cost no more than cost of $\mathcal C''$
which is smaller than $\opt$. This completes the proof.
\end{proof}

We now are ready to prove the main result of 
this section which
states that for any optimal weighted clique partition of a ball
of radius $r$, there exists another clique partition whose weight
is arbitrarily close to the weight of the optimal partition, but has 
$O(r^2)$ cliques in it.  Since the radius of the ball within which
the subproblem lies is small, $r \in \tilde{O}(\frac{1}{\gamma})$, this 
means that if we were to enumerate all the clique partitions of the
subproblem up to $O(r^2)$, we will see one whose weight is 
arbitrarily close to the weight of an optimal clique.
Choosing a lightest one from 
amongst all such cliques guarantees that we will choose a one
whose weight is arbitrarily close to the optimal weight.

\begin{lemma}\label{lemgamma}
Let $\gamma > 0$ and $r \in \tilde{O}(1/\gamma)$ be two constants.
Let $\mathcal{C} = \{C_1, C_2, \ldots, C_t\}$ be an optimal weighted
clique partition of $B_r(v)$ and let ${\mathcal C'}=\{C'_1,\ldots,C'_x\}$
be another clique partition of vertices of $\mathcal C$ with $x \in O(r^2)$.  
Let $\wt{C_1} \geq \wt{C_2} \geq \ldots \geq \wt{C_t}$.
Then, there is a partition of vertices of $\mathcal C$ into
at most $j+x$ cliques for some constant $j = j(\gamma)$, with cost at most 
$(1+\frac{\gamma}{2})\opt$.
\label{lemma:boundedcliques}
\end{lemma}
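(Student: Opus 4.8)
The plan is to construct the required partition explicitly by \emph{keeping the $j$ heaviest cliques of $\mathcal{C}$ intact and re-covering all the lighter ones by the cheap partition $\mathcal{C}'$}. I would fix a threshold index $j$ (whose value, a function of $\gamma$, is pinned down only at the very end) and split $\mathcal{C}$ into a ``head'' $C_1,\dots,C_j$ and a ``tail'' $T=C_{j+1}\cup\cdots\cup C_t$. For the tail I would intersect each clique $C'_l$ of $\mathcal{C}'$ with $T$; since a subset of a clique is a clique and $\mathcal{C}'$ partitions all the vertices, the sets $C'_l\cap T$ form a clique partition of $T$ using at most $x$ cliques. Taking $\{C_1,\dots,C_j\}$ together with the nonempty restrictions $\{C'_l\cap T\}$ yields a genuine clique partition of the vertices of $\mathcal{C}$ into at most $j+x$ cliques, which settles the cardinality requirement.

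Next I would bound the weight. The head contributes $\sum_{l=1}^{j}\wt{C_l}\le\opt$, where throughout $\opt=\sum_{l=1}^{t}\wt{C_l}$ denotes the optimal weight of $B_r(v)$. For the tail, every vertex of $T$ lies in some $C_l$ with $l\ge j+1$, so the heaviest vertex of $T$ has weight at most $\wt{C_{j+1}}$; hence each restricted clique $C'_l\cap T$ has weight at most $\wt{C_{j+1}}$, and there are at most $x$ of them. This is exactly Lemma~\ref{lemma:twoclique} applied to the tail cliques $C_{j+1}\ge\cdots\ge C_t$ covered by the at-most-$x$ restrictions, giving weight at most $x\cdot\wt{C_{j+1}}$. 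Thus the total weight of the constructed partition is at most $\opt+x\cdot\wt{C_{j+1}}$, so it suffices to force $x\cdot\wt{C_{j+1}}\le\frac{\gamma}{2}\,\opt$.

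To achieve this I would exploit the sorting of the weights: the first $j+1$ cliques each have weight at least $\wt{C_{j+1}}$, so $\opt\ge\sum_{l=1}^{j+1}\wt{C_l}\ge(j+1)\wt{C_{j+1}}$, i.e. $\wt{C_{j+1}}\le\opt/(j+1)$. Consequently $x\cdot\wt{C_{j+1}}\le\frac{x}{j+1}\opt$, and choosing $j=\lceil 2x/\gamma\rceil$ makes this at most $\frac{\gamma}{2}\opt$, giving total weight at most $(1+\frac{\gamma}{2})\opt$. (If $t\le j$ there is nothing to prove, since $\mathcal{C}$ itself already has at most $j+x$ cliques.) Because $x\in O(r^2)$ with $r\in\tilde{O}(1/\gamma)$, this $j$ is a constant depending only on $\gamma$, as required; the tail-sum estimate $\sum_{l=j+1}^{t}\wt{C_l}\le(x-1)\wt{C_j}$ of Lemma~\ref{lemma:heavycost} can alternatively be used to control the discarded weight, but the averaging bound above already suffices.

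The geometry plays no role in this lemma; it is a purely combinatorial weight-balancing argument resting on Lemmas~\ref{lemma:twoclique} and~\ref{lemma:heavycost}, so I do not expect a genuine mathematical obstacle. The two points where care is needed — and where the argument is most error-prone rather than hard — are (i) checking that the restrictions $C'_l\cap T$ really do form a disjoint partition of exactly $T$, so that nothing is double counted and the cardinality stays at $j+x$, and (ii) tracking the dependencies, so that the final $j=j(\gamma)$ is legitimately a constant determined by $x$ (hence by $r$, hence by $\gamma$) and never grows with the instance.
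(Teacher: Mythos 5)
Your proof is correct, but it takes a genuinely different and more elementary route than the paper's. Both arguments keep a constant number of heaviest cliques of $\mathcal{C}$ and re-cover the remaining vertices using $\mathcal{C}'$; the difference is entirely in how the cost of the tail cover is charged to $\opt$. The paper covers $\bigcup_{l\geq j} C_l$ by a subfamily $\{C_1',\ldots,C_{x'}'\}$, bounds its cost by $x'\cdotp\wt{C_j}$ via Lemma~\ref{lemma:twoclique}, and then controls $\wt{C_j}$ by applying Lemma~\ref{lemma:heavycost} recursively, yielding the geometric-decay bound $\opt \geq \left(\frac{x'}{x'-1}\right)^{j-1}\wt{C_j}$; this is precisely where the optimality of $\mathcal{C}$ is used, since Lemma~\ref{lemma:heavycost} is proved by exhibiting a partition cheaper than $\opt$. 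You instead bound the tail cover by $x\cdotp\wt{C_{j+1}}$ and invoke only the sorting/averaging fact $\opt \geq \sum_{l=1}^{j+1}\wt{C_l} \geq (j+1)\wt{C_{j+1}}$, so you bypass Lemma~\ref{lemma:heavycost} and the optimality of $\mathcal{C}$ altogether: your argument works verbatim for any clique partition $\mathcal{C}$ of cost $\opt$, which is a cleaner and slightly more general statement. The trade-off is quantitative: the paper's geometric decay permits $j \approx x\log(x/\gamma) \in \tilde{O}(1/\gamma^2)$, whereas your choice $j = \lceil 2x/\gamma\rceil \in \tilde{O}(1/\gamma^3)$ is larger by roughly a $1/\gamma$ factor. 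Since the surrounding algorithm (Algorithm~3 and procedure CP) sets $\ell \in \tilde{O}(1/\gamma^2)$ as the bound on the size of a near-optimal partition and pays $n^{O(\ell^2)}$ in the enumeration, your constant would force a larger $\ell$ and hence a worse exponent in the running time; the lemma as stated (and the $(2+\eps)$ guarantee) is unaffected, since any constant $j(\gamma)$ suffices. A final stylistic difference: you intersect the cliques of $\mathcal{C}'$ with the tail up front, while the paper takes whole cliques and deletes duplicated vertices at the end; these are equivalent, and your version makes the disjointness bookkeeping explicit.
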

\begin{proof}
Without loss of generality, we assume that both $x$ and $t$ are at least two 
(as if $B_r(v)$ is a clique we are done). Consider an arbitrary value of $j\leq t$. 
Since $\bigcup_{l=1}^t C_l$ can be covered by $x$ cliques in $\mathcal C'$,
there is an index $x'$ ($2\leq x'\leq x$) such that $\bigcup_{l=j}^t C_l$ can be covered 
by $\mathcal{C}_j^\prime = \{C_1^\prime, C_2^\prime, \ldots, C_{x'}^\prime\}$.
By applying Lemma~\ref{lemma:heavycost} repeatedly:

\[
	\opt  \geq  \sum_{l=1}^j \wt{C_l} 
			  \geq  \frac{1}{x'-1} \left(
			 				\sum_{l=2}^j \wt{C_l}\right) + \sum_{l=2}^j \wt{C_l} 
			  \geq  \ldots 
			  \geq  \left(\frac{x'}{x'-1}\right)^{j-1} \cdotp \wt{C_j} 
\]
\begin{equation}\label{eqncj}
	\Rightarrow  \opt \frac{(x'-1)^{j-1}}{x'^{j-2}} \geq  x'\cdotp \wt{C_j}
\end{equation}
Using inequality (\ref{eqncj}):

\begin{equation}\label{eqn12}
	\opt + \frac{\opt\cdotp (x'-1)^{j-1}}{x'^{j-2}}  \geq  
					\sum_{l=1}^{j-1} \wt{C_l} + x' \cdotp \wt{C_j} 
				 \geq  \sum_{l=1}^{j-1} \wt{C_l} + 
						\sum_{l=1}^{x'}\wt{C_l^\prime},
\end{equation}
where the second inequality follows by applying Lemma~\ref{lemma:twoclique}.
Let ${\mathcal C''}=\{C_1,\ldots,C_{j-1},C'_1,\ldots,C'_{x'}\}$. Thus, the cliques
in $\mathcal C''$ cover all the vertices of $\mathcal C$ and has total cost at most
$\left( 1 + \frac{(x'-1)^{j-1}}{x'^{j-2}}\right) \opt$ by inequality (\ref{eqn12}).
If a vertex belongs to two or more cliques in ${\mathcal C''}$ 
we remove it from all but one of them
arbitrarily to obtain a clique partition of size $j-1+x'$ and whose total cost is upper bounded by
$\left( 1 + \frac{(x'-1)^{j-1}}{x'^{j-2}}\right) \opt$.
Note that, $\frac{(x'-1)^{j-1}}{x'^{j-2}} = (x'-1)\left(\frac{x'-1}{x'}\right)^{j-2}$
and $0< \frac{x'-1}{x'} < 1$ (because $x'\geq 2$). Since $r \in \tilde{O}(\frac{1}{\gamma})$
and $x' \leq x \in O(r^2)$, for an appropriate choice of $j = j(\gamma)$,
$(x'-1)\left(\frac{x'-1}{x'}\right)^{j-2}<\gamma/2$. Thus we obtain a clique partition with $j+x-1$
cliques and cost at most $(1+\gamma/2)\cdotp \opt$.
This proves the lemma. 
\end{proof}

\subsection{$(2+\gamma)$-Approximation for MWCP in $B_r(v)$}\label{sec2gamma}
Finally, we show how to compute a $(2+\gamma)$-approximate MWCP of the
graph $B_r(v)$ for any given $\gamma$.
For an edge ordering $L = (e_1, e_2, \ldots, e_m)$ of a graph $G$ with $m$ edges, 
let $G_L[i]$ denote the edge induced subgraph with edge-set
$\{e_i, e_{i+1}, \ldots, e_m\}$. For each $e_i$, let
$N_L[i]$ denote the common neighborhood of the end-points of $e_i$
in $G_L[i]$.   An edge ordering $L = (e_1, e_2, \ldots, e_m)$ is a CNEEO
if for every $e_i$ in $L$, $N_L(i)$ induces a co-bipartite 
graph in $G$. It is known \cite{RaghavanS03} that every UDG graph
admits a co-bipartite edge elimination ordering (CNEEO).
In the following, let $G_v$ denote $B_r(v)$.  We state a lemma
of \cite{PemmarajuPESA07}.

\begin{lemma}{\cite{PemmarajuPESA07}}
\label{lemma:catchclique}
Let $C$ be a clique in $G_v$, and let $L$ be a CNEEO of $G_v$.
Then, there is an $i$, $1 \leq i \leq m$, such that
$N_L[i]$ contains $C$.
\end{lemma}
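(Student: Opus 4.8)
The plan is to prove this by combinatorial induction on the structure of the CNEEO, peeling off edges one at a time and tracking how the clique $C$ behaves. Let me think through what Lemma~\ref{lemma:catchclique} is actually claiming and why it should be true.

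We have a clique $C$ in $G_v$, and a CNEEO $L=(e_1,\dots,e_m)$ of the edges. The definition says: for each edge $e_i$, when we look at the graph $G_L[i]$ consisting of edges $e_i,\dots,e_m$ (the edges not yet eliminated), and take the common neighborhood $N_L[i]$ of the endpoints of $e_i$ in that subgraph, this common neighborhood induces a co-bipartite graph in $G$. We want to show that some $N_L[i]$ actually contains the whole clique $C$.

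Let me think about the natural approach. Suppose $C$ has at least two vertices (otherwise trivial — a single vertex, need to handle, though the common neighborhood of an edge's endpoints is defined for an edge so $C$ with one vertex is a degenerate case). Consider the edges of $C$ (all pairs of vertices in $C$, since $C$ is a clique). These edges appear somewhere in the ordering $L$. Take the edge of $C$ that appears \emph{earliest} in $L$, say $e_i = \{a,b\}$ with $a,b \in C$. The key claim should be: this $N_L[i]$ contains all of $C$.

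Let me verify. Since $e_i=\{a,b\}$ is the earliest edge of $C$ in $L$, every other edge of $C$ appears at position $\geq i$, hence survives in $G_L[i]$. In particular, for any other vertex $c \in C$, the edges $\{a,c\}$ and $\{b,c\}$ both come after $e_i$ (or equal, but they're distinct edges so strictly after), so they are present in $G_L[i]$. That means $c$ is adjacent to both $a$ and $b$ in $G_L[i]$, i.e., $c \in N_L[i]$. This holds for every $c\in C\setminus\{a,b\}$, and trivially the definition of $N_L[i]$ as common neighborhood captures the relevant vertices. So $C\subseteq N_L[i]$, which is exactly what we want.

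The main subtlety — and the step I'd watch most carefully — is the edge cases and the precise definition of $N_L[i]$: whether the endpoints $a,b$ of $e_i$ themselves are considered members of the common neighborhood (the claim is that $N_L[i]$ \emph{contains} $C$, so we need $a,b\in N_L[i]$ too, or the statement is really about $N_L[i]\cup\{a,b\}$). I would check the convention in \cite{PemmarajuPESA07}; typically the ``common neighborhood'' together with the endpoints forms the relevant clique-containing set, or the common neighborhood is taken to include endpoints by convention. I would state this carefully. The other edge case is $|C|=1$, which should be handled separately or excluded by noting a single vertex is a trivial clique that any relevant $N_L[i]$ handles. The co-bipartite property guaranteed by the CNEEO definition is not needed for \emph{this} containment lemma — it is the $N_L[i]$ being co-bipartite that gets used later to actually \emph{find} the clique structure — so I would not invoke it here; this lemma is purely the structural observation that the earliest edge of $C$ witnesses $C$. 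In summary, the whole proof is short: pick the earliest-appearing edge of $C$ and argue every vertex of $C$ is a common neighbor of its endpoints in the surviving subgraph.
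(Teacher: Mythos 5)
The paper never proves this lemma: it is imported as a black box from \cite{PemmarajuPESA07}, so there is no in-paper proof to compare yours against, and your proposal has to be judged on its own merits. On those merits it is correct, and it is in fact the standard argument for this statement: take the edge $e_i$ of $C$ occurring earliest in the CNEEO $L$; every edge of $C$ then has index at least $i$ and so survives in $G_L[i]$, whence every vertex of $C$ other than the endpoints of $e_i$ is a common neighbor of those endpoints in $G_L[i]$, i.e.\ lies in $N_L[i]$. You are also right that the co-bipartiteness condition in the definition of a CNEEO plays no role in this containment; it is only used afterwards, when the paper's Algorithm 4 splits $N_L[i]$ into two cliques.

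The subtlety you flag --- whether the endpoints $a,b$ of $e_i$ themselves belong to $N_L[i]$ --- is exactly the right thing to worry about, and it is not cosmetic. Under an open-neighborhood reading your argument only yields $C\setminus\{a,b\}\subseteq N_L[i]$, and for $|C|=2$ the lemma as literally stated would fail. The reading that makes both the lemma and its downstream use sound (the paper covers \emph{all} of the clique $O_1$, endpoints included, by the two cliques extracted from $N_L[i]$) is that $N_L[i]$ is the intersection of the \emph{closed} neighborhoods of the endpoints of $e_i$ within $G_L[i]$: since $e_i$ itself is an edge of $G_L[i]$, each endpoint lies in the other's closed neighborhood, so $a,b\in N_L[i]$ and your chosen $i$ gives $C\subseteq N_L[i]$ in full. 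One further degenerate case worth a sentence in a complete write-up: a singleton clique at an isolated vertex is contained in no $N_L[i]$ under any convention, so the lemma implicitly concerns cliques contained in the edge-covered part of $G_v$; this is harmless for the paper's application, where leftover single vertices can always be taken as their own cliques.
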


Assume that $G_v$ can be partitioned into $\alpha\leq \ell=\tilde{O}(1/\gamma^2)$ 
cliques, $\mathcal{O} = \{O_1, O_2, \ldots, O_{\alpha}\}$, such that
$\wt{\mathcal{O}} \leq (1+\frac{\gamma}{2})\cdotp\wt{\OPT_v}$, where $\OPT_v$
is an optimal weighted clique partition of $G_v$. Note that by Lemma \ref{lemgamma}
this is true for subgraph $B_r(v)$.
Suppose that we are given the upper bound $\ell$; we will try 
all possible values of $\alpha$.
Without loss of generality, let $\wt{O_1} \geq \wt{O_2} \geq \ldots \geq
\wt{O_{\alpha}}$.
Observe that, without loss of generality, 
we can assume $O_i$ is a maximal clique in $\bigcup_{j=i}^{\alpha} O_j$.  
The implication of the above lemma
is that even though we do not know $O_1$, hence we do not know
$\mathcal{O}$, we do know that for
every CNEEO $L$ of $G_v$, there is an $e_i$ such that $N_L[i]$ can
be partitioned into at most two cliques that fully cover $O_1$. 
Since $O_1$ is a heaviest clique, the two cliques that cover the
subgraph $N_L[i]$ pay a cost of at most $2 \cdotp \wt{O_1}$.
This suggests an algorithm that guesses an edge sequence
$(f_1, f_2, \ldots, f_{\alpha})$ of $G_v$.  Then, the algorithm computes
$L$, a CNEEO of $G_v$.  The algorithm's first guess is ``good" if
$f_1$ is an edge in $O_1$ that occurs first in $L$.  Suppose that
this is the case and suppose that $f_1$ has rank $i$ in $L$.  Then,
$O_1$ is contained in $N_L[i]$, and we cover $N_L[i]$ with at most
two cliques. Call these $C_1^\prime$ and $C_1^{\prime\prime}$ and
$\wt{C_1^\prime} + \wt{C_1^{\prime\prime}} \leq 2 \cdotp \wt{O_1}$.  
So, when we remove $N_L[i]$ from
$G_v$, we get a UDG which can be partitioned into at most $\alpha -1$
cliques, namely, $\mathcal{O}^\prime = \{O_2, \ldots, O_{\alpha}\}$. 
We then again construct a CNEEO, $L^\prime$, of 
$G_v^\prime = G_v \setminus N_L[i]$.  Just like before, 
our guess $f_2$ is ``good" if $f_2$ is an edge in $O_2$ and occurs
first in $L^\prime$.  
Let $i^\prime$ be the rank of $f_2$ in $L^\prime$, we see that
$N_{L^\prime}[i^\prime]$ fully contains $O_2$, and we again cover it with
at most $2$ cliques.  Next, delete $N_{L^\prime}[i^\prime]$ from
$G_v^\prime$ to get a graph which can be partitioned into $\alpha -2$
cliques, and so on.  See Algorithm~4 for details.

\begin{algorithm}[h]
	\caption{CP$(G_v,\ell)$}
	\begin{algorithmic}[1]
		\STATE $\mathcal{C} \leftarrow V$; min $\leftarrow \wt{\mathcal{C}}$; 
		\FORALL {$\alpha\leq \ell$}
		\FORALL {$\alpha$-edge sequence $(f_1, f_2, \ldots, f_{\alpha})$ of $G_v$}
			\STATE {$G_0 \leftarrow G_v$}
			\FOR {$j=1$ to $\alpha$}
				\STATE {Compute a CNEEO $L$ of $G_{j-1}$}
				\STATE {$i \leftarrow$ rank of $f_j$ in $L$}
				\STATE {Partition $N_L[i]$ into two cliques $C_j^\prime$ and
								$C_j^{\prime\prime}$}
				\STATE {$G_j \leftarrow G_{j-1} \setminus N_L[i]$}
			\ENDFOR
			\IF {$G_{\alpha} = \emptyset$ and $\wt{\bigcup_{j=1}^{\alpha}
				\{C_j^\prime, C_j^{\prime\prime}\}} < \mbox{min}$}
				\STATE $\mathcal{C} \leftarrow 
									\bigcup_{j=1}^{\alpha} \{C_j^\prime,
									C_j^{\prime\prime}\}$; 
									min $\leftarrow \wt{\mathcal{C}}$;
			\ENDIF
		\ENDFOR
		\ENDFOR
		\RETURN $\mathcal{C}$
	\end{algorithmic}
\end{algorithm}

Note that while Lemma~\ref{lemma:catchclique} allows us to cover
any clique with at most $2$ cliques, it does not find
the clique.  In the algorithm, note that if at any point,
the algorithm is unable to construct a CNEEO, we can declare that
the graph $G_v$ is not a UDG.  Also, if for all invocations of
the algorithm by an external algorithm that guesses the value of $\opt$
we are unable to find a clique partition, then again we can declare
that $G_v$ is not a UDG.

\section{$O(\log^* n)$-round Distributed PTAS for UDGs with Edge-Lengths}
\label{sec:distrib}

In this section,
we give details of a distributed PTAS for MCP which runs in
$O(\frac{\log^* n}{\eps^{O(1)}})$ rounds of distributed computation under the
$\mathcal{LOCAL}$ model of computation \cite{Peleg}. The model of
computation that we employ assumes a {\em synchronous} system
where {\em communication} between neighboring nodes takes place
in synchronous rounds using messages of unbounded size \cite{Peleg}.
So, in a single round of communication, any node acquires the
subgraph (information pertaining to the set of nodes, edges, the
states of local variables, etc.) within its immediate neighborhood.
So, after $k$ rounds of
communication, any node acquires complete knowledge about its
$k$-neighborhood.

Observe that in Algorithm MinCP2, the radius $r$ of any ball
$B_v(r)$ is bounded above by $\tilde{O}(1/\eps)$, while the center,
$v$, is an arbitrary vertex. Since the radius of
any ball is ``small", the maximum number of rounds of distributed 
computation that the sequential algorithm needs before terminating
the ``while-loop" is also ``small". Therefore, for any
pair of balls $B_u(r_i)$ and $B_v(r_j)$, such that 
$d(u,v) \in \omega(1/\eps)$, one should be able to run part of the 
sequential algorithm in parallel, as they surely are independent
of each other.  
We borrow some ideas from \cite{KuhnNMW05} and find regions that
are far apart
such that we can run the sequential algorithm in those regions in parallel.
See Algorithm~5 for details.

\begin{algorithm*}[h]
	\caption{Distr-MCP-UDG$(G,\eps)$}
	\begin{algorithmic}[1]
		\STATE {$\beta \leftarrow \lceil c_0
						\frac{1}{\eps}\log\frac{1}{\eps} \rceil$;
						$\ell \leftarrow c_1 \beta^2$; all vertices are unmarked.}

		\COMMENT {$c_0$ is the constant in Lemma~\ref{lemma:diambound} 
							and $c_1$ is the constant inequality (\ref{c1-const}).}
		\STATE Construct a maximal subset, $V_c \subset V$, such that
					 for any pair $u,v \in V_c$,
					 $d(u,v) > \beta$. Construct a graph $G_c=(V_c,E_c)$, where
					 $E_c=\{\{u,v\} : u,v \in V_c, d_G(u,v) \leq 4\beta\}$. We
					 call $V_c$, the set of {\em leaders}.
		\STATE Proper color $G_c$ using $\Delta(G_c)+1$ colors, where
					 $\Delta(G_c)$ is the maximum degree of $G_c$.
		\STATE Every $v \in V \setminus V_c$, ``assigns" itself to a nearest
					 leader	$u \in V_c$, with ties broken arbitrarily,
					 and colors itself the same color as the leader.
		\FOR 	 {$i = 1$ to $\Delta(G_c)+1$}
			\STATE 	For each leader $j$ with color $i$ let $G^j_i$ be the subgraph induced by the
				vertices assigned to leader $j$.
			\FORALL	{$G^j_i$ {\bf in parallel}} 
				\STATE Consider a fixed ordering on the unmarked vertices of $G^j_i$; 
				\STATE Run the sequential ball growing algorithm on the next (in this ordering) 
				unmarked vertex $v\in G^j_i$, we compute $B_r(v)$;
  			        Note that $B_r(v)$ might contain vertices of different colors (from outside $G^j_i$).
				\STATE Compute (using the sequential algorithm) the optimal clique-partition of $B_r(v)$
				and ``mark'' all those vertices 
				

			\ENDFOR
		\ENDFOR
	\end{algorithmic}
\end{algorithm*}

It should be pointed out that adapting the algorithm of \cite{KuhnNMW05} for
{\em maximum independent set} and {\em minimum dominating set} to our setting
is not trivial. The reason is that MCP is a {\em partition} of the entire
vertex set and partitioning just a subset well enough will not do.
Specifically \cite{KuhnNMW05} chooses a subset of vertices upon
which their ball-growing algorithm is run; it suffices for their purposes to
dispense with the remaining subset of vertices that were not picked by their
ball-growing algorithm.  If we had followed a similar scheme then we would
surely get a good clique partition on a subset of vertices; however, it is
unclear as to how to obtain a good partition of the remaining subset in terms
of the optimal size for the original problem instance over the entire vertex
set. As a means to circumvent this issue, we first construct a ``crude"
partition of the vertex set, instead of just a subset of vertices as done in
\cite{KuhnNMW05}.  


\subsection{Analysis}
We now show that the algorithm constructs a $(1+\eps)$-approximation
to MCP on UDGs given only rational edge-lengths in $O(\log^* n)$
rounds of distributed computation under the $\mathcal{LOCAL}$ model;
we first show correctness of the algorithm, followed by bounding
the number of communication rounds. 

\paragraph{Correctness:}
We prove that our algorithm is correct by showing
that any execution of Distr-MCP-UDG can be turned
into a sequential execution of MinCP2. 
As stated, every vertex has the same color as its leader; 
let a leader vertex be its own leader.
We first show that the distance of every vertex to its leader is small.

\begin{lemma}
For any vertex $v \notin V_c$, there is a vertex $u \in V_c$ such
that $d_G(u,v) \leq \beta$.
\label{lemma:colorpartition}
\end{lemma}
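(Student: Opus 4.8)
The plan is to derive the statement directly from the maximality of $V_c$ guaranteed in Step 2 of Distr-MCP-UDG. Recall that $V_c$ is, by construction, a \emph{maximal} subset of $V$ in which every pair $u,v \in V_c$ satisfies $d_G(u,v) > \beta$. The lemma is just the contrapositive formulation of what maximality buys us: if some vertex were at hop-distance more than $\beta$ from all of $V_c$, then $V_c$ could be enlarged, which is forbidden.

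Concretely, I would argue by contradiction. Suppose there is a vertex $v \notin V_c$ with $d_G(u,v) > \beta$ for every $u \in V_c$. I would form $V_c' = V_c \cup \{v\}$ and check that it still has the required separation property: any two vertices both lying in $V_c$ are more than $\beta$ apart by the construction of $V_c$, while $v$ is more than $\beta$ from each $u \in V_c$ by the assumption on $v$; since $d_G$ is the symmetric shortest-path hop metric, these two cases cover all pairs in $V_c'$. Thus $V_c'$ would be a strictly larger $\beta$-separated set, i.e.\ $V_c' \supsetneq V_c$, contradicting the maximality of $V_c$. Consequently every $v \notin V_c$ admits a leader $u \in V_c$ with $d_G(u,v) \le \beta$, which is exactly the claim.

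I do not anticipate any genuine obstacle in this argument, as it is a routine maximality-plus-contradiction step; the only point requiring a moment's care is verifying that adjoining $v$ does not violate the pairwise condition, which reduces precisely to the two cases above and uses the symmetry of $d_G$. This lemma will then serve to bound the distance from each ordinary vertex to its assigned leader, feeding into the later correctness argument that each parallel region can be turned into a sequential execution of MinCP2.
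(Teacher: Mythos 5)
Your proof is correct and is essentially identical to the paper's own argument: both proceed by contradiction, adjoining a hypothetical far-away vertex $v$ to $V_c$ and observing that $V_c \cup \{v\}$ would still be pairwise $\beta$-separated, contradicting the maximality of $V_c$ from Step 2.
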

\begin{proof}
Suppose not. So there is a 
$v$ whose distance to every $u \in V_c$ is more than 
$\beta$. But then $V_c^\prime = V_c \cup \{v\}$
has the property that for all $x,y \in V_c^\prime$, $d_G(x,y) > 
\beta$, contradicting the maximality of $V_c$.
\end{proof}

Next we show that for any pair of vertices $u,v$ of the same color but with
different leaders, the minimum
distance between them is large enough so that a pair of balls 
of radius at most $\beta$ over them will be disjoint,
where $\beta$ is defined in MinCP2 and Distr-MCP-UDG.

\begin{lemma}
Consider two leaders $x,y$ of the same color, say $i$, and any two vertices
$u\in G^x_i$ and $v\in G^y_i$ (note that we might have $u=x$ or $v=y$).
Then for all values of $r$ considered in the ball growing algorithm, $B_r(u)$ and $B_r(v)$ are disjoint.
\label{lemma:mindist}
\end{lemma}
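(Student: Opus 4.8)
The plan is to reduce the ball-disjointness claim to a single metric estimate: I will show that any two vertices $u \in G^x_i$ and $v \in G^y_i$ belonging to the groups of two \emph{distinct} leaders of the same color are separated by more than $2\beta$ in $G$, and then observe that every radius $r$ produced by the ball-growing routine is at most $\beta$, so two balls of radius $r$ centered at such $u,v$ cannot meet.

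First I would pin down how far apart the two leaders are. Since $x$ and $y$ receive the same color $i$ in the proper coloring of $G_c$ computed in Step~3, they are non-adjacent in $G_c$; by the definition $E_c = \{\{u,v\} : d_G(u,v) \le 4\beta\}$ this forces $d_G(x,y) > 4\beta$. Next I would bound the distance from each of $u,v$ to its own leader. Every non-leader vertex assigns itself to a \emph{nearest} leader, and Lemma~\ref{lemma:colorpartition} guarantees that some leader lies within distance $\beta$; hence the nearest leader is within $\beta$ as well, giving $d_G(u,x) \le \beta$ and $d_G(v,y) \le \beta$ (with the convention $d_G(w,w)=0$ covering the degenerate cases $u=x$ or $v=y$). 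Combining these with the triangle inequality yields
\[
 d_G(u,v) \;\ge\; d_G(x,y) - d_G(u,x) - d_G(v,y) \;>\; 4\beta - \beta - \beta \;=\; 2\beta .
\]

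With this separation in hand the conclusion is immediate. Every radius $r$ examined in the inner while-loop of the ball-growing routine satisfies $r \le \beta$: the loop returns ``not a UDG'' as soon as $r > \beta$, and in fact Lemma~\ref{lemma:diambound} already certifies that the terminating radius obeys $r^*_i = \tilde{O}(1/\eps) \le \beta$. Suppose, for contradiction, that some $w \in B_r(u) \cap B_r(v)$. Then $d_G(u,w) \le r$ and $d_G(w,v) \le r$, so $d_G(u,v) \le 2r \le 2\beta$, contradicting the bound $d_G(u,v) > 2\beta$ just established. Therefore $B_r(u) \cap B_r(v) = \emptyset$ for every admissible $r$, which is exactly the assertion of the lemma.

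I expect the only delicate point to be the bookkeeping around the two constants baked into $G_c$: the maximal-packing radius $\beta$ of Step~2 and the edge threshold $4\beta$ of the conflict graph. One must check that the generous factor of $4$ in the threshold survives the subtraction of the two ``within $\beta$ of the leader'' terms and still leaves a strictly positive margin over $2r$; once that accounting is correct, the remaining manipulation is a routine triangle inequality. This same metric bound is what will subsequently justify running the ball-growing procedure for distinct same-colored leaders in parallel without interference, since the balls they process stay well separated.
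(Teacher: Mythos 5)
Your proof is correct and follows essentially the same route as the paper's own argument: same-colored leaders are non-adjacent in $G_c$, hence $d_G(x,y) > 4\beta$; Lemma~\ref{lemma:colorpartition} places each vertex within $\beta$ of its leader; the triangle inequality then gives $d_G(u,v) > 2\beta$, which combined with $r \leq \beta$ forces the balls to be disjoint. You simply spell out the triangle-inequality bookkeeping and the contradiction step that the paper leaves implicit.
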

\begin{proof}
Since $x$ and $y$ have the same color $d(x,y)>4\beta$.
By Lemma~\ref{lemma:colorpartition}, any
vertex in either of $G^x_i$ or $G^y_i$ is at a distance of at most
$\beta$ from the respective leader; so, $d_G(u,v) > 2\beta$.
The lemma follows easily by noting the fact that $r\leq \beta$ in the
ball growing algorithm.
\end{proof}

We are now ready to prove the correctness of
Distr-MCP-UDG by showing an equivalence between
any execution of it to some execution of MinCP2.

\begin{lemma}
Any execution of Distr-MCP-UDG from 
``Step 5" to ``Step 10" can be converted to a valid 
execution of MinCP2.
\label{lemma:equivalence}
\end{lemma}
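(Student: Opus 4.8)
The plan is to exhibit, for any run of Distr-MCP-UDG, a single linear ordering of its ball-growing events and to argue by induction that executing MinCP2 while making exactly these (arbitrary) center choices reproduces the same sequence of balls, the same removals, and the same accumulated clique partition. First I would fix the linearization: process the events in increasing order of colour round $i=1,\ldots,\Delta(G_c)+1$, and within a fixed round order them first by leader $j$ (in any fixed order of the leaders of colour $i$) and then by the fixed vertex ordering used inside each $G^j_i$. Since MinCP2 chooses its center ``arbitrarily'' on Step~3, any such ordering is \emph{a priori} a legal schedule of center choices for MinCP2; what must be checked is that the balls and removals agree.

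The heart of the argument is the invariant that, after the first $t$ events of the linearization, the set of vertices marked by Distr-MCP-UDG equals the set of vertices removed by the simulated MinCP2 run, and the two accumulated partitions coincide. I would prove this by induction on $t$. In the inductive step the $(t+1)$-st event grows a ball about some center $v\in G^j_i$, and I must show (a) that $v$ is still unmarked when its event is reached, and (b) that the ball $B_r(v)$, the radius $r^{*}$ selected by the inner while-loop, and the partition $C_{r+2}(v)$ returned by OPT-CP are identical to what MinCP2 computes on the matching residual graph. Claim (b) is immediate from the induction hypothesis together with the fact that the inner loop and OPT-CP are deterministic functions of the residual graph, so equal residual vertex sets force equal outputs.

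The only genuinely distributed phenomenon, and the step I expect to be the main obstacle, is the interaction of the parallel groups inside a single colour round: a ball may reach ``foreign'' vertices outside $G^j_i$, and different groups mark vertices concurrently, so I must rule out that one group marks the center or alters the ball of another group of the same colour. This is exactly where Lemma~\ref{lemma:mindist} is used: any two centers $u\in G^x_i$ and $v\in G^y_i$ of the same colour satisfy $d_G(u,v)>2\beta$ and, for every radius considered, $B_r(u)$ and $B_r(v)$ are disjoint. Consequently the regions marked by distinct same-colour groups are pairwise disjoint and lie at distance more than $\beta$ from each center; removing one group's vertices cannot delete another group's center, cannot touch another group's ball, and cannot lie on any shortest path of length at most $\beta$ emanating from another group's center. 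Hence within a round the concurrent markings commute, and serializing them in the chosen order yields the identical residual graph at every event, establishing (a) and (b).

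Finally I would observe that foreign vertices marked in an earlier round are precisely vertices that the simulated MinCP2 has already removed, so when a later round reaches a group the ``next unmarked vertex'' it selects is exactly an as-yet-unremoved vertex of MinCP2; thus the skipping behaviour of the two algorithms coincides. Since MinCP2 halts only when $V=\emptyset$, the established equivalence shows that the marked set of Distr-MCP-UDG grows to all of $V$ and that its output partition equals the output of a bona fide MinCP2 execution, which is exactly the assertion of the lemma.
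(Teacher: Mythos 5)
Your proposal takes essentially the same route as the paper: the paper likewise serializes the parallel execution round by round (batches $V_1, V_2, \ldots$ of same-color, distinct-leader centers, ordered arbitrarily within each batch) and invokes Lemma~\ref{lemma:mindist} to conclude that concurrently grown balls are disjoint, so MinCP2 run on that schedule grows exactly the same balls. Your inductive bookkeeping of marked versus removed vertices and the commutation of same-round markings merely spells out what the paper compresses into ``it should be easy to see.''
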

\begin{proof}
Consider an arbitrary execution of Distr-MCP-UDG.
Suppose that $V_1, V_2, V_3,\ldots$ is a sequence of disjoint 
sets of the vertices of $V$ such that we run the ball growing algorithm
in parallel (during Distr-MCP-UDG) on vertices of $V_1$ 
(and thus we compute an optimal clique partition on each vertex of 
$V_1$ in parallel) then we do this for vertices in $V_2$, and so on. 
Note that the vertices in $V_i$ all have the same color and each has a 
different leader.  Consider an arbitrary ordering $\pi_i$ 
of the vertices in each $V_i$ and suppose that we run MinCP2 
algorithm on vertices of $V_1$ based on ordering $\pi_1$, 
then on vertices of $V_2$ based on ordering $\pi_2$, and so on. 
Since the vertices in each $V_i$ have distinct leaders, 
by Lemma \ref{lemma:mindist}, the balls grown around them are disjoint. It 
should be easy to see that the balls grown by algorithm MinCP2 is
exactly the same as the ones computed by Distr-MCP-UDG.
\end{proof}

The following result follows immediately as a corollary to
Lemma~\ref{lemma:equivalence}.

\begin{cor}
Given an $\eps>0$, Distr-MCP-UDG constructs a clique partition
of the input graph $G$ with associated edge-lengths, 
or produces a certificate that $G$ is not a UDG.  If $G$ is a UDG
then the size of the partition is within $(1+\eps)$ of the
optimum clique partition.
\end{cor}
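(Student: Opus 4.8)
The plan is to obtain the corollary with essentially no new work, by reducing the behaviour of Distr-MCP-UDG to a legal run of the sequential algorithm MinCP2, whose correctness and $(1+\eps)$-guarantee are already established (Theorem~\ref{T4} and Lemmas~\ref{lemma:optlowerbound}--\ref{lemma:ptas}). First I would check that the output is genuinely a partition of $V$: every non-leader vertex lies within distance $\beta$ of a leader (Lemma~\ref{lemma:colorpartition}), and the loop over colours together with the per-class processing guarantees that each vertex is eventually covered by some ball-growing step; since a vertex is \emph{marked} the first time it is covered and skipped thereafter, the returned cliques are pairwise disjoint and cover $V$. If instead some call to OPT-CP reports ``not a UDG'' or some radius exceeds $\beta$, the algorithm halts and emits $B_r(v)$ as a certificate, exactly as MinCP2 does.

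The crux is the simulation itself, which I would build on Lemma~\ref{lemma:equivalence}. Batch the parallel phases as $V_1, V_2, \ldots$, the sets of centres processed together. Within one batch all centres share a colour but have distinct leaders, so by Lemma~\ref{lemma:mindist} the balls grown around them are pairwise disjoint; hence computing them in parallel yields exactly the balls, the OPT-CP calls, and the cliques that one would obtain by processing the same centres one at a time, in any intra-batch order $\pi_i$. Concatenating $\pi_1, \pi_2, \ldots$ produces a legal vertex ordering for MinCP2, and a vertex already covered by an earlier ball (possibly one centred in a different colour class) is marked, hence skipped by the distributed algorithm, precisely as it would be removed from $V$ by MinCP2. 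Thus the two executions produce the identical family of cliques.

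Finally I would transfer the guarantees verbatim. Whenever $G$ is a UDG, each OPT-CP call returns an optimal clique partition of its ball, $\ell$ is a valid size bound by inequality~(\ref{c1-const}) and Lemma~\ref{lemma:diambound}, and the inner loop terminates with small radius; the global $(1+\eps)$ bound on the size of $\bigcup_i C_{r^*_i+2}(v_i)$ then follows from Lemma~\ref{lemma:optlowerbound} and Lemma~\ref{lemma:ptas}. When $G$ is not a UDG the algorithm either still returns a partition (with no guarantee on size) or produces the certificate $B_r(v)$, which is exactly the behaviour asserted by the corollary.

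The hard part will be the middle paragraph: making the serialization argument airtight once a ball crosses a colour boundary, so that ``marking'' reproduces exactly MinCP2's global removal from $V$. Lemma~\ref{lemma:mindist} is what prevents two concurrently processed centres from racing to claim the same vertex, and the fixed intra-class ordering is what lets each batch be serialized without altering the outcome; together these reduce the distributed correctness entirely to the sequential case, so nothing beyond the cited lemmas is needed.
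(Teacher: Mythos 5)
Your proposal is correct and takes essentially the same route as the paper: the corollary is obtained by serializing any distributed execution into a valid run of MinCP2 (this is exactly Lemma~\ref{lemma:equivalence}, with Lemma~\ref{lemma:mindist} guaranteeing disjointness of the balls grown in parallel within a colour class) and then transferring the sequential guarantees of Theorem~\ref{T4}. Your additional care about coverage, marking-versus-removal, and the certificate case only makes explicit what the paper compresses into ``follows immediately as a corollary to Lemma~\ref{lemma:equivalence}.''
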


\paragraph{Running Time:}
We now show that the algorithm runs in $O(\frac{\log^* n}{\eps^{O(1)}})$ 
distributed rounds under the $\mathcal{LOCAL}$ model of computation.

\begin{lemma}
``Step 2" requires $O(\beta \cdotp \log^* n)$ rounds of communication. 
\end{lemma}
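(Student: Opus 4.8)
The plan is to recognize that Step 2 is, at its core, the distributed computation of a maximal independent set (MIS) in a bounded-growth graph, and then to bound the round complexity by combining a known $O(\log^* n)$-round MIS routine with the cost of simulating one communication round of the relevant power graph. Define the auxiliary graph $H$ on vertex set $V$ in which $u$ and $v$ are adjacent exactly when $d_G(u,v) \leq \beta$. A subset $V_c$ satisfying the requirement of Step 2 (pairwise $G$-distance strictly greater than $\beta$) is precisely an independent set of $H$, and \emph{maximality} of $V_c$ as required is maximality as an independent set of $H$. Thus the first task of Step 2 is to compute an MIS of $H$ in the $\mathcal{LOCAL}$ model.

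First I would argue that $H$ inherits polynomially-bounded growth from the fact that (when $G$ is a UDG) $G$ is growth-bounded. Concretely, any set of vertices independent in $H$ consists of vertices pairwise at $G$-distance greater than $\beta \geq 1$, hence pairwise non-adjacent in $G$ and therefore pairwise at Euclidean distance greater than $1$ in any realization. Moreover, any $r$-neighborhood in $H$ is contained in an $(r\beta)$-neighborhood in $G$, which lies inside a Euclidean disk of radius $r\beta$. A standard packing argument then bounds the number of such pairwise-far points by $O((r\beta)^2)$, so the independence number within any $r$-ball of $H$ is polynomial in $r$; that is, $H$ is growth-bounded. This is exactly the structural hypothesis under which \cite{KuhnNMW05} computes an MIS in $O(\log^* n)$ rounds.

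Next I would account for the simulation overhead. In the $\mathcal{LOCAL}$ model each node learns its entire $k$-neighborhood of $G$ after $k$ rounds; since each edge of $H$ corresponds to a path of length at most $\beta$ in $G$, one round of communication over $H$ is simulated by $\beta$ rounds over $G$. Running the $O(\log^* n)$-round MIS algorithm of \cite{KuhnNMW05} on $H$ therefore costs $O(\beta \cdot \log^* n)$ rounds on $G$ and yields the leader set $V_c$. It remains to build $G_c$: each leader must determine which other leaders lie within $G$-distance $4\beta$, which it learns by acquiring its $4\beta$-neighborhood, costing an additional $O(\beta)$ rounds. Summing, Step 2 finishes in $O(\beta \cdot \log^* n) + O(\beta) = O(\beta \cdot \log^* n)$ rounds, as claimed.

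The main obstacle I anticipate is not the round-counting arithmetic but justifying that the $O(\log^* n)$-round MIS guarantee genuinely applies to $H$: one must verify that $H$ is actually growth-bounded and that the algorithm of \cite{KuhnNMW05} is being invoked on a graph meeting its hypotheses, including the subtlety that $H$ is a graph power and that growth-boundedness is preserved under taking the $\beta$-th power of a UDG (via the packing estimate above). A secondary point worth stating carefully is that, since the overall algorithm should not presume the input is a UDG, the MIS routine either terminates within the promised number of rounds or its failure feeds into the certificate machinery; for the purpose of this lemma, however, it suffices to bound the rounds under the assumption that $G$ is a UDG.
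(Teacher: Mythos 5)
Your proof is correct and takes essentially the same route as the paper: recognize Step 2 as an MIS computation in the power graph $G^\beta$, argue that this graph is growth-bounded (the paper asserts this by viewing $G^\beta$ as a UDG with unit $\beta$, while you give the cleaner packing argument), and simulate a known $O(\log^* n)$-round MIS algorithm for growth-bounded graphs on $G$ with a factor-$\beta$ overhead, plus $O(\beta)$ rounds to build $G_c$. The only correction is attribution: the $O(\log^* n)$-round MIS guarantee for growth-bounded graphs is the result the paper cites as \cite{SchneiderW08}, not \cite{KuhnNMW05}, whose MIS algorithm has a larger round complexity.
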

\begin{proof}
Observe that the result of ``Step 2" is identical to constructing
a {\em maximal independent set} (MIS) in $G^\beta$.  Note that $G^\beta$
is also a UDG where the new unit is $\beta$.  As a result, $G^\beta$
is a subclass of {\em growth-bounded graphs} \cite{KuhnMNW05}
where all the distances are scaled by $\beta$; computation
of MIS on $G^\beta$ takes $O(\beta \cdotp \log^* n)$ rounds 
\cite{SchneiderW08} while the construction
of $G^\beta$ takes $\beta$ rounds.  Hence, the number
of rounds needed by ``Step 2" can be bounded by 
$O(\beta \cdotp \log^* n)$.
\end{proof}

It is easy to see that constructing $G_c$ requires at most $4\beta$
communication rounds.  Next, we show that the maximum degree
of $G_c$, $\Delta(G_c)$ is bounded by a constant. 

\begin{lemma}
$\Delta(G_c) \in O(1)$ 
\label{lemma:boundeddegree}
\end{lemma}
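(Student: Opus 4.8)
The statement is that $\Delta(G_c) \in O(1)$, where $G_c = (V_c, E_c)$ with $V_c$ being a maximal set of "leaders" pairwise at graph-distance more than $\beta$, and $E_c$ connects two leaders $u, v$ iff $d_G(u,v) \leq 4\beta$.

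So I need to bound the degree of a leader $u$ in $G_c$. The neighbors of $u$ in $G_c$ are leaders $v$ with $d_G(u,v) \leq 4\beta$.

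**Key geometric fact.** The input is a UDG with a realization (or at least, for the running-time analysis we're in the case where it's a UDG). Graph distance $d_G$ relates to Euclidean distance: if $d_G(u,v) = k$, then there's a path of $k$ edges, each of Euclidean length $\leq 1$, so $\|f(u) - f(v)\|_2 \leq k$. Conversely, leaders are pairwise at graph distance $> \beta$.

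**The packing argument.** Consider a leader $u$ and the set $N$ of its neighbors in $G_c$, all leaders $v$ with $d_G(u,v) \leq 4\beta$. In the realization, each such $v$ satisfies $\|f(u) - f(v)\|_2 \leq 4\beta$ (since graph distance $\leq 4\beta$ implies Euclidean distance $\leq 4\beta$). So all these leaders lie in a Euclidean disk of radius $4\beta$ around $f(u)$.

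Now the separation: any two distinct leaders $v, w$ have $d_G(v,w) > \beta$. But if their Euclidean distance were $\leq 1$, they'd be adjacent in $G$, giving $d_G(v,w) = 1 \leq \beta$, a contradiction (for $\beta \geq 1$). So any two leaders are at Euclidean distance $> 1$.

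This is a packing problem: I have points in a disk of radius $4\beta$, pairwise Euclidean distance $> 1$. Place disjoint open disks of radius $1/2$ around each; these fit inside a disk of radius $4\beta + 1/2$. By area, the number of such points is at most $\frac{\pi(4\beta + 1/2)^2}{\pi (1/2)^2} = (8\beta + 1)^2$. Since $\beta = \lceil c_0 \frac{1}{\eps}\log\frac{1}{\eps}\rceil$ is a constant (for fixed $\eps$), this is $O(1)$.

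**Writing the proof.**

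\begin{proof}
Since we are analyzing the running time under the assumption that $G$ is a UDG, fix a realization $f : V \mapsto \mathbb{R}^2$. We first relate graph distance to Euclidean distance: if $d_G(x,y) = t$, then there is a path $x = w_0, w_1, \ldots, w_t = y$ whose consecutive vertices are adjacent in $G$, hence $\|f(w_{i-1}) - f(w_i)\|_2 \leq 1$ for each $i$, and by the triangle inequality $\|f(x) - f(y)\|_2 \leq t$.

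Fix a leader $u \in V_c$ and let $N = \{v \in V_c : \{u,v\} \in E_c\}$ be its set of neighbors in $G_c$. Every $v \in N$ satisfies $d_G(u,v) \leq 4\beta$, so by the observation above $\|f(u) - f(v)\|_2 \leq 4\beta$; thus all points $f(v)$ with $v \in N \cup \{u\}$ lie inside the Euclidean disk $D$ of radius $4\beta$ centered at $f(u)$.

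On the other hand, any two distinct leaders $x, y \in V_c$ have $d_G(x,y) > \beta \geq 1$, so they are non-adjacent in $G$, which forces $\|f(x) - f(y)\|_2 > 1$. Hence the points $\{f(w) : w \in N \cup \{u\}\}$ are pairwise at Euclidean distance strictly greater than $1$. Placing an open disk of radius $\frac{1}{2}$ around each such point yields pairwise disjoint disks, all contained in the disk of radius $4\beta + \frac{1}{2}$ concentric with $D$. Comparing areas, the number of these points is at most
\[
\frac{\pi \left(4\beta + \tfrac{1}{2}\right)^2}{\pi \left(\tfrac{1}{2}\right)^2} = (8\beta + 1)^2.
\]
Therefore $\deg_{G_c}(u) = |N| \leq (8\beta + 1)^2 - 1$. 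Since $\beta = \lceil c_0 \frac{1}{\eps}\log\frac{1}{\eps}\rceil$ is a constant for any fixed $\eps > 0$, we conclude $\Delta(G_c) \in O(1)$.
\end{proof}
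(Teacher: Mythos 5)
Your proof is correct, and at the top level it follows the same plan as the paper's: fix a realization, observe that all $G_c$-neighbors of a leader lie in a Euclidean disk of radius $4\beta$ around it (since hop distance bounds Euclidean distance), and finish with a packing argument. The difference is in the separation step, and it matters. The paper argues that, because leaders are pairwise at distance more than $\beta$ (Step 2), ``any disk of diameter $\beta$ contains at most one vertex of $G_c$,'' and from this derives an absolute constant ($256$) independent of $\beta$. That inference conflates hop distance with Euclidean distance: Step 2 only guarantees $d_G(x,y)>\beta$ for leaders $x,y$, and two leaders can lie at Euclidean distance barely above $1$ while every path between them in $G$ is long, so the claimed $\beta$-separation in the plane does not follow. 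You use only the valid implication --- leaders are pairwise non-adjacent, hence at Euclidean distance greater than $1$ --- and consequently get the weaker bound $(8\beta+1)^2 = O(\beta^2)$ rather than an absolute constant. This is still $O(1)$ for fixed $\eps$, which is all the downstream lemmas need: the $\Delta(G_c)+1$ coloring and the outer loop over color classes still run within $O\bigl(\log^* n/\eps^{O(1)}\bigr)$ rounds. So your argument is not just a valid alternative; it quietly repairs an unjustified step in the paper's own proof, at the harmless cost of letting the degree bound depend on $\beta$.
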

\begin{proof}
Let $v$ be a vertex of $G_c$ having maximum degree.  Note that all
its neighboring vertices in $G_c$ lie in a disk of radius at most
$4\beta$.  Also note that due to ``Step 2" the minimum distance between
any pair of vertices in $G_c$ is more than $\beta$.  As a result,
any disk of diameter $\beta$ contains at most $1$ vertex of $G_c$. 
Using standard packing arguments of the underlying space, a crude upper bound
on the number of vertices of $G_c$ in a disk of radius at most $4\beta$
is $256$ vertices; this also upper bounds the degree of $v$. 
\end{proof}

Next, we bound the number of rounds needed for ``Step 3" 

\begin{lemma}
``Step 3" requires $O(\beta \cdotp \log^* n)$ rounds of communication.
\label{lemma:coloring}
\end{lemma}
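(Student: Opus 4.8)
The plan is to reduce Step~3 to a standard distributed $(\Delta+1)$-coloring of a bounded-degree graph, and then account for the overhead of simulating each communication round of $G_c$ inside the underlying graph $G$. First I would observe that by Lemma~\ref{lemma:boundeddegree} the graph $G_c$ has maximum degree $\Delta(G_c) \in O(1)$, so the number of colors requested in Step~3, namely $\Delta(G_c)+1$, is a constant. Since $|V_c| \leq n$, we have $\log^* |V_c| \leq \log^* n$, and the unique vertex identifiers of $G$ (drawn from a polynomially bounded range in the $\mathcal{LOCAL}$ model) serve as valid identifiers for the vertices of $G_c$.

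Next I would invoke the classical fact that a graph of constant maximum degree can be properly $(\Delta+1)$-colored in $O(\log^* N)$ rounds of the $\mathcal{LOCAL}$ model, where $N$ is the number of vertices (the iterated Cole--Vishkin / Linial coloring, which also underlies the growth-bounded graph results used in \cite{SchneiderW08}). Applied to $G_c$ with $N = |V_c| \leq n$, this yields a proper coloring of $G_c$ in $O(\log^* n)$ rounds of \emph{communication over the edges of $G_c$}.

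The key point --- and the only place where the factor $\beta$ enters --- is that a single round of communication along an edge of $G_c$ is not a single round in $G$. By construction, every edge $\{u,v\} \in E_c$ joins two leaders with $d_G(u,v) \leq 4\beta$, so any message that the coloring algorithm wishes to pass between two $G_c$-adjacent leaders can be relayed along a shortest $G$-path of length at most $4\beta$ in $O(\beta)$ rounds of $G$. Thus one round of the coloring algorithm on $G_c$ is simulated by $O(\beta)$ rounds in $G$. Combining this with the $O(\log^* n)$-round coloring bound gives a total of $O(\beta \cdot \log^* n)$ rounds, as claimed.

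The hard part will be making the simulation argument precise: I must check that relaying messages between adjacent leaders through $G$ can be carried out correctly, i.e. that each leader can identify and route to each of its $G_c$-neighbors. This follows because each leader first learns its entire $4\beta$-neighborhood in $G$ (which takes $O(\beta)$ rounds and, by the packing bound used in Lemma~\ref{lemma:boundeddegree}, contains only $O(1)$ other leaders); thereafter the coloring proceeds by exchanging $O(1)$-sized color messages with these $O(1)$ neighbors. Since messages may have unbounded size in the $\mathcal{LOCAL}$ model, the constantly many messages needed in each round can be bundled and forwarded along the (constantly many) short paths without any congestion concern, so the simulation overhead per $G_c$-round is indeed $O(\beta)$.
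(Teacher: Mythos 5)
Your proof is correct and follows essentially the same route as the paper's: invoke the bounded degree of $G_c$ (Lemma~\ref{lemma:boundeddegree}), apply a standard $O(\log^* n)$-round $(\Delta+1)$-coloring for constant-degree graphs, and pay an $O(\beta)$ overhead per round because each edge of $G_c$ must be simulated by a path of length at most $4\beta$ in $G$. The paper compresses the simulation step into one sentence (``distances in $G_c$ are scaled by a factor of $4\beta$''), whereas you spell out the routing and congestion details explicitly --- a welcome elaboration, but not a different argument.
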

\begin{proof}
For graphs whose maximum degree is $\Delta$, a $\Delta + 1$ proper
coloring requires $O(\Delta + \log^* n)$ rounds 
\cite{KuhnW06}.  Since $\Delta(G_c) \in O(1)$ 
(Lemma~\ref{lemma:boundeddegree}), and the fact that distances in $G_c$
are scaled by a factor of $4\beta$ as compared to the distances
in $G$, a $\Delta(G_c) + 1$ proper coloring of $G_c$ 
can be obtained in $O(\beta \cdotp \log^* n)$ rounds.
\end{proof}

``Step 4" requires at most $\beta$ rounds
of communication; according to Lemma~\ref{lemma:colorpartition}, 
for every $v \notin V_c$, there is some $u \in V_c$ that is at a distance
at most $\beta$ from it. The identity and color of such a vertex
can be obtained in $\beta$ rounds.  We can now bound the number
of rounds that Distr-MCP-UDG requires. First,
note that for any iteration, $i$, of ``Step 7", only knowledge of
a subgraph up to radius $\beta$ is required, and any node can
obtain knowledge of the subgraph up to radius $\beta$ from it in
$\beta$ rounds of communication.  So, for any vertex in $G^j_i$ 
obtains knowledge about the ``marked/unmarked" status of all the vertices in $G^j_i$
in $\beta$ rounds of communication.
Since the diameter of each $G^j_i$ is at most $2\beta$, the number of
balls to grow in ``Step 9.'' is at most $O(\beta^2)$. Therefore:

\begin{theorem}
Distr-MCP-UDG requires 
$O(\beta \cdotp \log^* n)$ rounds of communication under the
$\mathcal{LOCAL}$ model of computation.
\end{theorem}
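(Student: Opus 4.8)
The plan is to add up the per-step round counts established in the preceding lemmas, since the final theorem is simply a sequential accounting of the rounds consumed by Distr-MCP-UDG. First I would note that ``Step 1'' is purely local initialization and costs no communication. ``Step 2'' requires $O(\beta \cdot \log^* n)$ rounds by the preceding lemma on constructing the leader set $V_c$ (equivalently an MIS in $G^\beta$, together with the $4\beta$ rounds to build $G_c$). ``Step 3'' requires $O(\beta \cdot \log^* n)$ rounds by Lemma~\ref{lemma:coloring}, crucially leaning on Lemma~\ref{lemma:boundeddegree} that $\Delta(G_c) \in O(1)$, so that the $\Delta(G_c)+1$-coloring cost $O(\Delta(G_c) + \log^* n)$ collapses to $O(\log^* n)$ after rescaling distances by $4\beta$. ``Step 4'' costs $\beta$ rounds, since by Lemma~\ref{lemma:colorpartition} every vertex has a leader within distance $\beta$.

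Next I would analyze the main double loop (``Steps 5--10''). The outer \texttt{for} loop runs $\Delta(G_c)+1 \in O(1)$ times by Lemma~\ref{lemma:boundeddegree}, so it contributes only a constant factor. For a single color class $i$, the inner parallel block processes all subgraphs $G^j_i$ simultaneously; by Lemma~\ref{lemma:mindist} balls grown around vertices in distinct $G^j_i$ of the same color are disjoint, so there is no interference and no serialization across different leaders of the same color. Within one $G^j_i$, each ball $B_r(v)$ has radius $r \leq \beta$, so gathering the subgraph needed to run the sequential routine (and to read the marked/unmarked status) costs $\beta$ rounds per ball, and the diameter of $G^j_i$ is at most $2\beta$ so there are at most $O(\beta^2)$ balls to grow. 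Hence each color class costs $O(\beta^2 \cdot \beta) = O(\beta^3)$ rounds, and summing the $O(1)$ color classes leaves the loop at $O(\beta^3)$ rounds, which is dominated once we recall $\beta \in \tilde O(1/\eps)$ is a constant for fixed $\eps$; the $\log^* n$ terms from Steps 2 and 3 dominate asymptotically in $n$.

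Adding the contributions gives a total of $O(\beta \cdot \log^* n) + O(\beta \cdot \log^* n) + O(\beta) + O(\beta^3)$ rounds, which is $O(\beta \cdot \log^* n)$ rounds, completing the proof. I would then substitute $\beta = \lceil c_0 \frac{1}{\eps}\log\frac{1}{\eps}\rceil$ to recover the claimed $O(\frac{\log^* n}{\eps^{O(1)}})$ bound advertised in the introduction.

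The subtle point — and the step I expect to carry the real content rather than routine addition — is justifying that the parallel block truly runs in time independent of the number of leaders of a given color, i.e. that correctness (via the equivalence to MinCP2 in Lemma~\ref{lemma:equivalence}) is not purchased at the cost of extra rounds. The key obstacle is that a ball $B_r(v)$ grown inside $G^j_i$ may reach out and include vertices of other colors lying outside $G^j_i$, as noted in the algorithm's comment; I would argue that this does not create contention because Lemma~\ref{lemma:mindist} guarantees disjointness only among \emph{same-color} leaders, while the outer loop processes colors one at a time, so when color $i$ is active the balls it grows may overlap regions that will later be claimed by other colors, but any vertex marked here is then excluded from later growth. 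The locality bound $r \leq \beta$ ensures every such reach is still within $\beta$ hops, so the round count is unaffected; the correctness of the resulting partition is exactly what Lemma~\ref{lemma:equivalence} secures by exhibiting a sequential MinCP2 execution realizing the same balls.
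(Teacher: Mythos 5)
Your proposal is correct and follows essentially the same route as the paper: sum the $O(\beta\cdot\log^* n)$ costs of Steps 2--3 (via the MIS, bounded-degree, and coloring lemmas), the $O(\beta)$ cost of Step 4, and bound the main loop by $O(1)$ color classes times $O(\beta^2)$ balls per subgraph $G^j_i$ (from its diameter $2\beta$) times $O(\beta)$ rounds per ball, with parallelism across same-color leaders justified by the disjointness lemma. Your closing discussion of balls reaching outside $G^j_i$ matches the paper's reliance on the equivalence lemma and adds nothing that conflicts with it.
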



\section{Concluding Remarks}
Recall that the weakest assumption that we needed to obtain a PTAS for
unweighted clique partition problem was
that all the edge lengths are given.  This
information was crucially used in obtaining a {\em robust} PTAS.
In the case of weighted clique partition, we gave a $(2+\eps)$-approximation algorithm
without the use of edge-lengths (using only the adjacency information). It will be
interesting to see if a PTAS exists for the unweighted case but with
reliance only on adjacency. 

It is also unclear if a PTAS is possible
even with the use of geometry in the weighted case.  Recall that
the PTAS given in Sections \ref{sec:2b} crucially uses the idea of
separability of an optimal clique partition.  However,
in the weighted case, even though a near optimal clique partition
in a small region has few cliques, there are examples where any
separable partition pays a cost at least factor-$2$ to that of 
a near optimal partition.  We give an example in
Figure~\ref{figure:distinction}(a). 
In the example shown in Figure~\ref{figure:distinction}(a) two cliques of
optimal weight are shown: one of them, $A$, whose vertices are the 
vertices of the $k$-gon shown in dashed-heavy lines, and the other, $B$,
whose vertices are the vertices of the $k$-gon shown in solid-heavy
lines.  The example is that for $k=7$. The vertices of $A$ are labeled
$a_1, a_2, \ldots, a_k$ in a counter-clockwise fashion.  The vertices
of $B$ are labeled such that $b_i$ is diametrically opposite to $a_i$.
The distance between $a_i$ and $b_i$ is more than $1$ while the distance
between $a_i$ and $b_j$, $i \neq j$ is at most $1$.  So, there is an
edge between $a_i$ to every $a_l$ and to every $b_j, j \neq i$.  This
is also the case for $b_i$.  In the figure, the edges incident to
$a_1$ are shown by solid-light lines.  Also, the dashed arc shows
part of the unit disk boundary that is centered at $a_1$ -- note that
it does not include $b_1$.  Let the weights of vertices in $A$ be
$k$ and the weights of vertices in $B$ be $1$.  Clearly, 
$\opt \leq k+1$.  However, any separable clique partition pays a
cost of at least $2k$: if vertices in $A$ must all belong to a common
clique, then every vertex in $B$ must belong to a distinct clique in
a separable clique partition.  Also, note that as-per separability, 
a line going through $\{p_1, p_2\}$ separates
two cliques having weight $2k$ also.

Note that our results only apply in the Euclidean plane; they do not
generalize. In particular, Capoyleas et al. \cite{CapoyleasRW91} give
an ``unseparable" instance in $\mathbb{R}^3$. Our result in the weighted
case also is restricted to the plane; the concept of {\em co-bipartite
neighborhood edge elimination ordering} (CNEEO) does not generalize to
$\mathbb{R}^3$.

\section*{Acknowledgments}
We thank Sriram Pemmaraju, Lorna Stewart, and Zoya
Svitkina for helpful discussions. Our thanks to 
an anonymous source for pointing out the result of
Capoyleas et al. \cite{CapoyleasRW91}.

\bibliography{udgmcp}
\end{document}